\documentclass[11pt]{article}

\usepackage{soul}
\usepackage{url}
\usepackage[hidelinks]{hyperref}
\usepackage[utf8]{inputenc}
\usepackage[small]{caption}
\usepackage{graphicx}
\usepackage{booktabs}
\usepackage[T1]{fontenc}

\usepackage{amsmath,multirow}

\usepackage{amsthm,thm-restate,thmtools}
\usepackage{amssymb}

\usepackage{enumerate}
\usepackage{etoolbox}
\usepackage{subcaption}
\usepackage{xcolor,xfrac}
\usepackage[sort,numbers]{natbib}
\usepackage{hyperref}
\usepackage[english]{babel}
\usepackage{graphicx,charter}
\usepackage{framed}
\usepackage[top=1 in,bottom=1in, left=1 in, right=1 in]{geometry}

\usepackage{xcolor}
\usepackage{hyperref}
\hypersetup{
	colorlinks   = true,
	linkcolor    = red, %
	urlcolor     = teal, %
	citecolor    = blue %
}

\usepackage{algorithm}
\usepackage{algpseudocodex}
\usepackage{doi}

\usepackage{xspace}
\usepackage{amsthm}
\usepackage{cleveref}
\usepackage{todonotes}
\usepackage{comment}
\usepackage{fontawesome}
\usepackage{tikz}
\usetikzlibrary{backgrounds,shapes.geometric}
\usepackage{enumitem}

\newcommand{\N}{\mathbb{N}}

\newcommand{\probName}[1]{\textsc{#1}}

\newcommand{\agents}{N}
\newcommand{\numAgents}{n}

\newcommand{\orders}{V\setminus\{h\}}
\newcommand{\numOrders}{m}

\newcommand{\wFn}{\omega}

\newcommand{\alloc}{\pi}

\DeclareMathOperator{\parent}{parent}
\DeclareMathOperator{\leaves}{leaves}
\newcommand{\numLeaves}{L}

\DeclareMathOperator{\dist}{dist}

\DeclareMathOperator{\cost}{cost}

\DeclareMathOperator{\MMSshare}{MMS-share}

\DeclareMathOperator{\PoNW}{PoNW}

\newcommand{\NP}{\textsf{NP}\xspace}
\newcommand{\NPh}{\NP-hard\xspace}

\newcommand{\NPc}{\NP-complete\xspace}

\newcommand{\FPT}{\textsf{FPT}\xspace}
\newcommand{\XP}{\textsf{XP}\xspace}
\newcommand{\W}[1][1]{\textsf{W[#1]}\xspace}
\newcommand{\Wh}[1][1]{\W[#1]-hard\xspace}

\newcommand{\Oh}[1]{\ensuremath{{\mathcal{O}\left(#1\right)}}}

\newtheorem{definition}{Definition}
\newtheorem{theorem}{Theorem}
\newtheorem{lemma}{Lemma}

\newtheorem{proposition}{Proposition}
\newtheorem{corollary}{Corollary}
\newtheorem{claim}{Claim}
\crefname{claim}{Claim}{Claims}
\newenvironment{claimproof}[1]{\par\noindent\emph{Proof.}\hspace{0.15cm}#1}{\hfill~$\blacktriangleleft$\medskip}

\newcommand{\Yes}{\emph{yes}\xspace}
\newcommand{\No}{\emph{no}\xspace}

\title{The Algorithmic Landscape of Fair and Efficient Distribution of Delivery Orders in the Gig Economy}

\author{
	Hadi Hosseini \\ 
	Penn State University, USA\\ 
	\texttt{hadi@psu.edu}
	\and 
	Šimon Schierreich \\ 
	Penn State University, USA \\ 
	Czech Technical University, Czechia\\
	\texttt{schiesim@fit.cvut.cz}
}

\date{}

\sloppy
\begin{document}

\maketitle 

\begin{abstract}
	Distributing services, goods, and tasks in the gig economy heavily relies upon on-demand workers (aka agents), leading to new challenges varying from logistics optimization to the ethical treatment of gig workers. We focus on fair and efficient distribution of delivery tasks---placed on the vertices of a graph---among a fixed set of agents. We consider the fairness notion of minimax share (MMS), which aims to minimize the maximum (submodular) cost among agents and is particularly appealing in applications without monetary transfers. We propose a novel efficiency notion---namely non-wastefulness---that is desirable in a wide range of scenarios and, more importantly, does not suffer from computational barriers. Specifically, given a distribution of tasks, we can, in polynomial time, i) verify whether the distribution is non-wasteful and ii) turn it into an equivalent non-wasteful distribution. Moreover, we investigate several fixed-parameter tractable and polynomial-time algorithms and paint a complete picture of the (parameterized) complexity of finding fair and efficient distributions of tasks with respect to both the structure of the topology and natural restrictions of the input. Finally, we highlight how our findings shed light on computational aspects of other well-studied fairness notions, such as envy-freeness and its relaxations.
\end{abstract}

\section{Introduction}

Distributing services, goods, and tasks in today's economy increasingly relies upon on-demand gig workers. In particular, many e-commerce platforms and retail stores utilize freelance workers (in addition to their permanent employees) to distribute goods in an efficient manner. Naturally, this so-called `gig economy' involves many workers (aka agents), leading to new challenges from logistical and ethical perspectives. While the logistical aspect of this problem has been studied from an optimization perspective~\cite{kleinberg1999fairness,toth2002models,pioro2007fair,PollnerRSW2022,KnightMN2024}, little attention has been given to the fair treatment of gig workers.

We focus on the distribution of delivery tasks from a warehouse (the \textit{hub}) that are placed on the vertices of a graph and are connected through an edge (a route) between them. The goal is then to distribute these tasks among a fixed set of agents while adhering to given well-defined notions of fairness and economic efficiency. 

A substantial subset of these problems either excludes monetary transfers entirely (e.g., charity organizations) or involves only fixed-salary labor arrangements (e.g., postal service workers). Developing fair algorithms for such scenarios has sparked interest in designing algorithms without money~\cite{procaccia2013approximate,AshlagiS2014,NarasimhanAP2016,BalseiroGS2019,padala2021mechanism} and are notably more challenging compared to those that allow payment-based compensations (i.e., monetary transfers) based on specific tasks~\cite{nisan1999algorithmic}. Motivated by this, we primarily focus on a fairness notion of \textit{minimax share} (MMS), which aims to guarantee that no agent incurs a (submodular) cost greater than what they would receive under an (almost) equal distribution. While MMS allocations are guaranteed to exist and are compatible with the economic notion of Pareto optimality (PO), computing such allocations has been shown to be computationally intractable~\cite{HosseiniNW2024}.

\subsection{Our Contribution}

We generalize the model from the setting where the traversal of each edge costs the same to the \textit{weighted} setting, where the cost of traversing edge can differ. This significantly extends the applicability of the model, as it allows us to capture a broader variety of real-life instances. 

\paragraph{Non-Wasteful Allocations.}

We introduce a new efficiency notion called \textit{non-wastefulness}, which is partly inspired by similar notions in the literature on mechanism design for stable matching~\cite{GotoIKKYY2016,KamadaK2017,WuR2018,AzizK2019} and auctions~\cite{KawasakiBTTY2020}. Intuitively, in our context, non-wastefulness states that no delivery order can be reassigned to a different agent so that the original agent is strictly better off and the new worker is not worse off. This fundamental efficiency axiom prevents avoidable duplicate journeys---an obvious choice by delivery agents. Moreover, in contrast to Pareto optimality, it can be verified whether a given allocation is non-wasteful in polynomial time (\Cref{thm:NW:verificationAlgorithm}). Additionally, in polynomial-time, \emph{any} distribution can be turned into a non-wasteful one where no agent is worse off (\Cref{thm:NW:easyToTurnInto}). Additionally, in \Cref{sec:MMS}, we formally settle the connection between non-wastefulness and the fairness notions of MMS, and in \Cref{sec:PoS}, we study the Price of Non-wastefulness.

\paragraph{Algorithms for MMS and Non-wasteful Allocations.}
Our main technical contribution is providing a complete complexity landscape of finding MMS and non-wasteful allocations under various natural parameters. In doing so, we paint a clear dichotomy between tractable and intractable cases. 
Specifically, in \Cref{sec:InternalAndLeaves}, we show that if the number of junctions or dead-ends of the topology is bounded, then the problem can be solved efficiently in \FPT time, even for weighted instances. In \Cref{sec:AgentsOrders}, we turn our attention to the parameterization by the number of orders and the number of agents, both parameters that are expected to be small in practice. While \FPT algorithm for the former is possible even for weighted instances, for the latter, a tractable algorithm is not possible already for two agents. Also, we close an open problem of \citet{HosseiniNW2024} by showing that their \XP algorithm for the unweighted case and parameterization by the number of agents is essentially optimal. 

\paragraph{The Impact of Topology Structure.}
\Cref{sec:topologies} is then devoted to different restrictions of the topology. The most notable result here is (in)tractability dichotomy based on the~$k$-path vertex cover, where we prove the existence of \FPT algorithms for any weighted instance and~$k \leq 3$, and intractability for unweighted instances with~$k\geq 4$. 
Along the way, we identify several polynomial-time algorithms for certain graph families, such as caterpillar graphs, and additional hardness results, such as for unweighted topologies, which are in the distance one to the disjoint union of paths. 

\paragraph{Envy-Based Fairness}
We conclude the paper with a series of results regarding envy-based fairness notions such as EF and its relaxations EF1 and EFX. The main outcome here is that non-wastefulness is incompatible with these fairness notions.

\subsection{Related Work}

Fair division of indivisible items is one of the most active areas at the intersection of economics and computer science~\cite{BouveretCM2016,AmanatidisABFLMVW2023}. Different fairness notions are studied in this area, with MMS being one of the prominent ones~\cite{AmanatidisABFLMVW2023,NguyenR2023}. A relevant literature mostly focus on computational aspects~\cite{BouveretL2016,HeinenNNR2018,NguyenR2023} and existence guarantees~\cite{KurokawaPW2018}, with special focus on approximations of MMS~\cite{BarmanK2020,Xiao020H23,AkramiGST2023,ChekuriKKM2024}. Closest to our work are recent papers of \citet{li2024fair} and \citet{wang2024fair}, which also study submodular costs; however, they do not assume a graph encoding the costs.

Several works also explored fair division on graphs~\cite{ChristodoulouFKS2023,BouveretCL2019,BredereckKN2022,EibenGHO2023,BiloCFIMPVZ2022,Madathil2023,BouveretCEIP2017,TruszczynskiL2020,LiMNS2023,DeligkasEIKS2025}. The closest model to ours is the one where we have a graph over items, each agent has certain utility for every item, and the goal is not only to find a fair allocation, but each bundle must additionally form a disjoint and connected sub-graphs.

Finally, there are multiple works exploring fairness in different gig economy contexts, including food delivery~\cite{GuptaYNCRB2022,NairYGCRB2022} and ride-hailing platforms~\cite{EsmaeiliDCNSD2023,SanchezLSB2022}. Nevertheless, these papers mostly focus on experiments and neglect the theoretical study, and the models studied therein are very different from ours.

\section{Preliminaries}

We use~$\N$ to denote the set of positive integers. For an integer~$i\in\N$, we set~$[i] = \{1,2,\ldots,i\}$ and~$[i]_0 = [i] \cup \{0\}$. For a set~$S$, we let~$2^S$ be the set of all subsets of~$S$ and, for an integer~$k\in\N$, we denote by~$\binom{S}{k}$ the set of all~$k$-sized subsets of~$S$.
For detailed notations regarding computational complexity theory (classic and parameterized), we follow the monographs of \citet{AroraB2009} and \citet{CyganFKLMPPS2015}, respectively.

\paragraph{Graph Theory}
The fair delivery problem is modeled as a connected and acyclic graph, aka a \emph{tree}. 
Let~$G=(V,E)$ be a tree rooted in vertex~$r\in V$ and~$v\in V$ be its vertex. The \emph{degree} of the vertex~$v$ is~$|\{ u \mid \{v,u\} \in E \}|$ and we call~$v$ a \emph{leaf} if~$\deg(v)$ is exactly one. By~$\leaves(G)$, we denote the set of all leaves, and we set~$\numLeaves=|\leaves(G)|$. All non-leaf vertices are called \emph{inner vertices}. A vertex~$p=\operatorname{parent}(v)$ is a \emph{parent} of vertex~$v$ if~$p$ is a direct predecessor of~$v$ on the shortest~$r,v$-path, and~$\operatorname{children}(v)$ is a set of vertices whose parent is vertex~$v$. By~$G^v$, we denote the sub-tree of~$G$ rooted in vertex~$v$, and, for a set~$S\subseteq V$, we use~$W_S$ to denote the set of all shortest paths with one end in~$r$ and a second end in some vertex of~$S$.

\paragraph{Distribution of Delivery Orders.}

In \emph{distribution of delivery orders}, we are given a \emph{topology}, which is an edge-weighted tree~$G=(V,E,\wFn)$ rooted in a vertex~$h\in V$, called a \emph{hub}, and a set of agents~$\agents=\{1,\ldots,\numAgents\}$. The vertices in~$V\setminus\{h\}$ are called \emph{orders}. By~$\numOrders$, we denote the number of orders in the given instance. The goal is to find an \emph{allocation}~$\alloc\colon \orders\to\agents$. For the sake of simplicity, we denote by~$\alloc_i$ the set of orders allocated to an agent~$i$; that is,~$\alloc_i = \{ v\in\orders \mid \alloc(v) = i \}$. Moreover, we say that~$\alloc_i$ is agent~$i$'s \emph{bundle} and that an order~$v\in\alloc_i$ is \emph{serviced} by an agent~$i\in\agents$. By~$\Pi$, we denote the set of all possible allocations. Formally, an instance of our problem is a triple~$\mathcal{I}=(\agents,G,h)$. We say that an instance~$\mathcal{I}$ is \emph{unweighted} if the weights of all edges are the same. Otherwise,~$\mathcal{I}$ is weighted.

The \emph{cost} of servicing an order~$v\in \orders$, denoted~$\cost(v)$, is equal to the length of the shortest path between~$h$ and~$v$. A cost for servicing a set~$S\subseteq \orders$ is equal to the length of a shortest walk starting in~$h$, visiting all orders of~$S$, and ending in~$h$, divided by two. Observe that such a walk may also visit some orders that are not in~$S$. It is apparent that the cost function is \emph{submodular} and \emph{identical} for all agents.

\paragraph{Fairness.}

In this work, we are interested in finding \emph{fair} allocations. Arguably, the most prominent notion studied in the context of resource allocation is \emph{envy-freeness} (EF), which requires that no agent likes a bundle allocated to any other agent more than the bundle allocated to them. Formally, we define envy-freeness as follows.

\begin{definition}
	An allocation~$\alloc$ is \emph{envy-free} (EF) if for every pair of agents~$i,j\in\agents$ it holds that~$\cost(\alloc_i) \leq \cost(\alloc_j)$.
\end{definition}

Observe that since the cost functions are identical, the EF allocations are necessarily \emph{equitable}, meaning that the cost for every agent is the same. It is easy to see that such allocations are not guaranteed to exist: consider an instance with a single order and two agents.

Therefore, we will further focus on some relaxation of envy-freeness. The first relaxation we study is called \emph{envy-freeness up to one order}~(EF1) and adapts a similar concept from the fair division of indivisible items literature. Here, we allow for a slight difference between agents' costs.

\begin{definition}
	An allocation~$\alloc$ is \emph{envy-free up to one order} (EF1) if, for every pair of agents~$i,j$, either~$\alloc_i = \emptyset$ or there exists an order~$v\in \alloc_i$ such that~$\cost(\alloc_i\setminus\{v\}) \leq \cost(\alloc_j)$.
\end{definition}

We also consider \emph{minimax share guarantee} (MMS) as a desired fairness notion.\footnote{In the literature on fair division, this notion is usually studied under the name \emph{maximin share}. In our setting, however, all items have negative utility for agents, so instead of having all costs negative, we reverse the objectives and obtain an equivalent notion.} This notion can be seen as a generalization of the famous cake-cutting mechanism and requires that the cost of each agent is, at most, the cost of the worst bundle in the most positive allocation. Formally, the notion is defined as follows.

\begin{definition}
	An MMS-share of an instance~$\mathcal{I}$ of fair distribution of delivery items is defined as 
	\[
	\MMSshare(\mathcal{I}) = \min_{\alloc\in\Pi} \max_{i\in[\numAgents]} \cost(\alloc_i).
	\]
	We say that an allocation~$\alloc$ is \emph{minimax share} (MMS), if for every agent~$i\in\agents$, it holds that~$\cost(\alloc_i) \leq \MMSshare(\mathcal{I})$.
\end{definition}

Observe that since the cost functions are identical, we define the MMS-share for the whole instance and not separately for each agent.

\paragraph{Efficiency.}

We consider several notions of economic efficiency. First, we introduce a variant of utilitarian efficiency, which requires minimizing the sum of the costs of all bundles.

\begin{definition}
	An allocation,~$\alloc$, is \emph{utilitarian optimal} if for every other allocation~$\alloc'$ it holds that~$\sum_{i\in\agents} \cost(\alloc'_i) \geq \sum_{i\in\agents} \cost(\alloc_i)$.
\end{definition}

Finding a utilitarian optimal allocation is trivial; we just allocate all orders to a single agent. However, such an allocation is clearly very unfair.
Thus, we consider a weaker notion of Pareto optimality.

Informally, an allocation~$\alloc$ is Pareto optimal if there is no other allocation~$\alloc'$ such that no agent is worse in~$\alloc'$ and at least one agent is strictly better off in~$\alloc'$. It is easy to see that each utilitarian optimal is also a Pareto optimal.

\begin{definition}
	An allocation~$\alloc$ is \emph{Pareto optimal} (PO), if there is no allocation~$\alloc'$ such that for every~$i\in\agents$~$\cost(\alloc_i) \geq \cost(\alloc'_i)$ and for at least one agent the inequality is strict.
\end{definition}

The utilitarian optimal aims at maximizing the overall ``happiness'' of society. This may, however, lead to solutions where the cost for most individuals in an optimal solution is very low, and few agents carry almost the whole burden. For example, recall that allocation in which one agent services all orders is always utilitarian optimal. A different welfare perspective is egalitarian optimal solutions, where the cost for the worst-off agent(s) is minimized.

\begin{definition}
	Allocation~$\alloc$ is an \emph{egalitarian optimal} if for every other allocation~$\alloc'$ it holds that~$\max_{i\in\agents} \cost(\alloc'_i) \geq \max_{i\in\agents} \cost(\alloc_i)$. 
\end{definition}

\section{Non-wasteful Allocations}\label{sec:NW}

In this setting, some economic efficiency notions, such as utilitarian optimality, may not be generally compatible with fairness.
Moreover, computing an MMS allocation along with Pareto optimality is computationally hard~\cite{HosseiniNW2024}.
Thus, we propose a weaker efficiency notion of non-wastefulness.
Informally, a non-wasteful allocation requires that no agent~$i$ should be pushed to service an extra order if assigning this order to another agent~$j$ reduces the cost of~$i$'s bundle without increasing the cost of~$j$'s bundle.
Formally, we define our efficiency notion as follows; for an illustration of the definition, we refer the reader to  \Cref{fig:NW:illustration}.

\begin{figure}[bt!] \label{fig:nonwaste}
	\centering
	\begin{tikzpicture}
		\tikzstyle{A1} = [draw,circle,ultra thick,green!80!black,inner sep=6pt,fill=none]
		\tikzstyle{A2} = [draw,ultra thick,red!30,inner sep=8pt,fill=none]
		\tikzstyle{A3} = [draw,regular polygon,regular polygon sides=6,ultra thick,yellow!90!black,inner sep=6pt,fill=none]
		
		\node[draw,circle,fill=blue!15] (h) at (0,0) {$h$};
		\node[draw,circle] (v1) at (2,0) {};
		\node[A2] at (2,0) {};
		\node[draw,circle] (v2) at (2,1) {};
		\node[A2] at (2,1) {};
		\node[draw,circle] (v3) at (4,1) {};
		\node[A3] at (4,1) {};
		\node[draw,circle] (v6) at (6,1) {};
		\node[A1] at (6,1) {};
		\node[draw,circle] (v4) at (6,0) {};
		\node[A3] at (6,0) {};
		\node[draw,circle] (v5) at (4,0) {};
		\node[A2] at (4,0) {};
		
		\draw (h) edge (v1) edge (v2);
		\draw (v2) edge (v3);
		\draw (v1) edge (v5);
		\draw (v3) edge (v4) edge (v6);
		
		\node at (3,-1) {$\cost(\alloc_1) = \cost(\alloc_2) = \cost(\alloc_3) = 3$};
	\end{tikzpicture}
	\hspace{0.5cm}
	\begin{tikzpicture}
		\tikzstyle{A1} = [draw,circle,ultra thick,green!80!black,inner sep=6pt,fill=none]
		\tikzstyle{A2} = [draw,ultra thick,red!30,inner sep=8pt,fill=none]
		\tikzstyle{A3} = [draw,regular polygon,regular polygon sides=6,ultra thick,yellow!90!black,inner sep=6pt,fill=none]
		
		\node[draw,circle,fill=blue!15] (h) at (0,0) {$h$};
		\node[draw,circle] (v1) at (2,0) {};
		\node[A2] at (2,0) {};
		\node[draw,circle] (v2) at (2,1) {};
		\node[A1] at (2,1) {};
		\node[draw,circle] (v3) at (4,1) {};
		\node[A3] at (4,1) {};
		\node[draw,circle] (v6) at (6,1) {};
		\node[A1] at (6,1) {};
		\node[draw,circle] (v4) at (6,0) {};
		\node[A3] at (6,0) {};
		\node[draw,circle] (v5) at (4,0) {};
		\node[A2] at (4,0) {};
		
		\draw (h) edge (v1) edge (v2);
		\draw (v2) edge (v3);
		\draw (v1) edge (v5);
		\draw (v3) edge (v4) edge (v6);
		
		\node at (3,-1) {$\cost(\alloc_1) = \cost(\alloc_3) = 3 \quad \cost(\alloc_2) = 2$};
	\end{tikzpicture}
	\caption{An illustration of non-wastefulness. On the top, we depict an allocation that is not non-wasteful: while both the green (circle,~$1$) agent and yellow (diamond,~$3$) agents service a vertex if and only if they service a leaf in the respective sub-tree, the red (square,~$2$) agent services the order of the top branch even though it is not servicing any leaf of this sub-tree. On the bottom, we depict a non-wasteful allocation for the same instance. Observe that in this case, the non-wasteful allocation even strictly improved the cost for the red agent.}
	\label{fig:NW:illustration}
\end{figure}
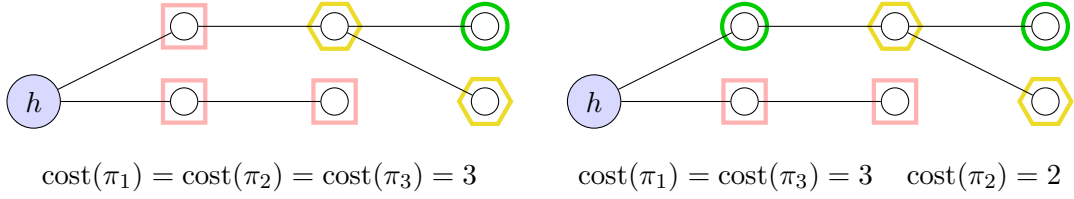

\begin{definition}\label{def:NW}
	An allocation~$\alloc$ is \emph{non-wasteful}, if for every order~$v\in\orders$ it holds that if an agent~$i\in\agents$ services~$v$, then~$i$ also services some leaf~$\ell\in\leaves(G^v)$.
\end{definition}

A Pareto optimal allocation implies non-wastefulness, but the converse does not hold. Thus, a non-wasteful allocation is always guaranteed to exist (since PO allocations always exist).

\begin{proposition}\label{thm:NW:existence}
	A non-wasteful allocation is guaranteed to exist and can be found in linear time.
\end{proposition}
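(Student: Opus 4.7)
The plan is to exhibit an explicit non-wasteful allocation that can be constructed in linear time, rather than appeal to a more elaborate existence argument. The simplest candidate is the degenerate allocation in which a single agent is assigned all the orders: set $\alloc_1 = V\setminus\{h\}$ and $\alloc_i = \emptyset$ for every $i\in\agents\setminus\{1\}$.

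To verify that this $\alloc$ is non-wasteful, I would apply \Cref{def:NW} directly. Fix any order $v\in\orders$ together with the unique agent that services it, which must be agent $1$. The subtree $G^v$ is a nonempty finite rooted tree, hence it contains at least one leaf $\ell\in\leaves(G^v)$; since $\alloc_1$ contains every order, in particular $\ell\in\alloc_1$, and so the non-wastefulness condition is satisfied for $v$. Since $v$ was arbitrary, $\alloc$ is non-wasteful.

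Constructing $\alloc$ requires only a single pass through $V\setminus\{h\}$, taking $\Oh{\numOrders}$ time, which is linear in the size of the topology. I do not anticipate any real obstacle; the only minor point is the standard fact that every nonempty finite rooted tree has a leaf, which follows by iteratively descending to a child until reaching a vertex with no children. A slightly more ``balanced'' alternative construction would be to first fix an arbitrary assignment of leaves to agents and then, in a bottom-up traversal, assign each inner vertex to the same agent as one of its leaf descendants; but for the pure existence and linear-time claim of \Cref{thm:NW:existence} the degenerate allocation already suffices.
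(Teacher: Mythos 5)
Your proof is correct, and it uses the same witness the paper has in mind: the degenerate allocation that gives every order to a single agent. The difference is in how non-wastefulness of that allocation is certified. The paper never argues from \Cref{def:NW} directly; it observes that this allocation is utilitarian optimal (trivially), that utilitarian optimality implies Pareto optimality, and that Pareto optimality implies non-wastefulness (the implication later recorded as \Cref{thm:NW:ParetoOptimalityRelation}), so existence and linear-time computability follow from the efficiency hierarchy. You instead verify the definition directly: every order $v$ serviced by agent $1$ has some leaf of $G^v$ also in $\alloc_1$ because $\alloc_1$ contains everything. Your route is more self-contained -- it does not presuppose the PO $\Rightarrow$ non-wasteful implication, which in the paper is only proved afterwards -- while the paper's route buys the additional structural information that the witness is simultaneously utilitarian optimal, PO, and non-wasteful. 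One cosmetic point: when $v$ is itself a leaf of $G$, the subtree $G^v$ is a single vertex of degree zero, so under the paper's literal ``degree exactly one'' definition of a leaf you should read $\leaves(G^v)$ as the leaves of $G$ lying in $G^v$ (i.e., $v$ itself); with that standard convention your argument goes through verbatim, and the linear-time bound $\Oh{\numOrders}$ for writing down the allocation is immediate.
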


Note that the same observation holds also for the utilitarian optimal (and consequently for the Pareto optimality).
An egalitarian optimal allocation always exists; however, finding such an allocation is computationally hard (see formal proof in \Cref{sec:MMS}).

\subsection{Algorithmics of Non-wastefulness}

Our first result shows that we can decide in polynomial time whether a given allocation is non-wasteful or not. This stands in direct contrast with Pareto optimality, which, under the standard theoretical assumptions, cannot admit a polynomial-time algorithm for its associated verification problem~\cite{KeijzerBKZ2009}, and makes non-wastefulness arguably one of the fundamental axioms each distribution of delivery orders should satisfy, as agents can check this property basically in hand without the need of extensive computational resources.

The results established in the remainder of this section serve as stepping stones for multiple subsequent sections, where we investigate the algorithmic aspects of non-wastefulness combined with different fairness notions. A naive procedure for verification of non-wastefulness just, for every internal vertex~$v$, checks whether at least one of the leaves in the sub-tree rooted in~$v$ is serviced by the agent servicing~$v$. %

\begin{theorem}
	\label{thm:NW:verificationAlgorithm}
	There is an algorithm that, given an instance~$\mathcal{I}$ and an allocation~$\alloc$, decides whether~$\alloc$ is a non-wasteful allocation in~$\Oh{\numOrders^2}$ time.
\end{theorem}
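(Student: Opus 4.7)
The plan is to verify the non-wastefulness condition directly from \Cref{def:NW}: for each order~$v$, it suffices to check whether the agent~$\alloc(v)$ services at least one leaf in the subtree~$G^v$. To this end, I would first perform a post-order traversal of~$G$ rooted at the hub~$h$, computing at each vertex~$v$ the set~$L(v)\subseteq\agents$ of all agents that service at least one leaf of~$G^v$. The base case, for a leaf~$\ell$, sets~$L(\ell) = \{\alloc(\ell)\}$, and the recursive case, for an inner vertex~$v$, sets~$L(v) = \bigcup_{u \in \operatorname{children}(v)} L(u)$. Once all sets~$L(v)$ are available, I would iterate over the orders and declare~$\alloc$ non-wasteful if and only if~$\alloc(v) \in L(v)$ for every order~$v$. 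Correctness follows immediately from the definition: a leaf~$v$ trivially satisfies~$\alloc(v)\in L(v)$ because~$v\in\leaves(G^v)$, and for an inner vertex~$v$ the set~$L(v)$ is, by construction, exactly the set of agents servicing some leaf of~$G^v$.

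For the running time, observe that agents outside the image of~$\alloc$ are irrelevant for the verification, so we may assume~$\numAgents \leq \numOrders$. Representing each~$L(v)$ as a Boolean array of length~$\numAgents$, unioning a child's set into its parent's set costs~$\Oh{\numAgents}$, and since each vertex participates in exactly one such union (the one at its parent), the total cost of all unions across the traversal is~$\Oh{\numAgents \cdot \numOrders} = \Oh{\numOrders^2}$. The final membership checks cost~$\Oh{1}$ each, contributing only~$\Oh{\numOrders}$ extra. Altogether, the algorithm runs in~$\Oh{\numOrders^2}$ time, matching the claimed bound.

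The main care point, rather than a real obstacle, is choosing a set representation so that the unions combine to a quadratic bound instead of a cubic one; Boolean arrays of length~$\numAgents$ are the cleanest choice. An equally valid alternative is the naive scheme hinted at just before the theorem: for each inner vertex~$v$, enumerate~$\leaves(G^v)$ and check whether any leaf~$\ell$ there has~$\alloc(\ell)=\alloc(v)$; this also runs in~$\Oh{\numOrders^2}$ time via the standard tree identity~$\sum_v |G^v| = \Oh{\numOrders^2}$. Either route yields the claimed complexity.
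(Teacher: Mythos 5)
Your proof is correct, but it takes exactly the route the paper explicitly discards, so a comparison is worthwhile. The paper uses the same data structure (a Boolean array per order recording which agents service a leaf of the corresponding subtree) and the same final membership check; however, it dismisses the bottom-up child-to-parent OR computation as costing $\Oh{\numOrders^2\cdot\numAgents}=\Oh{\numOrders^3}$, because it bounds the OR step at a single vertex by $\Oh{\numOrders\cdot\numAgents}$ and multiplies by $\numOrders$ vertices. Instead, it fills the arrays by propagating each leaf's agent upward toward the hub, stopping as soon as the bit is already set, and bounds this by $\Oh{\numLeaves\cdot\operatorname{depth}(G)}=\Oh{\numOrders^2}$. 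Your amortized accounting---each vertex's array is ORed into its parent exactly once, i.e., once per tree edge, for a total of $\Oh{\numAgents\cdot\numOrders}$---shows that the ``naive'' post-order computation already meets the quadratic bound, so the early-stopping trick is not needed. You are also slightly more careful on a point the paper glosses over: its final tally $\Oh{\numOrders\cdot\numAgents+\numOrders^2+\numOrders}=\Oh{\numOrders^2}$ silently assumes $\numAgents=\Oh{\numOrders}$, whereas you justify this by discarding agents outside the image of~$\alloc$ (at most $\numOrders$ agents matter for verification). Both routes are valid and achieve $\Oh{\numOrders^2}$; yours is simpler to state and analyze, while the paper's leaf-propagation variant has the minor advantage of only touching array entries that actually get set.
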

\begin{proof}
	The idea behind the algorithm is to store an array~$A_u$ for every order~$u\in\orders$, which keeps track of all agents servicing leaves of a sub-tree rooted in~$u$. A naive approach to compute this array for a single order, running in~$\Oh{\numOrders\cdot\numAgents}$ time, would be to do an element-wise OR of all~$A_w$, where~$w\in\operatorname{children}(u)$. Since there are~$\numOrders$ orders, the overall running time of such procedure would be~$\Oh{\numOrders^2\cdot \numAgents} = \Oh{\numOrders^3}$. In the following, we show a more efficient procedure.
	
	The initial step is to allocate array~$A_u$ for every order~$u\in\orders$ and set~$A_u[i] = \texttt{false}$ for every~$i\in\agents$. This takes~$\Oh{\numOrders\cdot\numAgents}$ time. Then, the main procedure starts. We fix an arbitrary ordering~$(\ell_1,\ldots,\ell_\numLeaves)$ of the leaves, and then we proceed with the leaves one by one according to the fixed order. For each leaf~$\ell$, we do the following. We determine the agent~$i$ servicing the leaf~$\ell$, set~$A_\ell[i] = \texttt{true}$, and propagate this information to the parent~$p=\operatorname{parent}(\ell)$. If~$p = h$ or~$A_p[ i ]$ is already set to \texttt{true}, we stop the propagation and continue with another leaf. Otherwise, we set~$A_p[ i ]$ to \texttt{true} and propagate the information further to the vertex~$p'=\operatorname{parent}(p)$. At worst, the information is propagated all the way up to the hub. Observe that using this procedure, we fill the arrays in~$\Oh{L\cdot \operatorname{depth}(G)} = \Oh{\numOrders^2}$ time, as each order of the tree is visited at most~$\numLeaves = \Oh{\numOrders}$ times. As the final step, we traverse the tree, and for each order~$u\in\orders$, we check whether~$A_u[i] = \texttt{true}$, where~$i\in\agents$ is the agent servicing~$u$ in~$\alloc$. If this holds for every order, the algorithm returns \Yes. Otherwise, if the algorithm encounters at least one order where the condition is violated, it immediately returns \No and terminates. The overall running time is~$\Oh{\numOrders\cdot\numAgents + \numOrders^2 + \numOrders} = \Oh{\numOrders^2}$.
\end{proof}

The next important property of non-wastefulness is that, given an allocation~$\alloc$, we can efficiently convert it to a non-wasteful allocation that does not differ from~$\alloc$ very significantly and while weakly improving the cost for agents. This result is appealing from the practical perspective, as it can be applied to any existing allocation of delivery tasks with negligible (polynomial) computational overhead. This clearly indicates that non-wastefulness can be very easily used as a layer on top of the current approaches (both algorithmic and manual) for the distribution of delivery tasks without affecting its computability.

\begin{theorem}
	\label{thm:NW:easyToTurnInto}
	There is a linear-time algorithm that, given an allocation~$\alloc$, returns a non-wasteful allocation~$\alloc'$ such that~$\alloc_i\cap\leaves(G) = \alloc'_i\cap\leaves(G)$ and~$\cost(\alloc'_i) \leq \cost(\alloc_i)$ for every~$i\in\agents$. In other words, in the new non-wasteful allocation~$\alloc'$, the set of leaves serviced by an agent~$i\in\agents$ remains the same as in~$\alloc$.
\end{theorem}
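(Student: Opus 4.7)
The plan is to keep the leaf assignments of $\alloc$ intact and only rewrite the assignment of internal vertices so that non-wastefulness is met by construction. The key structural observation is that, because the topology is a tree, $\cost(\alloc_i)$ equals the total weight of the minimal subtree of $G$ containing $h$ together with all vertices in $\alloc_i$. Hence, if an internal vertex $v$ is given to an agent $i$ who already services at least one leaf in $G^v$, then $v$ lies on the $h$-to-leaf path inside this Steiner tree and contributes zero additional weight to $\cost(\alloc_i)$. Conversely, any internal vertex in $\alloc_i$ that lies off the Steiner tree of the leaves of $\alloc_i$ is precisely what a non-wastefulness violation looks like and its removal strictly cannot increase cost.

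Concretely, I would run a single post-order DFS from~$h$. For every vertex $v \in \orders$ processed, I maintain one pointer $\texttt{rep}(v) \in \agents$ which equals the agent that $v$ will be assigned to in~$\alloc'$. If $v$ is a leaf, set $\alloc'(v)\coloneqq\alloc(v)$ and $\texttt{rep}(v)\coloneqq\alloc(v)$. If $v$ is internal, pick any child $c\in\operatorname{children}(v)$ and set $\alloc'(v)\coloneqq\texttt{rep}(c)$ and $\texttt{rep}(v)\coloneqq\texttt{rep}(c)$. Since every sub-tree~$G^v$ contains a leaf and every leaf is allocated by $\alloc$, the pointer is always well-defined. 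The whole procedure performs $\Oh{1}$ work per vertex and thus runs in linear time.

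Two properties then need to be verified. First, non-wastefulness: by an easy induction on the post-order, $\texttt{rep}(v)$ is always an agent that services some leaf of $G^v$ in $\alloc'$, and $\alloc'(v)=\texttt{rep}(v)$, so \Cref{def:NW} is met. Second, $\alloc_i\cap\leaves(G)=\alloc'_i\cap\leaves(G)$ holds trivially because leaf assignments are never touched. Finally, the cost comparison: fix an agent $i$ and let $T_i$ be the minimal subtree of $G$ containing $h$ and the leaves $\alloc_i\cap\leaves(G)$. In $\alloc$, the bundle $\alloc_i$ at least contains these leaves, so $\cost(\alloc_i)\ge \wFn(T_i)$. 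In $\alloc'$, every internal vertex $v\in\alloc'_i$ lies on the $h$-to-leaf path in $T_i$ of the leaf witnessing $\texttt{rep}(v)=i$, hence $\alloc'_i\subseteq V(T_i)$ and therefore $\cost(\alloc'_i)=\wFn(T_i)\le \cost(\alloc_i)$.

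The only place that needs care is justifying the equality $\cost(\alloc'_i)=\wFn(T_i)$ even though internal vertices may appear in $\alloc'_i$; this follows from the standard fact that the cost of servicing a set on a tree (half the length of a closed walk from $h$) equals the total edge weight of the unique minimal subtree connecting $h$ to that set, and adding vertices already lying inside this subtree leaves it unchanged. Apart from this one observation, the argument is purely bookkeeping, so I do not anticipate a genuine obstacle.
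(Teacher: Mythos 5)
Your proposal is correct and takes essentially the same approach as the paper: keep the leaf assignments of~$\alloc$ fixed and reassign every internal vertex to an agent that services some leaf of its subtree, in linear time (the paper does this by processing leaves in increasing agent index and climbing from each leaf toward~$h$ using ``temporal twin'' agents, you by a single post-order DFS with a representative pointer --- the same idea with different bookkeeping). A small bonus of your write-up is that you verify the inequality $\cost(\alloc'_i) \leq \cost(\alloc_i)$ explicitly via the minimal-Steiner-subtree characterization of the cost function, a step the paper's proof leaves implicit.
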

\begin{proof}
	The algorithm first traverses the allocation~$\alloc$ and replaces each standard agent~$i$ with its temporal twin~$\underline{i}$. Then, in several rounds, it processes the leaves one by one. Specifically, the algorithm starts with~$X = \leaves(G)$ and, while~$X$ is non-empty, it selects a leaf~$\ell$ serviced by the smallest~$\underline{i}$. For this leaf~$\ell$, the algorithm traverses the tree from~$\ell$ all the way up to the first order serviced by a standard agent or to the hub~$h$ if no such order exists. All orders, including~$\ell$, allocated to temporal agents are added to~$\alloc'_i$, the leaf~$\ell$ is removed from~$X$, and the algorithm continues with another unprocessed leaf.
	
	The algorithm is clearly finite, as the number of leaves is finite, and for each leaf, the algorithm does only finitely many steps. Also, the algorithm can be implemented in linear time: instead of having the set~$X$ and finding a leaf allocated to an agent with the smallest index~$\underline{i}$ (which would require linear look-up every time), we may have a set of sets~$X_{\underline{1}},\ldots,X_{\underline{\numAgents}}$, and iterate through them. Then, we can find the next~$\ell$ in constant time. Moreover, even though the procedure for every~$\ell$ is linear in the worst case, observe that, in fact, each vertex of~$G$ is considered exactly once during the bottom-up traversal of the tree. Overall, the algorithm can be implemented in linear time as we promised.
	
	For the correctness, assume first that~$\bigcup_{i=1}^\numAgents \alloc'_i \not= \orders$. Then, an order~$v\in\orders$ exists, which is not allocated to any agent. Let~$\ell$ be a leaf of the subtree~$T_v$ rooted in~$v$ considered the first by our algorithm. By the definition of the algorithm, before~$\ell$ is considered, all orders in~$T_v$, including~$v$, are allocated to twin agents. What the algorithm does is that it traverses the tree from~$\ell$ to the hub~$h$ until it reaches~$h$ or an order assigned to a standard agent. Since such a path necessarily visits~$v$ and all orders in~$T_v$ are initially allocated to twin agents, the algorithm added~$v$ to a bundle of some standard agents. Hence, such a~$v$ cannot exist, and all the orders are therefore allocated in~$\alloc'$.
	
	Next, we show that~$\alloc'$ is indeed a non-wasteful allocation. The condition requires that~$\alloc'_i$ contains all orders on the path from each leaf~$\ell \in \alloc'_i$ to the hub~$h$, except for those allocated to some preceding agent~$i'<i$. Observe that this is exactly what the algorithm does, as it adds all orders on a path from~$\ell\cap\alloc'_i$ to~$h$ unless it reaches~$h$ or an order is assigned to a standard agent. If the first situation appears for a leaf~$\ell$, then clearly the whole~$W_{\{\ell\}}$ is in~$\alloc'_1$. In the second case, the algorithm stopped before it reached~$h$; however, it must have reached an order~$w$ serviced by a standard agent. This agent is either~$i$ or some~$i'$. In the former case, all the orders on the path from~$w$ to~$h$ are also allocated to~$i$. Hence, the whole~$W_{\{\ell\}}$ is part of~$\alloc'_i$ even in this case. In the latter case, all orders on the path from~$w$ to~$h$ are allocated to~$i'$, which does not violate the condition of being a non-wasteful allocation. That is,~$\alloc'$ is clearly a non-wasteful allocation.
\end{proof}

\subsection{Relations to Other Efficiency Notions}

\begin{figure}
	\centering
	\begin{tikzpicture}
		\tikzstyle{A1} = [draw,circle,ultra thick,green!80!black,inner sep=6pt,fill=none]
		\tikzstyle{A2} = [draw,ultra thick,red!30,inner sep=8pt,fill=none]
		\tikzstyle{A3} = [draw,regular polygon,regular polygon sides=6,ultra thick,yellow!90!black,inner sep=6pt,fill=none]
		
		\node[draw,circle,fill=blue!15] (h) at (0,0) {$h$};
		\node[draw,circle] (v1) at (2.5,-1.5) {};
		\node[A1] at (2.5,-1.5) {};
		\node[draw,circle] (v2) at (1,-1.5) {};
		\node[A1] at (1,-1.5) {};
		\node[draw,circle] (v3) at (-1,-1.5) {};
		\node[A2] at (-1,-1.5) {};
		
		\node[draw,circle] (v4) at (3,-3) {};
		\node[A1] at (3,-3) {};
		\node[draw,circle,inner sep=1.5pt] (v5) at (2,-3) {$u$};
		\node[A3] at (2,-3) {};
		\node[draw,circle,inner sep=1.5pt] (v6) at (1,-3) {$v$};
		\node[A1] at (1,-3) {};
		\node[draw,circle] (v7) at (-2,-3) {};
		\node[A2] at (-2,-3) {};
		\node[draw,circle] (v8) at (-3,-1.5) {};
		\node[A2] at (-3,-1.5) {};
		\node[draw,circle] (v9) at (-4,-3) {};
		\node[A2] at (-4,-3) {};
		
		\draw (h) edge (v1) edge (v2) edge (v3);
		\draw (v1) edge (v4) edge (v5);
		\draw (v2) edge (v6);
		\draw (v3) -- (v7) -- (v8) -- (v9);
	\end{tikzpicture}
	\caption{An example of a non-wasteful and MMS allocation that is not Pareto optimal. We use green circles to highlight orders serviced by the agent~$1$, red squares to highlight orders serviced by the agent~$2$, and yellow diamonds for those serviced by the agent~$3$. Observe that if we move the order~$u$ from~$\alloc_3$ to~$\alloc_1$ and simultaneously the order~$v$ and~$\operatorname{parent}(v)$ from~$\alloc_1$ to~$\alloc_3$, the cost of bundle~$\alloc_3$ remains the same, while the cost of bundle~$\alloc_1$ decreases. That is, the depicted allocation is not Pareto optimal.}
	\label{fig:NW:notNecessarilyPO}
\end{figure}
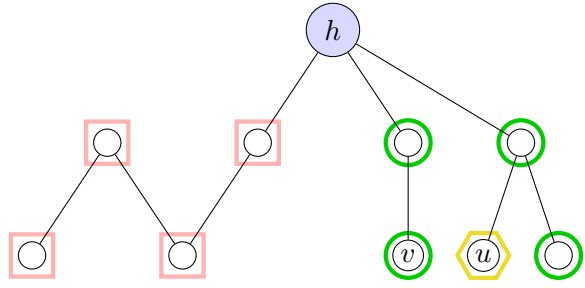

In this section, we formally settle the relation of non-wastefulness with other economic efficiency notions commonly studied in the related literature. We start with the Pareto optimality and show that PO implies non-wastefulness but not vice versa. This means that non-wastefulness is a strictly weaker notion of efficiency than Pareto optimality.

\begin{proposition}
	\label{thm:NW:ParetoOptimalityRelation}
	Every Pareto optimal allocation is also non-wasteful. On the other hand, a non-wasteful allocation exists, which is not Pareto optimal.    
\end{proposition}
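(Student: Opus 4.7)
The plan is to prove the implication ``PO $\Rightarrow$ non-wasteful'' by contrapositive, and to establish the converse non-implication by pointing to the allocation drawn in \Cref{fig:NW:notNecessarilyPO}.

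For the contrapositive, suppose $\alloc$ is not non-wasteful, and let the order $v$ witness this for agent $i := \alloc(v)$, so that no leaf of $G^v$ lies in $\alloc_i$. The set $\alloc_i \cap V(G^v)$ is finite and contains $v$, so I would pick $u \in \alloc_i \cap V(G^v)$ of maximum depth in $G$; by this choice no strict descendant of $u$ in $G$ lies in $\alloc_i$, while $u$ itself cannot be a leaf of $G^v$ (again by the witness property), so $G^u$ has at least one leaf $\ell$. Because $\ell$ is also a leaf of $G^v$, it is serviced by some agent $j \neq i$. Let $\alloc'$ be obtained from $\alloc$ by reassigning the single order $u$ from $i$ to $j$. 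Using the tree identity $\cost(S)=\sum_{e\in E(T_S)}\wFn(e)$, where $T_S$ is the smallest subtree of $G$ spanning $S\cup\{h\}$, the path from $h$ to $\ell$ already traverses $u$, so $T_{\alloc'_j}=T_{\alloc_j}$ and therefore $\cost(\alloc'_j)=\cost(\alloc_j)$. On the $i$-side, the depth-maximality of $u$ means that $u$ is a leaf of $T_{\alloc_i}$; deleting it strips off at least one positive-weight pendant edge, so $\cost(\alloc'_i) < \cost(\alloc_i)$. All other bundles remain untouched, hence $\alloc'$ Pareto-dominates $\alloc$, contradicting Pareto optimality.

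For the converse statement, I would just read off the allocation depicted in \Cref{fig:NW:notNecessarilyPO}: checking each colored order against \Cref{def:NW} shows that every serviced internal vertex has a same-colored leaf in its sub-tree, so the allocation is non-wasteful; on the other hand, the three-order reshuffle described in the caption---put $u$ into $\alloc_1$ and move $v$ together with $\operatorname{parent}(v)$ from $\alloc_1$ to $\alloc_3$---strictly improves agent~$1$'s cost while leaving $\alloc_2$ and $\alloc_3$ unchanged in cost, witnessing a Pareto improvement.

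The only delicate step is the strict cost decrease on the $i$-side; the maximum-depth choice of $u$ is exactly what is needed so that removing $u$ actually prunes a pendant edge of $T_{\alloc_i}$, for if $u$ still lay on the $h$-to-$u'$ path of some deeper $u' \in \alloc_i$, removing $u$ alone would leave the spanning subtree (and hence $\cost(\alloc_i)$) unchanged.
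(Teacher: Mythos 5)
Your proposal is correct and follows essentially the same argument as the paper: choose a deepest order $u$ of the wasteful agent $i$ inside the witnessing sub-tree (so $u$ is the only order of $\alloc_i$ in $G^u$), hand $u$ to an agent who must traverse it anyway, and observe that this strictly lowers $\cost(\alloc_i)$ while leaving every other bundle's cost unchanged, contradicting Pareto optimality; the converse direction is handled by the same example from \Cref{fig:NW:notNecessarilyPO}. The only cosmetic difference is that you reassign $u$ to the agent servicing a leaf of $G^u$, whereas the paper reassigns it to the agent servicing a child of $u$ — both yield the same Pareto improvement.
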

\begin{proof}
	Let~$\alloc$ be an allocation that is MMS and Pareto optimal and, for the sake of contradiction, suppose that~$\alloc$ is wasteful. Then, by the definition of non-wastefulness, there exists a sub-tree~$G^v$ and an order~$u\in V(G^v)$ serviced by an agent~$i$ which is not servicing any leaf of~$G^v$. As there may be multiple such orders, let~$u$ be the one so that~$u$ is the only order serviced by~$j$ in the sub-tree~$G^u$. Clearly,~$u$ is not a leaf, so~$u$ has at least one child~$w$. Let~$j$ be an agent servicing~$w$. If we add~$u$ to the bundle~$\alloc_j$, the cost of this bundle remains the same, as the agent~$j$ anyway traverses~$u$ on its walk to~$w$. Moreover, the cost of the bundle~$\alloc_i$ is strictly decreased by the removal of~$u$, as the agent~$i$ does not have any order in~$G^u$ and hence does not visit this sub-tree now. The costs of all remaining bundles are still the same, and we have a contradiction that~$\alloc$ is Pareto optimal. Consequently, every Pareto optimal allocation is also non-wasteful. To see the second part of the statement, we refer the reader to \Cref{fig:NW:notNecessarilyPO} for an example of a non-wasteful (and MMS) allocation, which is not Pareto optimal.
\end{proof}

Since every allocation that is utilitarian optimal is also Pareto optimal, we directly obtain the following corollary.

\begin{corollary}\label{thm:NW:utilitarianOptimalityRelation}
	Every allocation~$\alloc$, which is utilitarian optimal, is also non-wasteful. On the other hand, a non-wasteful allocation exists, which is not utilitarian optimal.
\end{corollary}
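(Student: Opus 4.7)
The plan is to exploit the chain of implications already established: utilitarian optimality implies Pareto optimality, and (by \Cref{thm:NW:ParetoOptimalityRelation}) Pareto optimality implies non-wastefulness. For the forward direction, I would simply invoke the remark made in the Preliminaries (right after the definition of Pareto optimality) that every utilitarian optimal allocation is Pareto optimal. This follows immediately from the definitions: if~$\alloc'$ weakly dominates~$\alloc$ and strictly improves at least one agent, then~$\sum_{i}\cost(\alloc'_i) < \sum_{i}\cost(\alloc_i)$, contradicting utilitarian optimality of~$\alloc$. Composing with \Cref{thm:NW:ParetoOptimalityRelation} closes the first half of the corollary.

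For the second direction, I would reuse the instance depicted in \Cref{fig:NW:notNecessarilyPO}, which already witnesses a non-wasteful allocation~$\alloc$ that is not Pareto optimal. The point is that any Pareto-dominating allocation~$\alloc'$ satisfies~$\cost(\alloc'_i)\le\cost(\alloc_i)$ for all~$i$ with strict inequality for at least one, so~$\sum_{i}\cost(\alloc'_i) < \sum_{i}\cost(\alloc_i)$, and hence~$\alloc$ cannot be utilitarian optimal either. In fact, a more elementary witness also suffices: take two agents and a topology consisting of a hub~$h$, a single internal order~$v$ adjacent to~$h$, and two leaves~$\ell_1,\ell_2$ adjacent to~$v$. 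Assigning~$\{v,\ell_1\}$ to agent~$1$ and~$\{\ell_2\}$ to agent~$2$ is non-wasteful (agent~$1$ services the leaf~$\ell_1\in\leaves(G^v)$, and agent~$2$ trivially services~$\ell_2$), yet the total cost is~$2+2=4$, whereas assigning all orders to one agent yields total cost~$3$, so the split allocation is not utilitarian optimal.

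Since both halves are essentially direct consequences of results already in hand, I do not anticipate any real obstacle. The only thing worth double-checking is that the chosen counter-witness for the second half is verified to be non-wasteful using \Cref{def:NW} on every internal order, which is a one-line check on the small example.
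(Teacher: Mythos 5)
Your proposal is correct and follows essentially the same route as the paper: the first half is the composition of ``utilitarian optimal implies Pareto optimal'' with \Cref{thm:NW:ParetoOptimalityRelation}, and the second half reuses the allocation of \Cref{fig:NW:notNecessarilyPO} (the paper argues directly that two agents visit $\operatorname{parent}(u)$, you argue via the Pareto-dominating allocation having strictly smaller total cost --- both are fine). Your extra three-order witness is also valid but adds nothing beyond the paper's example.
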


To see the second part of \Cref{thm:NW:utilitarianOptimalityRelation}, recall the instance from \Cref{fig:NW:notNecessarilyPO}; the depicted allocation is non-wasteful but is not utilitarian optimal, as two distinct agents visit the parent of the vertex~$u$.

To complete the picture, our last result shows the relation between non-wasteful allocations and allocations that are egalitarian optimal.

\begin{proposition}
	\label{thm:NW:egalitarianOptimalityRelation}
	There is an allocation~$\alloc$, which is an egalitarian optimal but is not non-wasteful, and vice versa. Nevertheless, there always exists a non-wasteful egalitarian optimal allocation.
\end{proposition}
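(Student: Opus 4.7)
The plan is to handle the three assertions separately: the two incomparability claims via small explicit instances, and the existence claim by invoking \Cref{thm:NW:easyToTurnInto}.

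For the egalitarian-optimal-but-wasteful direction, I would exhibit a minimal example: the unweighted path with hub $h$, a single internal order $v$, and a single leaf $\ell$, together with two agents. The achievable values of $\max_{i\in\agents}\cost(\alloc_i)$ over $\Pi$ are $0$ and $2$, so the egalitarian optimum equals $2$. Assigning $v$ to agent~$1$ and $\ell$ to agent~$2$ attains this optimum, since agent~$2$ has to traverse $v$ anyway and hence both bundles cost at most $2$. Yet agent~$1$ services $v$ but not the unique leaf $\ell$ of the subtree $G^v$, so this allocation is wasteful.

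For the converse direction, I would use the star in which the hub $h$ has three leaf children $\ell_1,\ell_2,\ell_3$, again with two agents. Since every order is a leaf, \emph{any} allocation is trivially non-wasteful; in particular, placing all three leaves in agent~$1$'s bundle is a non-wasteful allocation of max cost $3$, while the egalitarian optimum is $2$ (achieved by any $2$-to-$1$ split). This gives a non-wasteful allocation that is not egalitarian optimal.

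For the existence of a common allocation, I would start with any egalitarian optimal $\alloc$ (one exists because $\Pi$ is finite) and apply \Cref{thm:NW:easyToTurnInto} to produce a non-wasteful allocation $\alloc'$ with $\cost(\alloc'_i)\leq\cost(\alloc_i)$ for every $i\in\agents$. This yields $\max_i\cost(\alloc'_i)\leq\max_i\cost(\alloc_i)$, and egalitarian optimality of $\alloc$ forces equality, so $\alloc'$ is simultaneously non-wasteful and egalitarian optimal. The main obstacle is essentially just choosing clean minimal counterexamples for the two incomparability statements; the existence part is immediate from the cost-preserving conversion already established in this section.
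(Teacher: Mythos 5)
Your proposal is correct and follows essentially the same route as the paper: explicit small counterexamples for the two incomparability claims (the paper reuses the instance of \Cref{fig:NW:illustration} instead of fresh minimal ones) and, for existence, applying the cost-preserving conversion of \Cref{thm:NW:easyToTurnInto} to an egalitarian optimal allocation. One tiny slip: in your path example the value $0$ is never achievable as $\max_{i\in\agents}\cost(\alloc_i)$ because every allocation must assign the leaf $\ell$ (of cost $2$) to some agent, so every allocation has maximum cost exactly $2$ --- which only strengthens your conclusion that the egalitarian optimum is $2$.
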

\begin{proof}
	First, recall the instance from the \Cref{fig:NW:illustration}. As the depth of the tree is three, the minimum cost an agent can have in this allocation is three. The allocation on the left is not non-wasteful, but the cost of each bundle is exactly three. Therefore, this allocation is egalitarian optimal. In the opposite direction, if we, for the same instance, allocate all orders to a single agent, the resulting allocation is non-wasteful; on the other hand, such an allocation is not egalitarian optimal as the cost of the non-empty bundle is six.
	
	For the last part of the statement, let~$\alloc$ be an egalitarian optimal allocation, which is not non-wasteful. We can use \Cref{thm:NW:easyToTurnInto} to find an allocation~$\alloc'$ such that~$\cost(\alloc'_i) \leq \cost(\alloc_i)$ for every~$i\in\agents$. Consequently,~$\alloc'$ is also an egalitarian optimal.
\end{proof}

\subsection{Price of Non-wastefulness}\label{sec:PoS}

Now, we quantify the efficiency loss we can experience if we require the allocation to be non-wasteful. We use both utilitarian and egalitarian optimals as benchmarks. Before we present our results, we introduce some more notation.

\begin{definition}
	Let~$\Pi^{\operatorname{NW}}(\mathcal{I})$ be a set of all non-wasteful allocations for an instance~$\mathcal{I}$ and~$f\colon \Pi\to\mathbb{R}$ be an efficiency measure. We define the \emph{optimistic price of non-wastefulness}~($\PoNW^+$) as 
	\[ 
	\PoNW^+_f = \sup_\mathcal{I} \frac
	{\min_{\alloc\in\Pi^{\operatorname{NW}}(\mathcal{I})}f(\pi)}
	{\min_{\alloc\in\Pi(\mathcal{I})} f(\pi)}
	\]
	and the \emph{pessimistic price of non-wastefulness} ($\PoNW^-$) as
	\[
	\PoNW^-_f = \sup_\mathcal{I}\frac
	{\max_{\alloc\in\Pi^{\operatorname{NW}}(\mathcal{I})}f(\pi)}
	{\min_{\alloc\in\Pi(\mathcal{I})} f(\pi)},
	\]
	respectively. In the remainder of this section, we are interested in either \emph{utilitarian} efficiency (UTIL), where~$f(\alloc) = \sum_{i\in\agents} \cost(\alloc_i)$, or \emph{egalitarian} efficiency (EGAL), where~$f(\alloc) = \max_{i\in\agents} \cost(\alloc_i)$.
\end{definition}

Note that the optimistic price of non-wastefulness corresponds to the price of stability~\cite{AnshelevichDKTWR2008}, and the pessimistic price of non-wastefulness corresponds to the price of anarchy~\cite{KoutsoupiasP2009}, respectively.

In \Cref{thm:NW:utilitarianOptimalityRelation} and \Cref{thm:NW:egalitarianOptimalityRelation}, we showed that for every optimal allocation, there always exists an equivalent optimal and non-wasteful allocation. Therefore, we directly obtain the following results for the optimistic price of non-wastefulness.

\begin{corollary}
	It holds that~$\PoNW^{+}_{\operatorname{UTIL}} = \PoNW^{+}_{\operatorname{EGAL}} = 1$.
\end{corollary}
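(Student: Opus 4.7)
The plan is to observe that both claimed equalities follow immediately from results established earlier in the section, so the proof is essentially an unpacking of definitions. First, since $\Pi^{\operatorname{NW}}(\mathcal{I}) \subseteq \Pi(\mathcal{I})$, the minimum of $f$ over non-wasteful allocations is always at least the unconstrained minimum, so for every instance the ratio in the definition of $\PoNW^+$ is at least $1$. It therefore suffices to exhibit, for an arbitrary instance $\mathcal{I}$, a non-wasteful allocation attaining the global optimum of the relevant efficiency measure; this forces each per-instance ratio to equal $1$, and hence the supremum as well.

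For the utilitarian case, I would invoke \Cref{thm:NW:utilitarianOptimalityRelation}: any utilitarian optimal allocation is automatically non-wasteful, so picking such a $\pi^*$ yields $\pi^*\in \Pi^{\operatorname{NW}}(\mathcal{I})$ with $f_{\operatorname{UTIL}}(\pi^*)=\min_{\pi\in\Pi(\mathcal{I})} f_{\operatorname{UTIL}}(\pi)$, which shows $\min_{\pi\in\Pi^{\operatorname{NW}}(\mathcal{I})}f_{\operatorname{UTIL}}(\pi)=\min_{\pi\in\Pi(\mathcal{I})}f_{\operatorname{UTIL}}(\pi)$.

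For the egalitarian case, an analogous argument works but is slightly less direct because \Cref{thm:NW:egalitarianOptimalityRelation} only asserts the \emph{existence} of a non-wasteful egalitarian optimal allocation. I would take an arbitrary egalitarian optimal allocation $\pi$ and apply \Cref{thm:NW:easyToTurnInto} to obtain a non-wasteful $\pi'$ with $\cost(\pi'_i)\le \cost(\pi_i)$ for every agent $i$, so $\max_i \cost(\pi'_i)\le \max_i \cost(\pi_i)$; combined with the fact that $\pi$ was egalitarian optimal, $\pi'$ attains the same egalitarian value while lying in $\Pi^{\operatorname{NW}}(\mathcal{I})$. This yields $\min_{\pi\in\Pi^{\operatorname{NW}}(\mathcal{I})} f_{\operatorname{EGAL}}(\pi)=\min_{\pi\in\Pi(\mathcal{I})}f_{\operatorname{EGAL}}(\pi)$, and the supremum over instances is again $1$.

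There is no real obstacle here: the corollary is a direct consequence of \Cref{thm:NW:easyToTurnInto}, \Cref{thm:NW:utilitarianOptimalityRelation}, and \Cref{thm:NW:egalitarianOptimalityRelation}. The only thing to be a little careful about is that in the egalitarian case one should not simply quote \Cref{thm:NW:egalitarianOptimalityRelation} without noting that its proof already supplies the needed non-wasteful witness via \Cref{thm:NW:easyToTurnInto}, which is what legitimizes the substitution in the numerator of $\PoNW^+_{\operatorname{EGAL}}$.
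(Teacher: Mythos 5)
Your proposal is correct and follows essentially the same route as the paper, which derives the corollary directly from \Cref{thm:NW:utilitarianOptimalityRelation} (utilitarian optimal allocations are non-wasteful) and \Cref{thm:NW:egalitarianOptimalityRelation} (whose proof uses \Cref{thm:NW:easyToTurnInto} to produce a non-wasteful egalitarian optimal witness), exactly as you do. Your explicit remark that $\Pi^{\operatorname{NW}}(\mathcal{I})\subseteq\Pi(\mathcal{I})$ forces each ratio to be at least $1$ is a small but harmless addition to what the paper leaves implicit.
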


The same set of results also hints at the behavior of the pessimistic variant of the price of non-wastefulness, as it will necessarily be close to the ratio between utilitarian and egalitarian optimal. The most important part of this analysis is that the pessimistic price of non-wastefulness for both benchmarks asymptotically depends only on the number of agents~$\numAgents$, which, in practical applications, can be expected to be relatively small compared to the number of orders.

\begin{theorem}
	$\PoNW^-_{\operatorname{UTIL}} = \frac{\numAgents\cdot(\numOrders-\numAgents+1)}{\numOrders}$ and~${\PoNW^-_{\operatorname{EGAL}} = \numAgents}$.
\end{theorem}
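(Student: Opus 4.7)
I would prove the two equalities separately, since the egalitarian case is elementary while the utilitarian one needs a more delicate argument.

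For $\PoNW^-_{\operatorname{EGAL}}=\numAgents$ the upper bound rests on two simple observations: in any non-wasteful allocation the cost of each bundle is at most the total tree weight $W:=\sum_{e\in E}\wFn(e)$, because the relevant Steiner tree is a subtree of $G$; and in the egalitarian optimum every edge of $G$ must be traversed by at least one agent (the one servicing any order in its subtree), so $\sum_i\cost(\alloc^*_i)\ge W$ and by pigeon-hole $\max_i\cost(\alloc^*_i)\ge W/\numAgents$, yielding ratio at most $\numAgents$. For the matching lower bound I would use the star with $\numAgents$ unit-weight leaves and $\numAgents$ agents: sending all leaves to a single agent is non-wasteful with maximum cost $\numAgents$, while placing one leaf per agent is egalitarian-optimal with maximum cost $1$.

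For $\PoNW^-_{\operatorname{UTIL}}=\numAgents(\numOrders-\numAgents+1)/\numOrders$ the lower bound is witnessed by the following construction: a hub $h$, a unit-weight path $h,v_1,\ldots,v_{\numOrders-\numAgents}$, and $\numAgents$ unit-weight leaves $\ell_1,\ldots,\ell_\numAgents$ attached to $v_{\numOrders-\numAgents}$. This instance has $\numOrders$ orders and total weight $W=\numOrders$, and routing everything to one agent is utilitarian-optimal. The non-wasteful allocation that gives $\ell_i$ to agent $i$ and distributes the path vertices arbitrarily among the agents is legal because every path vertex has all $\numAgents$ leaves in its subtree; it forces each agent's Steiner tree to contain the entire path, so $\cost(\alloc_i)=\numOrders-\numAgents+1$ and the sum of costs equals $\numAgents(\numOrders-\numAgents+1)$, realizing the claimed ratio.

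For the matching upper bound I would use the identity $\sum_{i\in\agents}\cost(\alloc_i)=\sum_{e\in E}\wFn(e)\,n_e$, where $n_e$ counts the agents holding a leaf in the subtree below~$e$. Non-wastefulness additionally implies that $n_e$ equals the number of agents with \emph{any} order below $e$, so $n_e\le\min(\numAgents,L_e)$ where $L_e$ denotes the number of leaves under $e$. The hard step is to turn this edge-wise bound into the global inequality $\sum_e\wFn(e)n_e\le\frac{\numAgents(\numOrders-\numAgents+1)}{\numOrders}\,W$. I would try it either through a charging scheme that pairs each ``heavy'' edge (where $n_e$ is close to $\numAgents$) with enough leaf-edges (where $n_e=1$) so that the averaged contribution is controlled, or through an induction on the tree in which one peels off a leaf at a time while maintaining the invariant that the ratio stays within the claimed fraction. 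This combinatorial tree-averaging is the main obstacle: unlike the egalitarian case, the utilitarian bound does not reduce to a one-shot pigeon-hole and has to exploit the interplay between leaf multiplicities and the weights of the edges above them.
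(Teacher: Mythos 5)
Your egalitarian half is complete and correct, and in fact slightly cleaner than the paper's: the pigeonhole bound $\max_i\cost(\alloc_i)\ge W/\numAgents$ versus the trivial upper bound $W$ on any single bundle gives the ratio $\numAgents$ for arbitrary weights, whereas the paper argues with $\numOrders$ and $\numOrders/\numAgents$ (implicitly unweighted) and uses $\numAgents$ equal-length paths for tightness; your star is the special case of that construction and works just as well. Your utilitarian lower-bound instance is the same path-with-$\numAgents$-pendant-leaves construction as in the paper, and your verification of it is fine.

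The genuine gap is the utilitarian upper bound: you set up the identity $\sum_i\cost(\alloc_i)=\sum_{e}\wFn(e)\,n_e$ and the edge-wise bound $n_e\le\min(\numAgents,L_e)$, but you explicitly defer the step that actually proves $\max_{\alloc\in\Pi^{\operatorname{NW}}}\sum_i\cost(\alloc_i)\le\frac{\numAgents(\numOrders-\numAgents+1)}{\numOrders}\cdot\min_{\alloc\in\Pi}\sum_i\cost(\alloc_i)$ (``I would try it either through a charging scheme \ldots or through an induction''), so the claimed equality is not established. The paper closes exactly this step by invoking Proposition~8 of Hosseini et al., which bounds the total cost of the worst non-wasteful allocation by $\numAgents\cdot(\numOrders-\numAgents+1)$ in the unweighted setting, against the utilitarian optimum $\numOrders$. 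Moreover, the global inequality you aim for is false in the weighted generality in which you state it: take a single edge $\{h,v\}$ of weight $1$ with $\numAgents$ leaves of weight $\varepsilon$ attached to $v$, so $\numOrders=\numAgents+1$; giving each leaf to a distinct agent is non-wasteful with total cost $\numAgents(1+\varepsilon)$, while the utilitarian optimum is $1+\numAgents\varepsilon$, so the ratio tends to $\numAgents$, exceeding $\frac{2\numAgents}{\numAgents+1}$. Hence any correct completion of your charging/induction argument must restrict to unweighted instances (as the paper's proof implicitly does when it asserts the optimum is exactly $\numOrders$); there a direct route is to bound each nonempty bundle by $\numOrders-(\numLeaves-|S_i|)$, where $S_i$ is the set of leaves agent $i$ services, and sum over the nonempty bundles, rather than working edge by edge with general weights.
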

\begin{proof}
	First, we show the utilitarian case. For an upper-bound, an allocation assigning all orders to a single agent is clearly a social optimal, and the cost of such allocation is exactly~$\numOrders$. The worst non-wasteful allocation in terms of the utilitarian optimal is the one where almost all the vertices are visited by all the agents. \citet[Proposition~8]{HosseiniNW2024} showed that the worst such allocation achieves an overall cost of~$\numAgents\cdot(\numOrders-\numAgents+1)$. Now, the upper-bound follows. Next, we show that this bound is tight. Assume an instance where the topology is a path where one endpoint is the hub~$h$ and the second endpoint has additionally~$\numAgents$ leaves attached to it. The worst non-wasteful allocation~$\alloc$ allocates each leaf to a different agent. The vertices of the path can be allocated arbitrarily. Clearly, every agent must traverse all the vertices of the path and their own leaf. This leads to an overall cost of~$n\cdot(m-n+1)$. An allocation that is the best from the utilitarian perspective just allocates everything to a single agent, leading to the overall cost of~$\numOrders$.
	
	Next, we show the egalitarian case. Again, we start with the upper-bound. From the egalitarian perspective, the worst allocation allocates all orders to a single agent, achieving the maximum bundle cost of~$\numOrders$. On the other hand, an optimal egalitarian allocation may distribute the orders as evenly as possible; theoretically, optimal maximum bundle costs in such a case would be~$\numOrders/\numAgents$. To show the tightness of the construction, assume an instance where the topology is a union of~$\numAgents$ paths of the same length~$\ell$, whose one end-point is connected to the hub~$h$. If we allocate each path to a different agent, we obtain the maximum bundle cost of~$\ell = (\numAgents\cdot\ell)/\numAgents = \numOrders/\numAgents$. This shows the bound, as we can always construct an allocation assigning all orders to a single agent.
\end{proof}

\section{MMS and Non-wasteful Allocations}
\label{sec:MMS}

If we are given an MMS allocation and apply the algorithm from \Cref{thm:NW:easyToTurnInto}, we obtain a non-wasteful allocation such that the cost of no bundle is increased. Therefore, the new allocation is necessarily both MMS and non-wasteful.

\begin{proposition}\label{thm:MMS:MMSandNW:equivalent}
	Every MMS allocation can be turned into an MMS and non-wasteful allocation in linear time.
\end{proposition}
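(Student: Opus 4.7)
The proof is a direct corollary of \Cref{thm:NW:easyToTurnInto}, so my plan is essentially a short chain of implications rather than a new construction.

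First, I would take the given MMS allocation~$\alloc$ and feed it into the linear-time algorithm of \Cref{thm:NW:easyToTurnInto}. That algorithm returns a non-wasteful allocation~$\alloc'$ with the guarantee that~$\cost(\alloc'_i) \leq \cost(\alloc_i)$ for every agent~$i\in\agents$. The running time is linear because the conversion algorithm is linear, so the claimed complexity bound follows immediately.

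The remaining step is to verify that~$\alloc'$ is still MMS. Since~$\alloc$ is MMS, we have~$\cost(\alloc_i) \leq \MMSshare(\mathcal{I})$ for every~$i\in\agents$. Combining this with the inequality~$\cost(\alloc'_i) \leq \cost(\alloc_i)$ provided by \Cref{thm:NW:easyToTurnInto} yields~$\cost(\alloc'_i) \leq \MMSshare(\mathcal{I})$ for every~$i\in\agents$, which is exactly the definition of an MMS allocation. Hence~$\alloc'$ is simultaneously MMS and non-wasteful, completing the proof.

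There is no real obstacle here: all the work was done in \Cref{thm:NW:easyToTurnInto}, and the only subtlety is the observation that the cost-weakly-improving guarantee automatically preserves any upper bound on bundle costs, in particular the MMS-share bound.
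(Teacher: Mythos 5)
Your proposal is correct and follows exactly the paper's argument: apply the linear-time conversion of \Cref{thm:NW:easyToTurnInto} and observe that the weak cost improvement preserves the bound~$\cost(\alloc'_i)\leq\MMSshare(\mathcal{I})$ for every agent. Nothing is missing.
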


It follows from \Cref{thm:MMS:MMSandNW:equivalent} that finding MMS and non-wasteful allocations is, from the computational complexity perspective, equivalent to finding an MMS allocation. Therefore, by the result of \citet{HosseiniNW2024}, finding MMS and non-wasteful allocation is also computationally intractable, even if the instance is unweighted.

Naturally, the hardness from \citet{HosseiniNW2024} carries over to the more general weighted case, which raises the question of whether there are special topology structures or parameters for which the problem admits tractable algorithms.

In the remainder of this paper, we provide a detailed analysis of the problem's complexity, taking into account both restrictions of the topology and other natural restrictions of the input. Notably, we present the first tractable algorithms for the setting of computing fair and efficient distribution of delivery orders and, in contrast to \cite{HosseiniNW2024}, some of our positive results also apply to weighted instances, extensively broadening their practical appeal.

Before we dive deep into our results on various topologies, we show several additional auxiliary lemmas that help us simplify the proofs of the following subsections. 

\subsection{Basic Observations}

First, we show that finding MMS (and non-wasteful) allocation is as hard as deciding whether the MMS-share of an instance is at most a given integer~$q\in\N$. This follows from the fact that the cost of the most costly bundle in \emph{all} MMS allocations is the same.

\begin{lemma}
	\label{lem:family}
	Let~$\mathfrak{F}$ be a family of instances such that it is \NPh to decide whether the MMS-share of an instance from~$\mathfrak{F}$ is at most a given~$q\in N$. Then, unless~$\textsf{P}=\NP$, there is no polynomial time algorithm that finds MMS allocation for all instances from~$\mathfrak{F}$.
\end{lemma}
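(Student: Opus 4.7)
The plan is to prove the contrapositive via a polynomial-time Turing reduction from the MMS-share threshold problem to the MMS search problem. I would assume, for contradiction, the existence of a polynomial-time algorithm $\mathcal{A}$ that, on every input $\mathcal{I}\in\mathfrak{F}$, returns an MMS allocation, and use $\mathcal{A}$ as a subroutine to decide in polynomial time whether $\MMSshare(\mathcal{I})\leq q$ for an arbitrary $\mathcal{I}\in\mathfrak{F}$ and $q\in\N$, contradicting the hypothesised \NPhness.

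The only conceptual content I need is the invariant flagged in the excerpt: for every MMS allocation~$\alloc$ of an instance~$\mathcal{I}$, one has $\max_{i\in\agents}\cost(\alloc_i) = \MMSshare(\mathcal{I})$. The inequality $\leq$ is immediate from the MMS condition, and the inequality $\geq$ follows because $\MMSshare(\mathcal{I})$ is defined as the minimum over \emph{all} allocations of the maximum bundle cost, so it cannot exceed the maximum bundle cost of any particular allocation~$\alloc$. Thus the maximum bundle cost is an instance invariant across MMS allocations.

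Given this invariant, the decision procedure is transparent: on input $(\mathcal{I},q)$, run $\mathcal{A}(\mathcal{I})$ to obtain an MMS allocation~$\alloc$, compute $\cost(\alloc_i)$ for each~$i\in\agents$ by standard traversals of the topology, take the maximum~$M$, and return \Yes iff $M\leq q$. Correctness is exactly the invariant above; the running time is polynomial since $\mathcal{A}$ is polynomial by assumption and bundle costs are computed in polynomial time. I do not expect any real obstacle here, as the statement is a mild instance of the familiar search-to-decision equivalence for optimisation problems; the one point that must be made carefully is that the equality $\max_i\cost(\alloc_i)=\MMSshare(\mathcal{I})$ uses both halves of the MMS definition together with the min-over-allocations definition of the share, and is therefore not just a one-sided consequence of the MMS property.
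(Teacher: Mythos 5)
Your proposal is correct and matches the paper's argument: both rely on the observation that any MMS allocation has maximum bundle cost exactly $\MMSshare(\mathcal{I})$ (at most by the MMS condition, at least since the share is a minimum over all allocations), and then use the hypothetical polynomial-time search algorithm to decide the threshold question by comparing that maximum cost with~$q$. No meaningful difference from the paper's proof.
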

\begin{proof}
	Let~$\mathfrak{F}$ be a family of instances for which it is \NPh to decide whether the MMS-share is at most a given~$q\in \N$ and, for the sake of contradiction, assume that there exists an algorithm~$\mathbb{A}$ that for every instance~$\mathcal{I}\in\mathfrak{F}$ finds an MMS and non-wasteful allocation. Observe that~$\mathbb{A}$ may never return an allocation~$\alloc$ with~$\max_{i\in\agents} \cost(\alloc_i) < \MMSshare(\mathcal{I})$, as by definition, MMS-share of an instance is the maximum cost bundle in the best allocation, including the allocation~$\alloc$. Hence, in every allocation produced by~$\mathbb{A}$, there is always an agent with the bundle cost equal to the MMS-share of the instance. Now, given an instance~$\mathcal{I}=(\agents,G,h,q)$, we can use~$\mathbb{A}$ to find an MMS and non-wasteful allocation~$\alloc$ for~$(\agents,G,h)$. Then, we compare the value of the bundle of the maximum cost with the value of~$q$ and decide the instance~$\mathcal{I}$ in polynomial time. This implies that~$\mathsf{P}=\NP$, which is unlikely.
\end{proof}

The consequence of \Cref{lem:family} is that we can focus only on the complexity of deciding the MMS-share, as the impossibility of a tractable algorithm for finding MMS and non-wasteful allocations follows directly from this lemma and \Cref{thm:MMS:MMSandNW:equivalent}.

Next, we show that one can freely assume that the hub is located on some internal vertex~$v\in V(G)$. If this is not the case, then we can move the hub to the single neighbor of the leaf~$\ell=h$ and remove~$\ell$ from the instance while preserving the solution of the instance. 

\begin{lemma}
	\label{lem:MMS:ifHubIsLeafWeCanReduce}
	Let~$\mathcal{I}=(\agents,G=(V,E),h)$ be an instance of fair distribution of delivery orders such that the hub~$h$ is a leaf of~$G$ and~$\mathcal{J}$ be an instance with~$h$ removed and with the hub being~$h$'s original child~$v\in\operatorname{children}(h)$; that is,~$\mathcal{J} = (\agents,(\orders,E), v)$. Then, it holds that
	\begin{equation*}
		\MMSshare(\mathcal{I})= \MMSshare(\mathcal{J}) + \wFn(\{h,v\}).
	\end{equation*}
\end{lemma}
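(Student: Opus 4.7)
The plan is to prove both inequalities $\MMSshare(\mathcal{I}) \le \MMSshare(\mathcal{J}) + \wFn(\{h,v\})$ and $\MMSshare(\mathcal{I}) \ge \MMSshare(\mathcal{J}) + \wFn(\{h,v\})$ by lifting and projecting optimal allocations between the two instances. The pivotal structural observation is that because $h$ is a leaf with unique neighbor $v$, the edge $\{h,v\}$ is a bridge that every closed walk from $h$ touching any order of~$\orders$ must cross exactly twice. Consequently, for any non-empty bundle $B$ with $B \subseteq V \setminus \{h\}$, the shortest closed walk from $h$ visiting $B$ in~$G$ decomposes into traversing $\{h,v\}$ twice plus the shortest closed walk from $v$ visiting $B \setminus \{v\}$ in the sub-tree~$G - h$; after the division by two this gives $\cost_{\mathcal{I}}(B) = \wFn(\{h,v\}) + \cost_{\mathcal{J}}(B \setminus \{v\})$, and of course $\cost_\mathcal{I}(\emptyset) = \cost_\mathcal{J}(\emptyset) = 0$.

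For the inequality $\MMSshare(\mathcal{I}) \le \MMSshare(\mathcal{J}) + \wFn(\{h,v\})$, I would take any allocation $\alloc^\star$ witnessing $\MMSshare(\mathcal{J})$ and lift it to $\mathcal{I}$ as follows. If $\orders \setminus \{v\} = \emptyset$ (the degenerate case where $V = \{h,v\}$), assign $v$ to an arbitrary agent; otherwise at least one agent $i^\star$ has a non-empty bundle $\alloc^\star_{i^\star}$, and set $\alloc_{i^\star} := \alloc^\star_{i^\star} \cup \{v\}$ and $\alloc_j := \alloc^\star_j$ for $j \ne i^\star$. By the structural observation, every non-empty bundle's cost grows by exactly $\wFn(\{h,v\})$, while the enlarged bundle for $i^\star$ still costs $\cost_\mathcal{J}(\alloc^\star_{i^\star}) + \wFn(\{h,v\})$ because~$v$ is already traversed. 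Hence $\max_i \cost_\mathcal{I}(\alloc_i) \le \MMSshare(\mathcal{J}) + \wFn(\{h,v\})$.

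For the reverse inequality, I would take an allocation $\alloc$ achieving $\MMSshare(\mathcal{I})$ and define the projected allocation $\alloc'_i := \alloc_i \setminus \{v\}$, which is a valid allocation for $\mathcal{J}$ since $v$ is the hub there. The structural observation yields $\cost_\mathcal{J}(\alloc'_i) = \cost_\mathcal{I}(\alloc_i) - \wFn(\{h,v\})$ for every agent with $\alloc_i \ne \emptyset$, and $\cost_\mathcal{J}(\alloc'_i) = 0 \le \MMSshare(\mathcal{I}) - \wFn(\{h,v\})$ for empty bundles (using that as long as there is any order in $\mathcal{I}$, at least one agent's bundle is non-empty and therefore $\MMSshare(\mathcal{I}) \ge \wFn(\{h,v\})$). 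Taking the maximum over $i$ gives $\MMSshare(\mathcal{J}) \le \MMSshare(\mathcal{I}) - \wFn(\{h,v\})$, and the two inequalities combine to the claimed equality. The main subtlety I expect is simply the careful bookkeeping between empty and non-empty bundles together with the degenerate case $V = \{h,v\}$; beyond that, everything reduces to the bridge observation about $\{h,v\}$.
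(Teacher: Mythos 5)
Your proposal is correct and follows essentially the same route as the paper: both directions are obtained by projecting an optimal allocation of~$\mathcal{I}$ down (removing~$v$) and lifting an optimal allocation of~$\mathcal{J}$ up (adding~$v$ to a non-empty bundle), using the observation that the bridge~$\{h,v\}$ shifts every non-empty bundle's cost by exactly~$\wFn(\{h,v\})$; the paper phrases the lifting direction as a proof by contradiction, but the argument is the same. Your explicit handling of empty bundles and the degenerate case~$V=\{h,v\}$ is a minor tidiness improvement over the paper's write-up, not a different method.
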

\begin{proof}
	Let~$\MMSshare(\mathcal{I}) = q$,~$\alloc$ be an allocation such that~$\max_{i\in[\numAgents]} \cost(\alloc_i) = q$, and~$\alloc_{\max}$ be a bundle with~$\cost(\alloc^*) = q$. Since the hub~$h$ is a leaf, for every agent~$i\in\agents$ such that~$\alloc_i \not= \emptyset$ the shortest walk~$W$ the agent~$i$ uses to service~$\alloc_i$ necessarily contains both~$h$ and~$v\in N(h)$. Moreover, each~$W$ is of the form~$h,v,\ldots,v,h$. Hence, we create an allocation~$\alloc'$ such that~$\alloc'_i = \alloc_i\setminus\{v\}$. In~$\alloc'$, the shortest walk~$W'$ for servicing~$\alloc'_i$ is the same as~$W$, just with~$h$ removed. Hence, the cost of each bundle, including~$\alloc'_{\max}$, decreases by exactly~$\wFn(\{h,v\})$. Therefore,~$\MMSshare(\mathcal{J}) \leq q - \wFn(\{h,v\})$.
	
	For the sake of contradiction, assume that~$\MMSshare(\mathcal{J}) < q -\wFn(\{h,v\})$ and let~$\alloc'$ be an allocation such that~$\max_{i\in[\numAgents]} \alloc'_i < q-\wFn(\{h,v\})$. We create an allocation~$\alloc$ such that we add~$v$ to any non-empty bundle~$\alloc'_i$. Let~$W'$ be a shortest walk servicing~$\alloc'_i$ in~$\mathcal{J}$. By definition, this walk is of the form~$v,\ldots,v$, and it holds that~$\cost(\alloc'_i) < q-\wFn(\{h,v\})$. In~$\mathcal{I}$, since the hub moved from~$v$ to~$h$ and~$v$ becomes an order, a walk~$W$ used for servicing~$\alloc_i$ now becomes~$h,v,\ldots,v,h$. That it, its length increased by twice~$\wFn(\{h,v\})$, so~$\cost(\alloc_i) = \cost(\alloc'_i) + \wFn(\{h,v\})$. That is,~$\alloc$ is an allocation showing that~$\MMSshare(\mathcal{I}) < q$, which contradicts that~$\MMSshare(\mathcal{I}) = q$.
\end{proof}

Also, by combining the negative result of {\citet[Theorem 1]{HosseiniNW2024} with \Cref{lem:MMS:ifHubIsLeafWeCanReduce}, we directly obtain that the intractability of our problem is not caused by a large number of possible routes directly leaving the hub.
	
\begin{corollary}
	Unless~$\textsf{P} = \NP$, there is no polynomial-time algorithm that finds an MMS and non-wasteful allocation, even if the instance is unweighted and the degree of the hub is~one.
\end{corollary}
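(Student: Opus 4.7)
The plan is to combine the \NPhness of deciding the MMS-share on unweighted instances (Theorem~1 of \citet{HosseiniNW2024}) with the equivalence provided by \Cref{lem:MMS:ifHubIsLeafWeCanReduce}, using \Cref{lem:family} to lift hardness of the decision problem to hardness of computing an MMS (and non-wasteful) allocation. The key observation is that \Cref{lem:MMS:ifHubIsLeafWeCanReduce} is proven for instances where the hub is a leaf, and the result relates the MMS-share of such an instance to one where the hub is moved to the leaf's neighbor. This relationship can be used in the reverse direction to enforce that the hub is a leaf (hence has degree one) without affecting the complexity.

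Concretely, I would argue as follows. Let~$\mathcal{J} = (\agents,G_{\mathcal{J}},v)$ be an arbitrary unweighted instance as produced by the hardness reduction of \citet{HosseiniNW2024}, together with a target threshold~$q\in\N$. Construct an instance~$\mathcal{I} = (\agents,G_{\mathcal{I}},h)$ by introducing a fresh vertex~$h$, adding the edge~$\{h,v\}$ of weight~one, and declaring~$h$ to be the hub of~$\mathcal{I}$. Since all existing edge weights are one and the new edge also has weight one, the instance~$\mathcal{I}$ remains unweighted. Moreover, because~$h$ has~$v$ as its only neighbor in~$G_{\mathcal{I}}$, the hub of~$\mathcal{I}$ has degree exactly one. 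By \Cref{lem:MMS:ifHubIsLeafWeCanReduce} applied to~$\mathcal{I}$ (whose hub~$h$ is a leaf) and its contraction~$\mathcal{J}$, we obtain
\[
\MMSshare(\mathcal{I}) = \MMSshare(\mathcal{J}) + 1.
\]
Therefore, deciding~$\MMSshare(\mathcal{J}) \leq q$ reduces in linear time to deciding~$\MMSshare(\mathcal{I}) \leq q + 1$, showing that the decision version remains \NPh on the restricted family~$\mathfrak{F}$ of unweighted instances whose hub has degree~one.

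Having established hardness of the decision problem on~$\mathfrak{F}$, I would invoke \Cref{lem:family} to conclude that no polynomial-time algorithm can produce an MMS allocation for every instance in~$\mathfrak{F}$ unless~$\mathsf{P}=\NP$. Finally, by \Cref{thm:MMS:MMSandNW:equivalent}, the same impossibility carries over to computing an allocation that is simultaneously MMS and non-wasteful, which yields the corollary. The entire argument is essentially bookkeeping, so I do not expect a real obstacle; the only point that requires a moment of care is verifying that the edge of weight one added to attach the new hub preserves the unweighted status of the instance and that the reverse direction of \Cref{lem:MMS:ifHubIsLeafWeCanReduce} truly follows from the stated equality (which it does, as the lemma asserts equality, not merely one inequality).
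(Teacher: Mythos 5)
Your proposal is correct and follows essentially the same route the paper intends: attach a fresh degree-one hub as a leaf to the hard unweighted instance of \citet{HosseiniNW2024}, use \Cref{lem:MMS:ifHubIsLeafWeCanReduce} to shift the MMS-share by exactly one, and then lift the decision hardness to the search problem via \Cref{lem:family} and \Cref{thm:MMS:MMSandNW:equivalent}. The paper states the corollary as an immediate combination of these ingredients, and your write-up just makes the bookkeeping explicit.
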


\section{Small Number of Dead-ends or Junctions}
\label{sec:InternalAndLeaves}

We start our algorithmic journey with two efficient algorithms: one for topologies where the number of dead-ends (leaves) is small and one for topologies where the number of junctions (internal vertices) is small. Note that we need to study them separately as none is bounded by another. To see this, assume a star graph with one junction and an arbitrarily large number of dead-ends and, in the opposite direction, a simple path graph with exactly two dead-ends and an arbitrary number of junctions.

We start with an \FPT algorithm for the former parameter, that is, the number of leaves~$L$. The algorithm is based on the technique of dynamic programming.

\begin{theorem}
	\label{thm:MMS:FPT:numLeaves}
	When parameterized by the number of leaves~$\numLeaves$, an MMS and non-wasteful allocation can be found in \FPT time, even if the instance is weighted.
\end{theorem}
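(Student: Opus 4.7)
The plan is to enumerate all set partitions of the $\numLeaves$ leaves among the agents, relying on the key observation that in any non-wasteful allocation the cost of each bundle depends only on which leaves the bundle contains. Since the number of set partitions of $\numLeaves$ elements is at most the Bell number $B_\numLeaves \leq \numLeaves^\numLeaves$, this immediately yields an \FPT algorithm.

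First, I would establish the structural lemma: in any non-wasteful allocation $\alloc$, every order $v\in\alloc_i$ lies on the path from $h$ to some leaf in $\alloc_i\cap\leaves(G)$ (by iteratively applying \Cref{def:NW} inside the subtree rooted at $v$). Consequently, the subtree of $G$ spanning $\alloc_i\cup\{h\}$ is identical to the subtree spanning $(\alloc_i\cap\leaves(G))\cup\{h\}$, and so $\cost(\alloc_i)=\cost(\alloc_i\cap\leaves(G))$. In particular, every non-empty bundle of a non-wasteful allocation contains at least one leaf, hence at most $\numLeaves$ bundles are ever non-empty and we may assume $\numAgents\leq\numLeaves$.

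Given this, the algorithm enumerates all partitions of $\leaves(G)$ into at most $\numAgents$ non-empty classes. For each partition $(L_1,\ldots,L_k)$, it computes $c_j=\cost(L_j)$ in polynomial time by a single tree traversal (summing edge weights in the subtree of $G$ spanning $L_j\cup\{h\}$), and records $c_P=\max_j c_j$. It then picks a partition $P^*=(L_1^*,\ldots,L_{k^*}^*)$ minimizing $c_{P^*}$ and extends it to a full non-wasteful allocation $\alloc$ by a bottom-up pass: each inner vertex $v$ is added to $\alloc_j$ for an arbitrary $j$ with $L_j^*\cap\leaves(G^v)\neq\emptyset$ (which exists because $\leaves(G^v)\neq\emptyset$). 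By the structural lemma, $\cost(\alloc_j)=c_j$, so $\alloc$ attains $\max$-cost exactly $c_{P^*}$.

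For correctness, by \Cref{thm:NW:easyToTurnInto} any allocation can be converted into a non-wasteful one of no greater cost; by the structural lemma the cost profile of that non-wasteful allocation is determined by the leaf partition it induces. Therefore $\MMSshare(\mathcal{I})$ equals the minimum of $c_P$ over all leaf partitions $P$ into at most $\numAgents$ classes, which is exactly what the algorithm returns. The total running time is $\Oh{B_\numLeaves\cdot\numOrders\cdot\numAgents}$, which is \FPT in $\numLeaves$. The only mild obstacle I anticipate is a clean proof of the structural lemma; the enumeration and extension steps are straightforward given the tree structure.
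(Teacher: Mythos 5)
Your proof is correct, and it rests on the same core insight as the paper's: for MMS combined with non-wastefulness only the assignment of \emph{leaves} to agents matters, because a non-wasteful bundle costs exactly as much as its leaves ($\cost(\alloc_i)=\cost(\alloc_i\cap\leaves(G))$) and any leaf partition extends bottom-up to a non-wasteful allocation realizing precisely those costs. The only real difference is how the search over leaf assignments is organized: the paper runs a dynamic program over signatures $(i,P,Q)$ (current agent, leaves used by earlier agents, current agent's leaves), exploiting the identical cost functions to get $2^{\Oh{\numLeaves}}\cdot(\numOrders+\numAgents)^{\Oh{1}}$ time, whereas you enumerate all $B_{\numLeaves}\le \numLeaves^{\numLeaves}$ set partitions directly, giving $2^{\Oh{\numLeaves\log \numLeaves}}\cdot(\numOrders+\numAgents)^{\Oh{1}}$ --- still \FPT in $\numLeaves$, just with a worse exponent. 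Your explicit structural lemma (every order in a non-wasteful bundle lies on the $h$--$\ell$ path for some leaf $\ell$ of the same bundle, so at most $\numLeaves$ bundles are non-empty) is exactly what the paper establishes implicitly inside its ``realizes the signature'' claims, so no gap there; the lower-bound direction via \Cref{thm:NW:easyToTurnInto} (or simply monotonicity of $\cost$) is also sound. One cosmetic point: if the hub itself is a leaf of $G$, exclude it from the partitioned leaf set or first apply \Cref{lem:MMS:ifHubIsLeafWeCanReduce}; this affects neither argument.
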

\begin{proof}
	\newcommand{\DP}{\texttt{T}}
	We prove the result by giving an algorithm running in~$2^\Oh{\numLeaves}\cdot (\numOrders+\numAgents)^\Oh{1}$ time. The algorithm is based on a dynamic programming approach, and, maybe surprisingly, it does not exploit the topology's structure, as is common for such algorithms, but rather tries to guess for each agent the set of leaves he or she is servicing in an optimal solution. The crucial observation here is that for MMS and non-wastefulness, the agents are interested only in their own bundles. Therefore, we do not need to store the whole partial allocation; rather, we need only the bundle of the currently processed agent and the list of all already allocated orders.
	
	\begin{algorithm}[tb]
		\caption{A dynamic programming algorithm for the computation of an MMS and non-wasteful allocation on instances with a small number of dead-ends.}
		\label{alg:algorithm}
		\textbf{Input}: A problem instance~$\mathcal{I}=(G,h,\agents)$.\\
		\textbf{Output}:~$\MMSshare(\mathcal{I})$.
		\begin{algorithmic}[1] %
			\State \Return~$\min\limits_{Q\subseteq\leaves(G)} \Call{SolveRec}{\numAgents,\leaves(G)\setminus Q,Q}$ \label{alg:algorithm:recInit}
			
			\Function{SolveRec}{$i,P,Q$}
			\If{$i=1$ \textbf{and}~$\DP[i,P,Q] = \texttt{undef}$}
			\If{$P = \emptyset$}
			\State~$\DP[i,P,Q] \gets \cost(Q)$
			\Else
			\State~$\DP[i,P,Q] \gets \infty$
			\EndIf
			\ElsIf{$\DP[i,P,Q] = \texttt{undef}$}
			\If{$P\cap Q = \emptyset$}
			\State~$x \gets \min\limits_{P'\subseteq P} \Call{SolveRec}{i-1,P\setminus P',P'}$ \label{alg:algorithm:recStep}
			\State~$\DP[i,P,Q] \gets \max\{ x, \cost(Q) \}$
			\Else
			\State~$\DP[i,P,Q] \gets \infty$
			\EndIf
			\EndIf
			
			\State \Return~$\DP[i,P,Q]$
			\EndFunction
		\end{algorithmic}
	\end{algorithm}
	
	More formally, the core of the algorithm is a dynamic programming table~$\DP[i,P,Q]$, where
	\begin{itemize}
		\item~$i\in\agents$ is the last processed agent,
		\item~$P\subseteq \operatorname{leaves}(G)$ is a subset of leaves allocated to agents~$1,\ldots,i-1$, and
		\item~$Q \subseteq \operatorname{leaves}(G)\setminus P$ is a bundle of agent~$i$,
	\end{itemize}
	and in each cell of~$\DP[i,P,Q]$, we store the minimum of the maximum-cost bundle over all partial allocations, where leaves of~$Q$ are assigned to agent~$i$, leaves of~$P$ are distributed between agents~$1,\ldots,i-1$, and leaves of~$\orders\setminus(P\cup Q)$ are unassigned.  The computation is then defined as of \Cref{alg:algorithm}. Note that, for the sake of exposition, the code presented computes just the optimal cost. To extend the algorithm so that it also finds an MMS and non-wasteful allocation, we store in each cell a pair~$(q,\alloc)$, where~$q$ is the minimum cost and~$\alloc$ is a partial allocation achieving this cost.
	
	The number of cells of the dynamic programming table is~$\Oh{\numAgents\cdot 2^L\cdot 2^L}\in 2^\Oh{L}\cdot n^\Oh{1}$, and each cell is computed exactly once. The most time-consuming operations of the algorithm are lines~\ref{alg:algorithm:recInit} and~\ref{alg:algorithm:recStep}, where we, at worst, try all possible subsets of leaves. That is, the overall running time of the algorithm is~$2^\Oh{\numLeaves}\cdot (\numAgents+\numOrders)^\Oh{1}$ as promised. Note that we made no assumptions about the edge weights.
	
	We call the triple~$(i,P,Q)$ a \emph{signature}. We say that a \emph{partial allocation}~$\alloc = (\alloc_1,\ldots,\pi_i)$ realizes signature~$(i,P,Q)$ if 
	\begin{enumerate}
		\item the orders in~$Q$ are allocated to the agent~$i$, that is,~$Q\subseteq \alloc_i$,
		\item the orders in~$P$ are all serviced by the agents~$1,\ldots,i-1$, that is,~$P \subseteq \bigcup_{j=1}^{i-1} \alloc_j$,
		\item an order~$v$ is serviced by an agent~$j\in[i]$ if and only~$v$ is on a path between some leaf~$\ell\in P\cup Q$ and~$h$; formally,~$\bigcup_{j=1}^{i} \alloc_j = W_{P\cup Q}$, and
		\item~$\alloc$ is a non-wasteful allocation.
	\end{enumerate}
	By~$\Pi_{i,P,Q}$, we denote the set of all allocations realizing the signature~$(i,P,Q)$. For a signature~$(i,P,Q)$ the dynamic programming table stores the value~$\min_{\alloc\in\Pi_{i,P,Q}} \max_{j\in[i]} \cost(\alloc_j)$ or~$\infty$, if~$\Pi_{i,P,Q} = \emptyset$. Note that it may happen that~$\Pi_{i,P,Q}$ is an empty set, albeit an MMS allocation is always guaranteed to exist. Such a situation occurs when one or more conditions on the partial solution cannot be satisfied. The most immediate example is when~$P\cap Q \not= \emptyset$.
	
	If~$i=1$, the set~$P$ must always be empty, as there are no previously processed agents. Therefore, the value stored in the table depends solely on the set of visited leaves. Formally, we set
	\[
	\DP[i,P,Q] = \begin{cases}
		\cost(Q)  & \text{if } P = \emptyset \text{ and}\\
		\infty    & \text{otherwise.}
	\end{cases}
	\]
	In this case, it is easy to see that there are~$2^L\cdot 2^L = 4^L$ different signatures, and for each signature, the value can be computed in polynomial time.
	
	Now, let~$i> 1$. Then, the stored value depends on the bundle of the currently processed agent~$i$ and also on the bundle of all previous agents. However, we do not need to know the exact bundles of all previously processed agents; we are interested only in the leaves in the bundle of the current agent and in the best possible (in terms of the MMS-share) partial partition of orders in~$P$. Formally, the computation is defined as follows.
	\[
	\DP[i,P,Q] = \begin{cases}
		\min\limits_{P'\subseteq P}\max\{ \DP[i-1,P\setminus P', P'], \cost(Q) \} & \\
		\phantom{XXXXx}\text{if } P\cap Q = \emptyset \text{ and}\\
		\infty %
		\phantom{mmmn}\text{otherwise.}
	\end{cases}
	\]
	For every~$i > 1$, there are~$2^\numLeaves\cdot2^\numLeaves = 4^\numLeaves$ possible signatures, and for each signature, the value of the corresponding cell can be computed in~$2^\numLeaves\cdot (\numOrders+\numAgents)^\Oh{1}$ time. Overall, the computation for every~$i > 1$ takes~$2^\Oh{\numLeaves}\cdot (\numOrders+\numAgents)^\Oh{1}$ time.

	Now, we show that the computation is indeed correct. First, we show that once the dynamic programming table is computed, a corresponding partial allocation always exists for each cell, storing a value different from infinity.
	
	\begin{claim}\label{clm:MMS:FPT:numLeaves:inductionStep}
		For every~${\DP[i,P,Q] \not= \infty}$, there always exists a partial allocation~${\alloc\in\Pi_{i,P,Q}}$ such that \[\max_{i\in\agents} \cost(\alloc_i) = \DP[i,P,Q].\]
	\end{claim}
	\begin{claimproof}
		We show the claim by the mathematical induction over~$i$. First, let~$i=1$. In this case, the set~$P$ is always empty, and we can focus only on~$Q$. Let~$Q$ be a fixed set of leaves. We construct a partial allocation~$\alloc = (\alloc_1)$ such that we set~$\alloc_1 = W_Q\setminus\{h\}$. Then,~$\alloc$ clearly realizes~$(i,P,Q)$, as~$Q=\alloc_1$,~$P = \emptyset$, and~$\bigcup_{j=1}^i \alloc_j = W_{P\cup Q} = W_Q = \alloc_1$. Moreover,~$\alloc$ is non-wasteful since it contains only the orders on the paths between leaves of~$Q$ and the hub~$h$. By the same argument, the cost of~$\alloc_1$ is equal to the cost of servicing~$Q$. This is exactly the value stored in~$\DP[i,P,Q]$, and therefore, the claim holds for the basic case.
		
		Now, let~$i > 1$ and assume that the claim holds for~$i-1$. Let~$\DP[i,P,Q] = q$ for some~$q\not=\infty$. Since~$q\not=\infty$, it holds that~$P\cap Q = \emptyset$. Let~$P'\subseteq P$ be a set such that~$\DP[i-1,P\setminus P',P'] = w'$ is minimized. By the induction hypothesis, there exists a partial allocation~$\alloc'$ realizing the signature~$(i-1,P\setminus P', P')$ such that~$\max_{i\in\agents} \cost(\alloc_i) = w'$. The partial partition~$\alloc'$ allocates, in a non-wasteful way, all orders of~$P$. We create a partial allocation~$\alloc$ such that~$\alloc = (\alloc'_1,\alloc'_2,\ldots,\alloc'_{i-1},W_Q\setminus W_P)$. It is obvious that~$\alloc$ satisfies properties (1) and (2). Moreover, let~$v$ be an order. If~$v\in W_{P}$, then~$v$ was already services by some agent in~$\alloc'$. If~$v\in W_Q\setminus W_P$, then~$v$ is serviced by the agent~$i$. No other order is serviced in the partial allocation, so~$\alloc$ satisfies the property (3). For the last property, observe that all bundles~$\alloc_j$,~$j\in[i-1]$, are non-wasteful, and~$\alloc_i$ is also non-wasteful as we added only orders of~$W_Q$. That is,~$\alloc$ is a partial allocation realizing signature~$(i,P,Q)$. By the definition of the computation,~$w$ is~$\max_{j\in[i-1]} \{\alloc'_j,\cost(Q)\}$, which is exactly the maximum cost of a bundle of~$\alloc$. This finishes the proof.
	\end{claimproof}
	
	To finalize the correctness of the computation, we show that the stored value is indeed minimal possible.
	
	\begin{claim}\label{clm:MMS:FPT:numLeaves:inductionStep}
		For every~$\alloc\in \Pi_{i,P,Q}$, it holds that the value stored in~$\DP[i,P,Q]$ is at most~$\max_{i\in\agents} \cost(\alloc_i)$.
	\end{claim}
	\begin{claimproof}
		Again, the proof is obtained by induction over values of~$i$. Let~$i=1$ and~$(1,P,Q)$ be a signature. Since~$\DP[1,P,Q] \not=\infty$, it must hold that~$P = \emptyset$ by the definition of the computation. Then, for an arbitrary subset of leaves~$Q\subseteq\leaves(G)$, there exists only one partial allocation~$\alloc$. This allocation necessarily allocates all leaves to agent~$1$, and the cost of this allocation is~$\cost(Q)$. By the definition of the computation, this is exactly the value we store in the dynamic programming table for~$(1,P,Q)$. Hence, the claim is, in this case, correct.
		
		Now, let~$i \geq 2$ and let the claim hold for~$i-1$. Since~$\alloc$ realizes~$(i,P,Q)$, it means that~$\alloc$ is non-wasteful,~$Q\subseteq\alloc_i$,~$P\subseteq\bigcup_{j=1}^{i-1} \alloc_j$, and~$\bigcup_{j=1}^{i} \alloc_j = W_{P\cup Q}$. Let~$\alloc'=(\alloc_1,\ldots,\alloc_{i-1})$ be the allocation~$\alloc$ with the bundle of agent~$i$ removed. By the induction hypothesis, the value stores in~$\DP[i-1,P\setminus P',P']$, where~$P' = \alloc_{i-1}\cap \leaves(G)$, is at most~$\max_{j\in[i-1]} \cost(\alloc_i)$. By the definition of the computation, the value stored in~$\DP[i,P,Q]$ is at most~$\max\{\DP[i-1,P\setminus P',P'],\cost(Q)\}$, showing the claim.
	\end{claimproof}
	
	Once the dynamic programming table is correctly computed, we just find~$Q\subseteq \operatorname{leaves}(G)$ such that~$\DP[n,\operatorname{leaves}(G)\setminus Q, Q]$ is minimized. By the definition of MMS-share, any corresponding partial allocation is MMS.
	
	The overall running time of the algorithm is~$2^\Oh{\numLeaves}\cdot (\numAgents+\numOrders)^\Oh{1}$, which is clearly in \FPT. Also, note that we made no assumptions about the edge weights. %
\end{proof}

The structural counterpart of the number of leaves is the number of internal vertices. Again, we show that under this parameterization, our problem is in the complexity class \FPT. However, this algorithm is completely different from the previous one and combines an insight into the structure of MMS and non-wasteful solutions with careful guessing and ILP formulation of the carefully designed subproblem. 

\begin{theorem}
	\label{thm:MMS:FPT:numInternal}
	When the instance is parameterized by the number of internal vertices~$k$ and the number of different edge weights~$\psi$, an MMS and non-wasteful allocation can be found in \FPT time.
\end{theorem}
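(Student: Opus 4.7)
The plan is to invoke \Cref{lem:MMS:ifHubIsLeafWeCanReduce} so that the hub~$h$ is an internal vertex of~$G$, and let~$G'$ be the subgraph induced by the internal vertices, a subtree with~$|V(G')|\leq k$. Every leaf of~$G$ attaches to some~$v\in V(G')$ via an edge of one of the~$\psi$ distinct weights~$w_j$, so leaves partition into at most~$k\psi$ \emph{types}~$(v,j)$ with multiplicities~$c(v,j)$.

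The structural observation is that, in a non-wasteful allocation, the set of internal vertices visited by each agent forms a subtree of~$G'$ containing~$h$; call this the agent's \emph{pattern}. There are at most~$2^k$ patterns, and the cost of an agent with pattern~$T$ equals~$W(T) + \sum_{v\in T, j\in[\psi]} y^{v,j}\, w_j$, where~$W(T)$ is the total weight of internal edges in~$T$ and~$y^{v,j}$ is the number of type-$(v,j)$ leaves this agent services. Thus the cost splits cleanly into a pattern-dependent constant plus a linear leaf term, and by \Cref{thm:NW:easyToTurnInto} we may freely focus on this decomposition.

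The careful-guessing step is to binary-search the MMS-share value~$C$: every candidate value is of the form~$W(T) + \sum_j a_j w_j$ for some pattern~$T$ and non-negative integers~$a_j$ with~$\sum_j a_j \leq \numOrders$, giving at most~$2^k(\numOrders+1)^\psi$ candidate values, polynomially many in~$\numOrders$ for fixed~$k,\psi$. For each candidate~$C$, the algorithm decides feasibility of ``there is an MMS and non-wasteful allocation with every bundle of cost at most~$C$'' by an integer linear program. The outer variables are~$n_T$ (number of agents with pattern~$T$) and~$Y_T^{v,j}$ (leaves of type~$(v,j)$ serviced by pattern-$T$ agents), with the natural global constraints~$\sum_T n_T = \numAgents$ and~$\sum_{T\ni v} Y_T^{v,j} = c(v,j)$; this already uses~$O(2^k \cdot k \psi)$ variables, which is~$f(k,\psi)$.

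The main obstacle---and the carefully designed subproblem---is encoding the constraint ``within pattern~$T$, the leaves~$Y_T^{v,j}$ can be split among~$n_T$ bundles each of leaf-cost at most~$C-W(T)$'', which is a makespan-scheduling feasibility question with~$\psi$ distinct job sizes. The approach is to add, for each pattern~$T$, auxiliary variables counting agents grouped by their \emph{load profile} (the~$\psi$-tuple of leaf-weight multiplicities they carry). The crux is to argue that only~$f(k,\psi)$ many distinct profiles per pattern need to be considered: for instance via a structural lemma showing that any MMS-optimal allocation can be perturbed so that agents sharing a pattern have loads differing by a bounded amount per weight class, or alternatively via the basic-feasible-solution bound for the scheduling configuration ILP, which implies that at most~$\psi+1$ distinct profiles suffice per pattern. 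With at most~$f(k,\psi)$ variables overall, Lenstra's algorithm solves the ILP in FPT time, and iterating over the polynomially many candidate values of~$C$ completes the proof.
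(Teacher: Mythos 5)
Your setup is sound up to the point you yourself flag as the crux, and that crux is exactly where the proof is missing. The reduction via \Cref{lem:MMS:ifHubIsLeafWeCanReduce}, the typing of leaves by (attachment vertex, edge weight) into at most $k\psi$ classes, and the observation that in a non-wasteful allocation each agent's cost decomposes as $W(T)$ plus a linear term over the leaf classes attached to its pattern $T$ (of which there are at most $2^k$) are all correct. But the ILP you describe, with variables $n_T$ and $Y_T^{v,j}$, only aggregates leaves per pattern; it cannot express the per-agent capacity constraint ``the $Y_T^{v,j}$ leaves split among $n_T$ agents with each leaf-load at most $C-W(T)$,'' and neither of your two suggested repairs is established. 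The ``basic-feasible-solution bound'' giving $\psi+1$ profiles is a statement about the \emph{fractional} configuration LP (it has $\psi+1$ constraints); for \emph{integral} solutions the known support bounds (Eisenbrand--Shmonin type) are weaker, and, more importantly, even a small-support guarantee does not yield an \FPT-sized ILP: the profiles themselves are unknown vectors with entries up to $\numOrders$, so coupling ``number of agents using a profile'' with ``contents of that profile'' is bilinear, and guessing the profiles ranges over roughly $(\numOrders+1)^{\psi}$ candidates per slot, which is \XP in $\psi$, not \FPT. The inner subproblem is essentially bin packing with $\psi$ distinct item sizes, and you cannot simply cite it away.

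What the paper does at this point is prove the structural lemma you only gesture at: an exchange argument (\Cref{lem:niceAlloc} and its weighted refinement) showing that any allocation can be transformed, without increasing any cost, so that every two agents share at most one leaf type per weight value. This caps the number of ``complex'' agents at $\psi\binom{k}{2}$, whose type sets can be guessed, while every remaining agent touches at most one type per weight class; only then does an ILP with $f(k,\psi)$ variables (per important agent and per standard-agent class) exist and Lenstra's algorithm apply. If you want to keep your pattern-based formulation, you would need to state and prove an analogous perturbation lemma bounding the number of distinct load profiles per pattern by $f(k,\psi)$ \emph{and} make them guessable from an $f(k,\psi)$-sized family; as written, that step is absent. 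A secondary issue: enumerating all $2^k(\numOrders+1)^{\psi}$ candidate values of $C$ is itself \XP in $\psi$; since feasibility is monotone in $C$, you should instead binary-search the integer range of possible costs (as the paper does for the weighted case), which costs only polynomially many ILP calls.
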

\begin{proof}
	Our algorithm combines several ingredients. First, we show a structural lemma that allows us to restrict the number of important agents in terms of the number of internal vertices. Then, for these important agents, we guess their bundles in an optimal solution. Finally, for each guess, we design an integer linear program (ILP) that helps us verify whether our guess is indeed a solution. For the sake of exposition, we first show the proof for the unweighted instances; the generalization to instances with a bounded number of different weights is described at the end of the proof.
	
	Let~$\equiv$ be an equivalence relation over the set of leaves such that for a pair~$\ell,\ell \in \leaves(G)$ it holds that~$\ell\equiv\ell'$ if and only of~$\parent(\ell) = \parent(\ell')$. Observe that the relation partitions the leaves into~$k$ equivalence classes; we denote them~$T_1,\ldots,T_k$. In the following lemma, we show that for each allocation~$\alloc$, there exists an allocation~$\alloc'$ where no agent is worse off and which possesses a nice structure.
	
	\begin{lemma}
		\label{lem:niceAlloc}
		Let~$\alloc$ be an allocation. There always exists a \emph{nice} allocation~$\alloc'$ such that~$\cost(\alloc'_i) \leq \cost(\alloc_i)$ for every~$i\in\agents$. An allocation is~$\alloc'$ is nice if for each pair of distinct agents~$i,j\in\agents$ there exists at most one type~$t\in[k]$ so that~$|\alloc'_i\cap T_t| > 0$ and~$|\alloc'_j\cap T_t| > 0$.
	\end{lemma}
	\begin{claimproof}
		If~$\alloc$ satisfies the condition, then we are done. Otherwise, let there exist a pair of distinct agents~$i,j\in\agents$ such that there are~$t,t'\in[k]$ so that~$|\alloc_i\cap T_t|,|\alloc_i \cap T_{t'}|,|\alloc_j\cap T_t|,|\alloc_j\cap T_{t'}| > 0$. Let~$\ell'\in T_{t'}$ be an order serviced by the agent~$i$ and~$\ell\in T_t$ be an order serviced by the agent~$j$. We create~$\alloc'$ such that we move~$\ell'$ from~$\alloc_i$ to~$\alloc_j$,~$\ell$ from~$\alloc_j$ to~$\alloc_i$, and we keep all the remaining bundles the same. By our assumptions,~$\alloc_i$ contains at least one other leaf of~$T_t$ and~$\alloc_j$ at least one other leaf of~$T_{t'}$. Consequently, the order added to~$\alloc_i$ increases the cost for~$i$ by exactly one, and the removal of~$\ell'$ from~$\alloc_i$ decreases the cost for~$i$ by at least one. Hence, we obtain that~$\cost(\alloc'_i) \leq \cost(\alloc_i)$ and symmetrically for~$j$. The cost for all the other agents remains the same, as their bundles have not been changed. We continue such swaps as long as~$i$ services at least one order of~$T_{t'}$ or~$j$ services at least one order of~$T_t$. That is, after~$\Oh{m}$ steps, we obtain an equivalent nice allocation~$\alloc'$.
	\end{claimproof}
	
	The previous lemma implies that there is always an allocation, namely the nice one, where most agents service leaves of exactly one type. To see this, assume that a nice allocation~$\alloc$ exists with~$\binom{k}{2} + 1$ agents servicing at least two different types of leaves. Then, by the Pigeonhole principle, there is necessarily a pair of agents~$i$ and~$j$ both servicing at least one leaf of some~$T_t$ and~$T_{t'}$ with~$t\not=t'$, which contradicts that~$\alloc$ is nice. Consequently, at most~$\binom{k}{2}$ agents service leaves of multiple different types, and all other agents services leaves of exactly one type.
	
	In the next phase of the algorithm, we first guess the number~$\eta \leq \min\{\binom{k}{2},\numAgents\}$ of important agents, and then for each of agents~$i\in[\eta]$, we guess the structure of their bundle. Specifically, for each agent~$i\in[\eta]$, the bundle structure is a subset~$L_i\subseteq [k]$, where~$t\in L_i$ represents that, in a solution~$\alloc$, the agent~$i$ services at least one leaf of type~$t$. By \Cref{lem:niceAlloc}, we can assume that all remaining agents~$j\in[\eta+1,\numAgents]$ are servicing exactly one type of leaves, so we do not need to guess their structure.
	
	To verify whether our guess is correct, we use an integer linear programming formulation of the problem. Before introducing the problem's ILP encoding, we guess the MMS-share~$q$ of the instance. Note that since the instance is unweighted, there is only a linear number of possible values of~$q$, and we can try all of them in increasing order to obtain the minimum possible~$q$.
	
	In the formulation, we have a non-negative integer variable~$x_i^t$ for every~$i\in[\eta]$ and every~$t\in |T_i|$ representing the number of additional leaves of type~$t$ the agent~$i$ services. Additionally, we have~$k$ variables~$y_1,\ldots,y_k$ where each~$y_j$ represents the number of agents servicing only the leaves of type~$T_j$. The constraints of the program are as follows (we use~$d_t = \dist(\parent(T_t),h)$).
	\begin{align}
		\forall i\in[\eta]\colon&& \sum_{t\in L_i} ( x_i^t + 1 + d_t ) &\leq q &&\label{eq:MMS:FPT:3pvc:agentConstraint}\\
		\forall t\in[k]\colon&& \sum_{i\in[\eta]\colon t\in L_i} ( x_i^t + 1) + y_t\cdot(q - d_t) &= |T_t|\label{eq:MMS:FPT:3pvc:typeConstraint}\\
		&& \sum_{t\in[k]} y_t + \eta &\leq \numAgents \label{eq:MMS:FPT:internal:numAgentsConstraint}
	\end{align}
	
	The constraints \eqref{eq:MMS:FPT:3pvc:agentConstraint} ensure that the cost of no bundle exceeds the guessed value of the MMS-share. The constraints~\eqref{eq:MMS:FPT:3pvc:typeConstraint} then secure that all orders are serviced. Finally, due to the constraint \eqref{eq:MMS:FPT:internal:numAgentsConstraint}, the number of agents is correct. Also, observe that we do not use any objective function, as we are only interested in the feasibility of our program. However, we could exploit the objective function to, e.g., find MMS and non-wasteful allocation that minimizes the sum of costs.
	
	For the correctness, assume first that there is an MMS and non-wasteful allocation~$\alloc$. If~$\alloc$ is not a nice allocation, we use \Cref{lem:niceAlloc} to find an equivalent nice allocation in polynomial time. Therefore, assume that~$\alloc$ is a nice allocation. Also, we rename the bundles so that agents~$1,\ldots,\eta$ are servicing leaves of multiple types and agents~$[\eta+1,\numAgents]$ are servicing leaves of only one type (or no type at all). Now, we set the variables of the ILP as follows. We set each~$x_i^t$ to the value~$\{|T_t\cap \alloc_i|-1\}$ for every~$i\in[\eta]$ and every~$i\in L_t$. For the remaining variables, we assign~$y_t = \{ i\in[\eta+1,\numAgents] \mid |T_t| \cap \alloc_i \not= \emptyset \}$. Clearly, the sum of all~$y_t$ is at most~$n-\eta$, as each agent~$i\in[\eta+1,\numAgents]$ services exactly one leaf type. For the sake of contradiction, assume that \eqref{eq:MMS:FPT:3pvc:agentConstraint} is not satisfied. Then there exists an agent~$i\in[\eta]$ such that~$\sum_{t\in L_i} x_i^t + 1 + d_t = \sum_{t\in L_i} |T_t\cap \alloc_i| + d_t = \cost(\alloc_i) > q$, which contradicts that~$\alloc$ is an MMS allocation. Finally, assume that \eqref{eq:MMS:FPT:3pvc:typeConstraint} is violated with our variables assignment. Then there exists a type~$t\in[k]$ such that the number of leaves of this type services by an important agent together with the number of leaves of this type covered by other agents is not equal to~$|T_t|$. However, this implies that there exists a leaf~$\ell\in T_t$ not serviced by any agent in~$\alloc$, contradicting that~$\alloc$ is a complete allocation. Thus, this never happens, and the ILP is indeed feasible.
	
	In the opposite direction, let~$\vec{s}=(x^1_1,\ldots,x^\eta_k,y_1,\ldots,y_k)$ be a feasible solution for the ILP program we constructed. We create an allocation~$\alloc$ as follows. For every~$i\in[\eta]$, we add to~$\alloc_i$ exactly~$x_i^t + 1$ unallocated leaves of type~$t\in[k]$ together with all so-far unallocated internal vertices between~$h$ and~$\parent(T_t)$. Next, for each~$t\in[k]$, we add to the allocation~$y_t$ bundles, each consisting of~$q-d_t$ unallocated leaves of type~$t$ together with all unallocated internal vertices on the path between~$\parent(T_t)$ and~$h$. If there is an agent with an undefined bundle, then the bundle of this agent is an empty set. Clearly, the cost of every~$\alloc_j$,~$j > \eta$, is at most~$q$ and~$\alloc$ is non-wasteful. First, we verify that the number of bundles in~$\alloc$ is exactly~$n$. Obviously, the number of bundles is never smaller than~$\numAgents$, as if yes, then we complete the allocation with an appropriate number of empty bundles. Therefore, assume that the number of bundles exceeds~$\numAgents$. Based on the solution, we created~$\eta + \sum_{t\in[k]} y_t$ bundles; if this number is greater than~$\numAgents$, then the constraint~$\eqref{eq:MMS:FPT:internal:numAgentsConstraint}$ is violated, contradicting that~$\vec{s}$ is a solution. Now, assume that there exists an agent~$i\in[\eta]$ such that~$\cost(\alloc_i) > q$. We defined~$\alloc_i$ such that it contains~$x_i^t + 1$ leaves of each type~$t\in L_i$ together with some internal vertices. Since the allocation is non-wasteful, the cost of each bundle cannot be decreased by the removal of an internal vertex. Hence, we can decompose~$\cost(\alloc_i)$ as~$\sum_{t\in L_i} x_i^t + 1 + d_t$. However, since~$\vec{s}$ is a feasible solution, this value is never greater than~$q$, unless \eqref{eq:MMS:FPT:3pvc:agentConstraint} is violated. That is, the cost of each bundle is at most~$q$. Finally, let there be a leaf~$\ell\in T_t$,~$t\in[k]$, which is not part of any bundle, that is,~$\sum_{i\in[\numAgents]} |\alloc_i \cap T_t| < |T_t|$. We can decompose the left side of the inequality as~$\sum_{i\in[\eta]\colon t\in L_i} (x_i^t) + y_t\cdot(q-d_t) < |T_t|$, and we directly obtain violation of \eqref{eq:MMS:FPT:3pvc:typeConstraint}. Hence, all leaves are serviced, and therefore, by the definition of~$\alloc$, also all internal vertices are serviced. That is,~$\alloc$ is an MMS and non-wasteful allocation, finishing the correctness of the algorithm.
	
	For the running time, observe that the number of variables of the program is~$\eta\cdot k+k \in \Oh{k^2\cdot k + k} \in \Oh{k^3}$. Therefore, the program can be solved in time~$k^\Oh{k^3}\cdot \numOrders^\Oh{1}$ by the result of \mbox{\citet{Lenstra1983}}. There are~$2^\Oh{k^3}$ different guesses we need to verify, and therefore, the overall running time of the algorithm is~$2^\Oh{k^3} \cdot 2^\Oh{k^3\log k} \cdot \numOrders^\Oh{1} \in 2^\Oh{k^3\log k}\cdot\numOrders^\Oh{1}$, which is indeed in \FPT.
	
	Assume now that there is a bounded number~$\psi$ of different edge-weights. Then, we slightly change the definition of the relation~$\equiv$ so that two leaves~$\ell$ and~$\ell'$ are equivalent if and only if~$\parent(\ell) = \parent(\ell')$ and~$\wFn(\ell,\parent(\ell)) = \wFn(\ell',\parent(\ell'))$. Observe that the number of different equivalence classes is now~$\Oh{\psi\cdot k}$ instead of~$\Oh{k}$ and we use~$T_1^j,\ldots,T_k^j$,~$j\in[\psi]$, to refer to the equivalence classes of~$\equiv$. Then, we generalize \Cref{lem:niceAlloc} as follows.
	
	\begin{lemma}
		Let~$\alloc$ be an allocation. There always exists a \emph{nice} allocation~$\alloc'$ such that~$\cost(\alloc'_i) \leq \cost(\alloc_i)$ for every~$i\in\agents$. An allocation~$\alloc'$ is nice if for each pair of distinct agents~$i,j\in\agents$ and for every~$w\in[\psi]$ there exists at most one~$t\in[k]$ so that~$|\alloc'_i\cap T_t^w| > 0$ and~$|\alloc'_j\cap T_t^w| > 0$.
	\end{lemma}
	
	Observe that since we are always replacing leaves of the same weight, the proof remains completely the same. Also, the lemma implies a bound on the number of agents in terms of our parameters: namely, if for some weight value~$w\in[\psi]$ there are more than~$\binom{k}{2}$ agents servicing at least two different types of leaves, then at least two of them are servicing the same pair of~$T_t^w$ and~$T_{t'}^w$ and hence, we can apply the previous lemma to get rid of that. Consequently, there are~$\psi\cdot \binom{k}{2}$ important agents, and all the remaining agents are servicing at most one leaf type for each weight. The number of standard agents is, therefore, at most~$(k+1)^\psi$, and we have a variable~$y_j$ for every one of them. For every~$j\in[(k+1)^\psi]$, we use~$Y_j$ to represent the set of leaf types, the standard agent of type~$j$ services. We need to incorporate the weights into the ILP to complete the algorithm. This can be done by changing the conditions as follows (we set~$w_i = \wFn(\{L_i,\parent(L_i)\})$):
	\begin{align*}
		\forall i\in[\eta]\colon&& \sum_{t\in L_i} ( (x_i^t + 1)\cdot w_i + d_t ) &\leq q\\
		\forall t\in[\psi k]\colon&& \sum_{i\colon t\in L_i} ( x_i^t + 1 )\cdot w_i + \sum_{j\colon t\in Y_j}y_t\cdot\lfloor\frac{q - d_t}{w_i}\rfloor &= |T_t|\\
		&& \sum_{j} y_j + \eta &\leq \numAgents
	\end{align*}
	This time, we cannot try all possible values of~$q$, as the upper-bound on the MMS-share can be exponential in the number of orders at worst. However, we can find the correct value of~$q$ by a binary search in the interval~$[\max_{v\in\orders}\dist(v,h),\numOrders\cdot\max_{e\in E}\wFn(w)]$. The proof of correctness is then analogous to the unweighted setting, and the number of variables in the ILP formulation is again bounded in terms of our parameters. Hence, the result follows.
\end{proof}

To finalize the complexity picture with respect to the number of internal vertices, in our next result, we show that the parameter the number of different weights cannot be dropped while keeping the problem tractable; in particular, we show that if the number of edge-weights is not bounded, then an efficient algorithm cannot exist already for topologies with a single internal vertex. The reduction is from the \probName{$3$-Partition} problem~\cite{GareyJ1975}.

\begin{theorem}\label{thm:MMS:NPh:stars}
	Unless~$\textsf{P}=\NP$, there is no polynomial-time algorithm that finds an MMS and non-wasteful allocation, even if~$G$ is a weighted star and the weights are encoded in unary\footnote{We say that the weights are unary encoded if~$\wFn(w) \in \Oh{\numOrders + \numAgents}$ for every~$e\in E$.}.
\end{theorem}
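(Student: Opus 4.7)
The plan is to reduce from the strongly \NPh \probName{$3$-Partition} problem~\cite{GareyJ1975}, whose instance is a multiset of~$3m$ positive integers~$a_1,\ldots,a_{3m}$ with~$\sum_{i=1}^{3m} a_i = mB$ and~$B/4 < a_i < B/2$ for each~$i$, and the task is to decide whether the multiset can be partitioned into~$m$ triples, each summing to exactly~$B$. Since \probName{$3$-Partition} remains \NPh when the integers are encoded in unary, any polynomial-time reduction using edge-weights of magnitude~$\Oh{\numOrders+\numAgents}$ suffices for our purposes. By \Cref{lem:family} combined with \Cref{thm:MMS:MMSandNW:equivalent}, it is enough to show that deciding whether the MMS-share of the produced instance is at most a given~$q\in\N$ is \NPh.

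Given a \probName{$3$-Partition} instance, I would construct an instance~$\mathcal{I}$ of our problem as follows. The topology~$G$ is a star with center~$h$ (which is also the hub) and~$3m$ leaves~$\ell_1,\ldots,\ell_{3m}$, where the edge~$\{h,\ell_i\}$ has weight~$\wFn(\{h,\ell_i\}) = a_i$. The set of agents contains exactly~$\numAgents=m$ agents, and we ask whether~$\MMSshare(\mathcal{I}) \leq B$. The weights are encoded in unary, as they come directly from the unary-encoded integers of the \probName{$3$-Partition} instance. Note that on a star with the hub in the center, the cost of any bundle~$\alloc_i$ equals~$\sum_{\ell_j \in \alloc_i} a_j$, since the shortest walk that services~$\alloc_i$ traverses each incident edge exactly twice.

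For the forward direction, if the \probName{$3$-Partition} instance admits a partition into triples~$P_1,\ldots,P_m$, the allocation that assigns the leaves corresponding to~$P_i$ to agent~$i$ yields~$\cost(\alloc_i)=B$ for every~$i\in\agents$, so~$\MMSshare(\mathcal{I}) \leq B$. For the backward direction, suppose~$\MMSshare(\mathcal{I}) \leq B$ and let~$\alloc$ be a witnessing allocation. Because~$\sum_{i\in\agents}\cost(\alloc_i) = \sum_{j=1}^{3m} a_j = mB$ and each of the~$m$ bundles costs at most~$B$, every bundle must have cost exactly~$B$. The bound~$a_j > B/4$ implies that no bundle can contain four or more leaves (as the cost would exceed~$B$), while~$a_j < B/2$ implies that no bundle can consist of only two leaves (as the cost would be strictly less than~$B$); an empty or one-leaf bundle is likewise impossible since the total over all~$m$ bundles must be~$mB$. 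Hence each bundle contains exactly three leaves, whose weights sum to~$B$, yielding the desired partition.

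The main obstacle is the combinatorial argument enforcing exactly three leaves per bundle, but this is precisely what the classical gap condition~$B/4 < a_i < B/2$ in \probName{$3$-Partition} is tailored to deliver, so no further structural machinery is needed. The reduction clearly runs in polynomial time, uses a star topology with unary-encoded weights, and by \Cref{lem:family} the corresponding search problem cannot admit a polynomial-time algorithm unless~$\textsf{P}=\NP$.
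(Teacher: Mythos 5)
Your proposal is correct and follows essentially the same route as the paper: a reduction from (strongly \NPh, unary) \probName{$3$-Partition} to a weighted star with the hub at the center, one leaf of weight~$a_i$ per element, $q=B$, and the backward direction forced by the total-cost counting argument together with the~$B/4 < a_i < B/2$ gap. The only cosmetic difference is that you rule out bundles of size at least four via~$a_i > B/4$ while the paper argues each agent services at least three orders and then counts, but both yield exactly three leaves per bundle, so the arguments coincide in substance.
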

\begin{proof}
	We show the hardness by a polynomial reduction from the \probName{$3$-Partition} problem. In this problem, we are given a multi-set~$A=\{a_1,\ldots,a_{3k}\}$ of integers such that~$\sum_{i\in[k]} a_i = k\cdot B$ for some~$B\in\N$, and our goal is to decide whether there exists a~$k$-sized set~$\mathcal{S}\subseteq \binom{A}{3}$ such that for every~$S\in\mathcal{S}$ we have~$\sum_{a_i\in S} a_i = B$ and~$\bigcup_{S\in\mathcal{S}} S = A$. It follows that each element~$a_i\in A$ is a member of exactly one set~$S\in\mathcal{S}$. The problem is known to be \NPc even if~$B/4 < a_i < B/2$ for every~$a_i\in A$~\cite{GareyJ1975}.
	
	Let~$\mathcal{J}$ be an instance of the \probName{$3$-Partition} problem. We create an equivalent instance~$\mathcal{I}$ of our problem as follows. First, we construct the graph~$G$. For every element~$a_i\in A$, we create one order~$v_i$ and connect it by an edge of weights~$a_i$ with the hub~$h$. Clearly, the graph is a star with~$h$ being its center. Next, we set~$N = [k]$, and we ask whether the MMS-share is at most~$q = B$.
	
	For correctness, let~$\mathcal{J}$ be a \Yes-instance and let~$\mathcal{S}=(S_1,\ldots,S_k)$ be a solution. We create an allocation~$\alloc$ so that for every~$S_i=(a_{i_1},a_{i_2},a_{i_3})$, we set~$\alloc_i = \{v_{i_1},v_{i_2},v_{i_3}\}$. Now, we claim that the worst bundle is of cost~$q$, i.e.,~$\max_{i\in[k]} \cost(\alloc_i) \leq q = k\cdot B$. Let~$\alloc_i$ be a bundle of the maximum cost across all bundles of the allocation~$\alloc$ and, for the sake of contradiction, assume that~$\cost(\alloc_i) > q$. By the construction of~$\alloc$, it holds that~$|\alloc_i| = 3$. So, suppose that~$\alloc_i = \{v_{i_1},v_{i_2},v_{i_3}\}$. By construction,~$\cost(\alloc_i) = \wFn(\{h,v_{i_1}\}) + \wFn(\{h,v_{i_2}\}) + \wFn(\{h,v_{i_3}\}) = a_{i_1} + a_{i_2} + a_{i_3}$. However,~$\cost(\alloc_i) > q$ implies that~$\sum_{a\in S_i} a > q$, which contradicts that~$\mathcal{S}$ is a solution for~$\mathcal{J}$. Therefore,~$\cost(\alloc_i) \leq q$, showing that~$\MMSshare(\mathcal{I}) \leq q$.
	
	In the opposite direction, let~$\mathcal{I}$ be an instance such that the MMS-share of this instance is at most~$q$, and~$\alloc$ be an allocation with~$\max_{i\in[k]} \cost(\alloc_i) \leq q$. 
	First, suppose that there exists a bundle~$\alloc_i$ such that~$\cost(\alloc_i) < q$. Then, it holds that~$\sum_{i\in[k]}\cost(\alloc_i) \leq (k\cdot q) - 1$; however~$\cost(\orders )= k\cdot q$. That is, at least one order is not serviced by~$\alloc$, and we obtain a contradiction with~$\alloc$ being an allocation. Consequently, the cost of each bundle is exactly~$q$. 
	Now, suppose that there is an agent~$i$ servicing only two orders. Since~$\wFn{\{h,v_j\}} < B/2 = q/2$ for every~$j\in[3k]$, we obtain that~$\cost(\alloc_i) < B/2 + B/2 < q$, which contradicts that the cost of each bundle is exactly~$q$. Hence, each agent services at least three orders. By simple counting argument, we obtain that each agent services exactly~$3$ orders, as there are~$k$ candidates and~$3k$ orders. Now, we create a set~$\mathcal{S} = (S_1,\ldots,S_k)$ such that~$S_i = \{ a_{j} \mid v_{j} \in \alloc_i \}$, and claim that~$\mathcal{S}$ is a solution for~$\mathcal{J}$. Clearly, each~$S_i$ of size three and~$\bigcup_{i=1}^{k} S_i = A$. What remains to show is that~$\sum_{a\in S_i} a = B$ for every~$S_i\in \mathcal{S}$. For the sake of contradiction, let there be a set~$S_i$ such that~$\sum_{a\in S_i} a \not= B$. Then, the cost of corresponding bundle~$\alloc_i$ is~$\cost(\alloc_i) = \wFn(\{h,v_{i_1}\}) + \wFn(\{h,v_{i_2}\}) + \wFn(\{h,v_{i_3}\}) = a_{i_1} + a_{i_2} + a_{i_3} \not= B$; however~$B=q$ and we already showed that the cost of every bundle in every solution is exactly~$q$. That is, for every~$S_i\in\mathcal{S}$ it holds that~$\sum_{a\in S_i} a = B$, which shows that~$\mathcal{S}$ is indeed a solution for~$\mathcal{J}$.
\end{proof}

\section{Small Number of Agents or Orders}
\label{sec:AgentsOrders}

In real-life instances, especially those related to applications such as charity work, it is reasonable to assume that the number of orders or the number of agents is relatively small. Therefore, in this section, we focus on these two parameterizations and provide a complete dichotomy between tractable and intractable cases.

First, assume that our instance possesses a bounded number of orders~$m$. Then, the topology has at most~$m$ leaves, and therefore, we can directly use the \FPT algorithm from \Cref{thm:MMS:FPT:numLeaves} and efficiently solve even weighted instances.

\begin{corollary}
	When parameterized by the number of orders~$\numOrders$, an MMS and non-wasteful allocation can be found in \FPT time, even if the instance is weighted.
\end{corollary}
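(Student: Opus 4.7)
The plan is to reduce the corollary to a direct invocation of \Cref{thm:MMS:FPT:numLeaves}. The key observation is that in any instance, the number of leaves~$\numLeaves$ of the topology is bounded above by the number of orders~$\numOrders$. Indeed, if the hub~$h$ is an internal vertex of~$G$, then every leaf of~$G$ lies in~$\orders$, and hence~$\numLeaves \leq \numOrders$ trivially. The only subtle case is when~$h$ is itself a leaf of~$G$, but here we can first invoke \Cref{lem:MMS:ifHubIsLeafWeCanReduce} to pass to an equivalent instance in which the hub has been relocated to the unique neighbor of the original~$h$ and~$h$ itself has been deleted; this preserves the MMS-share up to an additive constant and reduces both~$\numOrders$ and~$\numLeaves$, after which~$h$ is internal and the inequality~$\numLeaves \leq \numOrders$ holds.

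With this structural bound in hand, the remainder of the argument is immediate. I would apply \Cref{thm:MMS:FPT:numLeaves} to the (possibly reduced) instance to compute an MMS and non-wasteful allocation in time
\[
2^{\Oh{\numLeaves}} \cdot (\numAgents+\numOrders)^{\Oh{1}} \;\leq\; 2^{\Oh{\numOrders}} \cdot (\numAgents+\numOrders)^{\Oh{1}},
\]
which is \FPT when parameterized by~$\numOrders$. Since \Cref{thm:MMS:FPT:numLeaves} explicitly makes no assumption on the edge weights, the resulting algorithm handles arbitrary weighted instances, as required. Finally, if we reduced the instance via \Cref{lem:MMS:ifHubIsLeafWeCanReduce}, we lift the obtained allocation back to the original instance by reattaching~$h$ in the obvious way; by the lemma, this preserves fairness and non-wastefulness.

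Because the entire argument is a one-line consequence of the already-established FPT algorithm together with the trivial inequality~$\numLeaves \leq \numOrders$, there is no genuine technical obstacle; the only thing worth flagging explicitly in the write-up is the edge case in which the hub is a leaf, which is why the reduction step via \Cref{lem:MMS:ifHubIsLeafWeCanReduce} is worth a sentence.
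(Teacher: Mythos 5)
Your proposal is correct and matches the paper's own argument: the paper likewise observes that the topology has at most~$\numOrders$ leaves and directly invokes \Cref{thm:MMS:FPT:numLeaves}, which makes no assumption on edge weights. Your extra handling of the hub-is-a-leaf case via \Cref{lem:MMS:ifHubIsLeafWeCanReduce} is harmless but not needed, since even then~$\numLeaves \leq \numOrders + 1$, which is still bounded in the parameter.
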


A more interesting restriction from both the practical and theoretical perspective is when the number of agents is bounded. Our next result rules out the existence of a polynomial-time algorithm already for instances with two agents and uses a very simple topology. The reduction is from a suitable variant of the \probName{Equitable Partition} problem~\cite{DeligkasEKS2024}.

\begin{theorem}
	\label{thm:MMS:NPh:twoAgents}
	Unless~$\textsf{P}=\NP$, there is no polynomial-time algorithm that finds an MMS and non-wasteful allocation, even if~$G$ is a weighted star and~$|\agents| = 2$.
\end{theorem}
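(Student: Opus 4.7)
The plan is to reduce in polynomial time from the \probName{Partition} problem (equivalently, a suitable special case of the \probName{Equitable Partition} variant of \citet{DeligkasEKS2024}): given positive integers $a_1,\ldots,a_\numOrders$ with $\sum_i a_i = 2B$, decide whether $[\numOrders]$ admits a bipartition $(S_1,S_2)$ with $\sum_{i\in S_j} a_i = B$ for $j\in\{1,2\}$. By \Cref{lem:family} combined with \Cref{thm:MMS:MMSandNW:equivalent}, it is enough to prove that deciding whether $\MMSshare(\mathcal{I}) \le q$ is \NPh on weighted stars with two agents.

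From a \probName{Partition} input I build $\mathcal{I} = (\agents, G, h)$ as follows: $G$ is a star whose centre is the hub $h$, and for every $i\in[\numOrders]$ I add a leaf $v_i$ joined to $h$ by an edge of weight $a_i$. I set $\agents = \{1,2\}$ and the target MMS-share to $q = B$.

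The key observation is that on a star the cost of any bundle $S$ equals $\sum_{v_i\in S} a_i$: the shortest walk from $h$ visiting every leaf of $S$ and returning to $h$ traverses each incident edge exactly twice, and the definition of cost then divides by two. For the forward direction, a balanced partition $(S_1,S_2)$ yields the allocation $\alloc_j = \{v_i : i\in S_j\}$ with $\cost(\alloc_1) = \cost(\alloc_2) = B$, so $\MMSshare(\mathcal{I}) \le B$. Conversely, any allocation $\alloc$ witnessing $\MMSshare(\mathcal{I}) \le B$ must satisfy $\cost(\alloc_1) + \cost(\alloc_2) = \sum_i a_i = 2B$ because every leaf is allocated; combined with $\max_j \cost(\alloc_j) \le B$ this forces $\cost(\alloc_j) = B$ for both $j$, recovering a valid balanced partition.

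The construction is very short, so there is no real technical obstacle; the only design choice is which variant of Partition to start from. Plain \probName{Partition} is only weakly \NPh but already gives the theorem as stated, since the topology is weighted and the statement places no restriction on the encoding of weights. Reducing instead from the \probName{Equitable Partition} source mentioned in the hint would additionally immunise the result against pseudo-polynomial algorithms while keeping the construction essentially identical, the only difference being that the cardinality constraint of the source problem is mirrored by an analogous structural constraint on the two bundles.
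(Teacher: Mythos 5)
Your proposal is correct and essentially identical to the paper's proof: the same star construction with edge weights equal to the element values, two agents, and target $q=B$, with hardness of finding an allocation then following via \Cref{lem:family} and \Cref{thm:MMS:MMSandNW:equivalent}. The only difference is the source problem—you reduce from plain \probName{Partition} while the paper uses the \probName{Equitable Partition} variant (which forces an extra argument that the two bundles have equal cardinality)—and since both sources are only weakly \NPh and the theorem places no restriction on the weight encoding, your simpler choice fully suffices for the statement as given.
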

\begin{proof}
	We reduce from a variant of the \probName{Equitable Partition} problem. In \probName{Equitable Partition}, we are given a set~$A=\{a_1,\ldots,a_{2k}\}$ of integers such that~$\sum_{a_i\in A} a_i = 2\cdot B$, and the goal is to decide whether there exists a set~$S\subseteq [2k]$,~$|S| = k$, such that~$\sum_{a\in S} a = \sum_{a\in [2k]\setminus S} a = B$. This problem is known to be weakly \NPh even if for every~$S\subseteq [2N]$ such that~$|S| < k$ it holds that~$\sum_{a\in S} a < B$~\cite{DeligkasEKS2024}.
	
	The construction is very similar to the one used in the proof of \Cref{thm:MMS:NPh:stars}. Again, we create the hub~$h$ and exactly one order~$v_i$ for every element~$a_i\in A$. Each order~$v_i$ is connected to the hub with an edge of weight~$a_i$. To complete the construction, we set~$\agents = \{1,2\}$, and we ask whether the MMS-share of~$\mathcal{I}$ is at most~$q=B$.
	
	Let~$\mathcal{J}$ be a \Yes-instance an let~$S\subseteq [2n]$ be a solution. We create an allocation~$\alloc$ such that~$\alloc_1 = \{ v_i \mid a_i\in S \}$ and~$\alloc_2 = (\orders)\setminus \alloc_1$, and we claim that the maximum cost of a bundle in~$\alloc$ is at most~$q = B$. By the construction of weights and by the fact that~$S$ is a solution for~$\mathcal{J}$, we obtain that it is indeed the case.
	
	In the opposite direction, let~$\mathcal{I}$ be a \Yes-instance and let~$\alloc$ be an allocation such that~$\max\{\cost(\alloc_1),\cost(\alloc_2)\} \leq q$. Without loss of generality, let~$\cost(\alloc_1) \geq \cost(\alloc_2)$. We show that a)~$|\alloc_1| = |\alloc_2|$ and b)~$\cost(\alloc_1) = \cost(\alloc_2)$. For the later property, observe that an allocation that is a social optimal has a total cost of~$2B = 2q$. We assumed that~$\alloc$ is a solution, and therefore, the worst bundle is of the cost at most~$q$. As there are only two bundles, it directly follows that both bundles cost exactly~$q$. Otherwise, some orders would not be serviced, which is not allowed. For the former claim, assume, without loss of generality, that~$|\alloc_1| > |\alloc_2|$. Since there are~$2k$ order, it must hold that~$|\alloc_2| \leq k - 1$. But then, by our assumption about the instance of \probName{Equitable Partition}, it must hold that~$\cost(\alloc_2) < q$. This contradicts the fact that both bundles have the same cost. Now, it remains to take~$S = \{ a_i \mid v_i \in \alloc_1 \}$ and we have a solution for~$\mathcal{J}$, finishing the proof. 
\end{proof}

For unweighted instances, though, \citet[Theorem 5]{HosseiniNW2024} introduced an \XP algorithm capable of finding an MMS allocation. That is, if the instance is unweighted, then for every constant number of agents, there is an algorithm that finds an MMS and non-wasteful allocation in polynomial time. Their result raises the question of whether this parameterization admits a fixed-parameter tractable algorithm. We answer this question negatively by showing that, under the standard theoretical assumptions, \FPT algorithm is not possible, and therefore, the algorithm of \citet{HosseiniNW2024} is basically optimal. Moreover, the topology created in the following hardness proof is so that if we remove a single vertex, we obtain a disjoint union of paths. This time, we reduce from \probName{Unary Bin Packing} parameterized by the number of bins~\cite{JansenKMS2013}.

\begin{theorem}
	\label{thm:MMS:Wh:numAgents}
	Unless~$\FPT=\W$, there is no \FPT algorithm with respect to the number of agents~$|\agents|$ that finds an MMS and non-wasteful allocation, even if the instance is unweighted and the distance to disjoint paths of~$G$ is one.
\end{theorem}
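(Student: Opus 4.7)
My plan is to prove \W[1]-hardness via a parameterized reduction from \probName{Unary Bin Packing}, which was shown to be \Wh when parameterized by the number of bins by \citet{JansenKMS2013}. Recall that in that problem we are given item sizes $s_1,\ldots,s_m$ encoded in unary, a capacity $C$, and a number of bins $k$; the question is whether the items can be partitioned into $k$ groups each of total size at most $C$. Given such an instance, I will build an unweighted topology rooted at a hub $h$ by attaching, for every item $i\in[m]$, a disjoint path $P_i$ on $s_i$ new vertices to $h$ (identifying one endpoint of $P_i$ as a child of $h$). Since the sizes are unary, the construction is polynomial-time. I set $|\agents|=k$ and ask whether $\MMSshare(\mathcal{I})\le C$. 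Deleting the single vertex $h$ leaves precisely the disjoint union $P_1,\ldots,P_m$, so the distance to disjoint paths equals one, and the parameter $|\agents|=k$ agrees with the parameter of the source problem.

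The main step, which I expect to be the only delicate point, is the following structural claim: in every non-wasteful allocation, each path $P_i$ is serviced entirely by a single agent. Indeed, let $\ell_i$ be the leaf of $P_i$ and let agent $j$ be the one servicing $\ell_i$; if some other agent $j'\ne j$ serviced any vertex $v\in P_i$, then by \Cref{def:NW} the agent $j'$ would have to service some leaf of $G^v$, but $\ell_i$ is the only leaf of $G^v$ and is already taken by $j$, a contradiction. Consequently, non-wasteful allocations are in bijection with assignments of items to $k$ bins, and the cost $\cost(\alloc_j)$ equals the sum of $s_i$ over the paths assigned to $j$ (the shortest walk traverses each assigned $P_i$ down-and-back, of total length $2s_i$, and the cost is this length divided by two).

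From this structural correspondence, $\MMSshare(\mathcal{I})\le C$ holds if and only if the \probName{Unary Bin Packing} instance is a \Yes-instance, and a non-wasteful MMS allocation directly exhibits a valid packing. The argument of \Cref{lem:family} carries over verbatim to the parameterized setting: any \FPT algorithm in $|\agents|$ that produces an MMS and non-wasteful allocation can be used to decide whether $\MMSshare(\mathcal{I})\le C$ in \FPT time, which would yield an \FPT algorithm in $k$ for \probName{Unary Bin Packing} and contradict its \Wh-hardness. The reduction preserves the unweighted nature of the topology and the distance-to-disjoint-paths being one, completing the claim.
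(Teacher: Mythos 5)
Your proposal is correct and follows essentially the same route as the paper: a parameterized reduction from \probName{Unary Bin Packing} (parameter: number of bins) in which each item of size $s_i$ becomes a path on $s_i$ vertices hanging off the hub, so that deleting $h$ leaves a disjoint union of paths and non-wastefulness forces each path to be serviced by a single agent. The only cosmetic difference is that you argue the one-agent-per-path property for all non-wasteful allocations directly, whereas the paper first normalizes an arbitrary solution via \Cref{thm:NW:easyToTurnInto} and then uses a counting argument; both yield the same correspondence between allocations and bin assignments.
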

\begin{proof}
	We show the hardness by a parameterized reduction from the \probName{Unary Bin Packing} problem. In this problem, we are given a number of bins~$k$, a capacity of every bin~$B$, and a multi-set of items~$a_1,\ldots,a_\ell$ such that~$\sum_{i=1}^\ell a_i = k\cdot B$. The goal is to find an assignment~$\alpha\colon A\to[k]$ such that for every~$j\in[k]$ it holds that~$\sum_{a_i \in A \mid \alpha(a_i) = j} a_i = B$. This problem is known to be \Wh when parameterized by the number of bins~$k$~\cite{JansenKMS2013}. Without loss of generality, we can assume that for each item~$a_i$ it holds that~$a_i \geq 1$, as zero-valued items can be assigned to an arbitrary bin without changing the solution.
	
	Let~$\mathcal{J} = (k,B,A)$ be an instance of \probName{Unary Bin Packing} problem. We construct an equivalent instance of our problem as follows. For each item~$a_i\in A$, we create an \emph{item-gadget}~$P_i$, which is a simple path with~$a_i$ vertices~$v_i^1,\ldots,v_i^{a_i}$. Next, we add the hub~$h$ and connect it with the vertex~$v_i^1$ of every item-gadget~$P_i$. Finally, the number of agents~$\numAgents$ is~$k$, and we ask whether the MMS-share of this instance is at most~$q=B$.
	
	For correctness, suppose that~$\mathcal{J}$ is a \Yes-instance and~$\alpha$ is a solution assignment. We create an allocation~$\alloc$ such that we set~$\alloc_j = \{  V(X_i) \mid \alpha(a_i) = j \}$ and we claim that~$\max_{j\in\agents} \cost(\alloc_j) \leq q$. For the sake of contradiction, let there be a bundle~$\alloc_j$ with~$\cost(\alloc_j) > q$. By the construction of~$\alloc_j$, it contains all vertices of some item-gadgets~$P_{i_1},\ldots,P_{i_{\ell'}}$. The cost of servicing the whole item-gadget is equal to the number of its vertices. Therefore, it must hold that~$|V(P_{i_1})| + \cdots + |V(P_{i_{\ell'}})| > q$. However, the number of vertices is equal to the corresponding item, and hence, also~$a_{i_1} + \cdots + a_{i_{\ell'}} > q$. However, we constructed~$\alloc_j$ such that all~$a_{i}$ corresponding to~$P_i\in \alloc_j$ are allocated to the same bin in~$\mathcal{J}$. This contradicts that~$\alpha$ is a solution for~$\mathcal{J}$, as~$q = B$. Therefore, there is no bundle whose cost exceeds~$q$, and~$\mathcal{I}$ is also a \Yes-instance.
	
	In the opposite direction, let~$\mathcal{I}$ be a \Yes-instance and~$\alloc$ be an allocation with~$\max_{j\in \agents} \cost(\alloc_j) \leq q$. Without loss of generality, we can assume that~$\alloc$ is a non-wasteful allocation, as otherwise, we use \Cref{thm:NW:easyToTurnInto} to turn it into a non-wasteful one without increasing its maximum cost. By the definition of non-wasteful allocations, now every item-gadget is serviced by exactly one agent. Observe that the size of each bundle must be exactly~$q$, as~$\cost(\orders) = \numAgents\cdot q$, we have~$\numAgents$ agents, and~$\max_{j\in \agents} \cost(\alloc_j) \leq q$. We define an assignment~$\alpha$ such that if the item-gadget~$P_i$ is serviced by an agent~$j$, then we set~$\alpha(a_i) = j$. Now, we argue that~$\alpha$ is a correct solution for~$\mathcal{J}$. For the sake of contradiction, assume the opposite, that is, assume that there exists a bin~$j$ such that~$\sum_{a_i\in A \mid \alpha(a_i) = j} > B$. Then, by the definition of the assignment, the cost of the agent's~$j$ bundle is~$\cost(\alloc_j) > B = q$, which contradicts that~$\alloc$ is a solution. Therefore, such a bin cannot exist, and we showed that~$\mathcal{J}$ is also a \Yes-instance.
	
	Finally, it is easy to see that the reduction can be done in polynomial time. Moreover, we used~$\numAgents = k$, and therefore, the presented reduction is indeed a parameterized reduction, and our problem is therefore \Wh when parameterized by the number of agents~$\numAgents$. Also, observe that if we remove~$h$ from~$G$, we obtain a disjoint union of item-gadgets -- as every item gadget is a path, we get that the distance to disjoint paths of this graph is exactly one. This finishes the proof.
\end{proof}

\section{Restricted Topologies}
\label{sec:topologies}

In this section, we take a closer look at the computational (in)tractability of fair distribution of delivery orders via different restrictions of the topology. Apart from the theoretical significance of such an approach~\cite{IgarashiZ2024,ZhouWLL2024,Schierreich2024}, the study is also driven by a practical appeal. It arises in multiple problems involving maps or city topologies that the underlying graph model usually possesses certain structural properties that can be exploited to design efficient algorithms for problems that are computationally intractable in general (see, e.g., \cite{ElkindPTZ2020,AgarwalEGISV2021,KnopS2023} for a few examples of such studies).

\subsection{Star-Like Topologies}

Topologies isomorphic to \textit{stars} are particularly interesting for applications where, after processing each order, an agent must return to the hub. One such example is moving companies, where loading a vehicle with more than one order at a time is usually physically impossible.

In contrast to the previous intractability for weighted instances, the following result shows that if~$G$ is an unweighted star, then MMS and non-wasteful allocation can be found efficiently.

\begin{proposition}\label{thm:MMS:P:unweightedStar}
	If~$G$ is a star and the input instance is unweighted, an MMS and non-wasteful allocation can be found in linear time.
\end{proposition}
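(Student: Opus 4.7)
The plan is to reduce the problem to a trivial balanced partition of the orders. First, I would dispatch the edge case in which the hub~$h$ is a leaf of~$G$ by invoking \Cref{lem:MMS:ifHubIsLeafWeCanReduce}; this lets me assume, without loss of generality, that~$h$ is the center of the star (the residual trivial cases, stars with at most one leaf, can be handled directly).

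With~$h$ at the center, every order is a leaf at distance~$1$ from~$h$, so the shortest walk from~$h$ that visits~$k$ chosen orders and returns to~$h$ has length exactly~$2k$. Hence the cost of any bundle equals its cardinality, and therefore $\MMSshare(\mathcal{I}) = \lceil \numOrders/\numAgents \rceil$. This value is attained by any balanced partition: assign $\lceil \numOrders/\numAgents \rceil$ orders to $\numOrders \bmod \numAgents$ agents and $\lfloor \numOrders/\numAgents \rfloor$ orders to the remaining agents. Such an allocation can be constructed in $\Oh{\numOrders + \numAgents}$ time by a single pass through the orders.

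Non-wastefulness of the resulting allocation is immediate from \Cref{def:NW}: every order $v \in \orders$ is a leaf of~$G$, so any agent servicing~$v$ trivially services a leaf of~$G^v$, namely~$v$ itself. Combining these observations yields the claimed linear-time algorithm producing an MMS and non-wasteful allocation. I do not anticipate a substantial obstacle here: the only mild subtlety is the hub-at-leaf case, which is already handled by the reduction of \Cref{lem:MMS:ifHubIsLeafWeCanReduce}; the remainder is an essentially tautological balanced-partition argument resting on the fact that on an unweighted star the cost function reduces to bundle cardinality.
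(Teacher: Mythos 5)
Your proposal is correct and follows essentially the same route as the paper: reduce via \Cref{lem:MMS:ifHubIsLeafWeCanReduce} to the case where the hub is the star's center, observe that the cost of a bundle equals its cardinality so that $\MMSshare(\mathcal{I}) = \lceil \numOrders/\numAgents \rceil$, and output a balanced partition in linear time. Your explicit remark that every allocation on a star is trivially non-wasteful (since all orders are leaves) is a small addition the paper leaves implicit, but the argument is the same.
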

\begin{proof}
	By \Cref{lem:MMS:ifHubIsLeafWeCanReduce}, we can assume that the hub~$h$ is the star's center, as otherwise, we can reduce the instance. Now, under this assumption, we claim that for every instance~$\mathcal{I}$ where the topology~$G$ is an unweighted star, it holds that~$\MMSshare(\mathcal{I}) = \lceil \numOrders / \numAgents \rceil$. Let~$\leaves(G) = \{\ell_1,\ell_2,\ldots,\ell_\numLeaves\}$. Since~$G$ is a star and~$h$ is its center, the cost of servicing a leaf~$\ell_j$ is completely independent of the cost of servicing other leaves. Therefore, for every~$A\subseteq \orders$, we have that~$\cost(A) = |A|$. Now, we create an allocation~$\alloc$ so that for every~$i\in[\numOrders]$, we set~$\ell_i \in \alloc_{i\bmod \numAgents}$. By the construction, the size of the biggest bundle is~$\lceil \numOrders/\numAgents \rceil$, and consequently, the MMS-share of~$\mathcal{I}$ is at most~$\lceil \numOrders/\numAgents \rceil$. To finalize the proof, assume that~$\MMSshare(\mathcal{I}) < \lceil \numOrders/\numAgents \rceil$. Then,~$\sum_{i\in[\numAgents]} \alloc_i < \cost(\orders)$, meaning that there is an order which is not serviced in~$\alloc$. This is impossible, so such an allocation cannot exist, and~$\MMSshare(\mathcal{I}) \geq \lceil \numOrders/\numAgents \rceil$.
\end{proof}

The previous positive results naturally cannot be generalized to the weighted setting as of \Cref{thm:MMS:NPh:twoAgents} already for instances with two agents. However, the hardness in \Cref{thm:MMS:NPh:twoAgents} heavily relies on the fact that the weights of the edges are exponential in the number of orders. This is not a very natural assumption for real-life instances. In practical instances, it is more likely that the weights will be relatively small compared to the number of orders. Fortunately, we show that, for such instances, an efficient algorithm exists for any constant number of agents. The algorithm uses as a subprocedure the \probName{Multi-Way Number Partition} problem, where the goal is to partition a set of numbers~$\mathcal{A}$ into subsets~$A_1,\ldots,A_k$ so that~$\max_{i\in[k]}\sum_{a\in A_i} a$ is minimized. This problem is known to admit a pseudo-polynomial time algorithm~\cite{Korf2009}.

\begin{theorem}
	For every constant~$c\in\N$, if~$G$ is a weighted star and~${|\agents| = c}$, an~MMS and non-wasteful allocation can be found in pseudo-polynomial time.
\end{theorem}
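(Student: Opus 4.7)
The plan is to reduce the problem to the min--max variant of \probName{Multi-Way Number Partitioning} with $c$ parts and then invoke a known pseudo-polynomial algorithm.

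First, I would apply \Cref{lem:MMS:ifHubIsLeafWeCanReduce} to assume without loss of generality that $h$ is the center of the star, so that every order is a leaf adjacent to $h$ (if the input hub is instead one of the star's leaves, the lemma reduces the instance to one with the center as the hub, shifting $\MMSshare$ by a known additive constant). Under this assumption, for any bundle $B \subseteq \orders$ the shortest walk servicing $B$ has the form $h, v_{i_1}, h, v_{i_2}, h, \ldots, v_{i_k}, h$, so by the definition of $\cost$ we obtain the additivity identity $\cost(B) = \sum_{v \in B} \wFn(\{h,v\})$. Moreover, since every order is a leaf of $G$, we have $\leaves(G^v) = \{v\}$ for each order $v$, so \Cref{def:NW} is satisfied vacuously by every allocation, making non-wastefulness a non-issue in this setting.

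Consequently, computing an MMS and non-wasteful allocation reduces to partitioning the multiset of weights $a_v = \wFn(\{h,v\})$, $v \in \orders$, into $c$ parts $A_1, \ldots, A_c$ so as to minimize $\max_{j \in [c]} \sum_{a \in A_j} a$; this is exactly \probName{Multi-Way Number Partitioning} with a constant number $c$ of parts, which admits a pseudo-polynomial algorithm by \citet{Korf2009}. For a self-contained argument, I would use a standard dynamic program with Boolean table $D[i][s_1, \ldots, s_{c-1}]$ recording whether the first $i$ weights can be distributed so that parts $1, \ldots, c-1$ attain sums exactly $s_1, \ldots, s_{c-1}$ (with part $c$ taking the remainder). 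Writing $W = \sum_{v \in \orders} a_v$, the table has $\numOrders \cdot (W+1)^{c-1}$ cells, each fillable in $\Oh{c}$ time by trying which part receives the $(i{+}1)$-th weight; this runs in time polynomial in $\numOrders + W$ for every fixed $c$.

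Given an optimal partition, I recover $\alloc$ by placing order $v$ in $\alloc_j$ whenever $a_v$ is assigned to $A_j$; by cost additivity $\max_j \cost(\alloc_j)$ equals the optimum partition value and hence $\MMSshare(\mathcal{I})$, while non-wastefulness is automatic by the vacuous observation above. The only technical point is verifying the DP bound and its correctness, which is routine; the actual content of the proof is the cost-additivity observation for stars with a central hub, which collapses the task onto the well-studied number-partitioning problem — so there is essentially no deep obstacle here, only bookkeeping.
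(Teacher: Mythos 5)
Your proposal is correct and matches the paper's proof in all essentials: exploit cost additivity on a star with the hub at the center, note that every allocation of a star is trivially non-wasteful, and reduce to the min--max \probName{Multi-Way Number Partition} problem solved by the pseudo-polynomial dynamic program of \citet{Korf2009}. The self-contained DP you sketch is a minor variation of the same cited algorithm, so there is nothing substantively different from the paper's argument.
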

\begin{proof}
	We reduce the problem of finding MMS and non-wasteful allocation of weighted stars to the \probName{Multi-Way Number Partition} problem. In this problem, we are given a set of integers~$A=\{a_1,\ldots,a_N\}$ and the goal is to partition~$A$ into sets~$A_1,\ldots,A_k$ such that the largest sum of elements of a part is minimized. It is known that the \probName{Multi-Way Number Partition} can be solved in~$\Oh{N\cdot (k-1) \cdot (\max A)^{k-1}}$ time (and space) using dynamic-programming~\cite{Korf2009}.
	
	Given an instance~$\mathcal{I}$ of fair distribution of delivery orders, we create an equivalent instance~$\mathcal{J}$ of the \probName{Multi-Way Number Partition} as follows. The set~$A$ contains a single integer~$a_i$ for every order~$\ell_i\in \orders$ and its value is~$\wFn(\{h,\ell_i\})$. To complete the construction, we set~$k = |\agents|$. It is now easy to see that the solution for~$\mathcal{J}$ that minimizes the maximum sum of elements of a part corresponds to an MMS allocation~$\mathcal{I}$, since~$G$ is a star and, therefore, the cost function is additive. Moreover, for stars, every allocation is trivially non-wasteful.
	
	The transformation can be done clearly in linear time. Therefore, given an instance~$\mathcal{I}$ of the fair distribution of delivery orders, we can transform it into an equivalent instance of the \probName{Multi-Way Number Partition}, use the dynamic programming algorithm running in~$\Oh{N\cdot (k-1) \cdot (\max A)^{k-1}}$ to solve~$\mathcal{J}$, and then, again in linear time, reconstruct the solution for~$\mathcal{I}$. Overall, this is an algorithm running in time~$\Oh{\numOrders} + \Oh{N\cdot (k-1) \cdot (\max A)^{k-1}}$, which is~$\Oh{\numOrders\cdot(c-1)\cdot (\max_{\ell\in\orders} \wFn(\{h,\ell\})^{c-1}}$, as~$N=\numOrders$,~$k=c$ and~$\max A = \max_{\ell\in\orders} \wFn(\{h,\ell\})$. That is, for every fixed~$c\in\N$ and the weights encoded in binary, an MMS and non-wasteful allocation can be found in polynomial time.
\end{proof}

\subsection{Bounded-Depth Topologies}

Stars rooted in their center are rather shallow trees; in particular, they are the only family of trees of depth one. It is natural to ask whether the previous algorithms can be generalized to trees of higher depth. In the following result, we show that this is not the case. In fact, our negative result is even stronger and shows that we cannot hope for a tractable algorithm already for unweighted instances of depth two and with diameter four. The reduction is again from the \probName{$3$-Partition} problem.

\begin{theorem}
	\label{thm:MMS:NPh:depthDiameter4pvc}
	Unless~$\textsf{P}=\NP$, there is no polynomial-time algorithm that finds an MMS and non-wasteful allocation, even if the instance is unweighted, the depth of~$G$ is two, the diameter of~$G$ is four, and the~$4$-path vertex cover number of~$G$ is one.
\end{theorem}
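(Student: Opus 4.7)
The plan is to adapt the reduction from \probName{$3$-Partition} used in \Cref{thm:MMS:NPh:stars}, adding an extra internal layer so that the stricter topological restrictions are realized. I restrict to \probName{$3$-Partition} instances with $B/4 < a_i < B/2$; given such $(A = \{a_1,\ldots,a_{3k}\},B)$ with $\sum a_i = kB$, I will construct an unweighted topology~$G$ and target value $q = B + 3$ so that $\MMSshare(\mathcal{I}) \leq q$ if and only if $A$ admits a valid $3$-partition, then invoke \Cref{lem:family} to transfer the hardness to the problem of finding MMS and non-wasteful allocations.

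For the construction, the hub~$h$ is the root; for every element~$a_i \in A$ I attach one internal vertex~$u_i$ directly to~$h$ and hang~$a_i$ unit-weight leaves beneath~$u_i$, and I set $\numAgents = k$. Every leaf sits at depth two, the longest shortest path has the form leaf--$u_i$--$h$--$u_j$--leaf of length four, and deleting the single vertex~$h$ leaves a disjoint union of stars, which contain no~$P_4$; hence the $4$-path vertex cover number of~$G$ is exactly one. Since \probName{$3$-Partition} is strongly \NPh, the number of leaves ($kB$) is polynomial in the input size.

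The central technical ingredient is the cost identity
\[
	\cost(\alloc_j) \;=\; |I_j| + \lambda_j,
\]
valid for every non-wasteful allocation~$\alloc$, where $I_j = \{\, i : \alloc_j \cap \leaves(G^{u_i}) \neq \emptyset \,\}$ and $\lambda_j = |\alloc_j \cap \leaves(G)|$. This follows because the walk of agent~$j$ enters each touched subtree exactly once (contributing cost one for the $h \leftrightarrow u_i$ segment) and makes a side-trip of cost one per serviced leaf, while non-wastefulness forbids placing~$u_i$ in~$\alloc_j$ without servicing any leaf below it. The forward direction is then immediate: assigning the three subtrees corresponding to the $j$-th triple of a $3$-partition to agent~$j$ yields $\cost(\alloc_j) = 3 + B = q$. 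For the backward direction, summing the identity with the bound $\sum_j \cost(\alloc_j) \leq kq = kB + 3k$ gives $\sum_j |I_j| \leq 3k$, and since every subtree must be touched by at least one agent, $\sum_j |I_j| \geq 3k$. Equality forces each~$T_i$ to be serviced in its entirety by a single agent~$\alpha(i)$, after which the per-agent bound $|\alpha^{-1}(j)| + \sum_{i \in \alpha^{-1}(j)} a_i \leq B + 3$ must hold with equality for every~$j$, and the restriction $B/4 < a_i < B/2$ pins $|\alpha^{-1}(j)| = 3$ and $\sum_{i \in \alpha^{-1}(j)} a_i = B$.

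The main obstacle will be the double-counting step in the backward direction: I have to preclude a single subtree~$T_i$ from being serviced by two or more agents, because each such split contributes an extra unit to $\sum_j |I_j|$ above the baseline $3k$ while the aggregate leaf contribution $kB$ is rigid. Only by coupling the cost identity with the global optimality bound does the no-splitting conclusion emerge, after which the $B/4 < a_i < B/2$ gap finishes the argument via the same elementary case analysis on bundle cardinalities as in \Cref{thm:MMS:NPh:stars}.
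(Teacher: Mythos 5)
Your proposal is correct and follows essentially the same route as the paper: the identical gadget (a star with $a_i$ leaves per element, all centers attached to the hub), the same target $q=B+3$, the same use of the restriction $B/4 < a_i < B/2$, and a counting argument in the backward direction that matches the paper's edge-traversal/order-counting claims, merely repackaged as a per-agent cost identity summed over agents. The only nuance is that your identity is stated for non-wasteful allocations, so in the backward direction you should add a one-line WLOG (via \Cref{thm:NW:easyToTurnInto}, or observe that the identity holds as an inequality for arbitrary allocations), which the paper handles equivalently by counting edge traversals directly.
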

\begin{proof}
	We reduce from the \probName{$3$-Partition} problem. Recall that in this problem, we are given a set~$A$ of~$3k$ integers~$a_1,\ldots,a_{3k}$ such that~$\sum_{a\in A} a = k\cdot B$ and our goal is to find a set~$\mathcal{S}$ of~$k$ pairwise disjoint~$3$-sizes subsets of~$A$ such that elements of each~$S\in\mathcal{S}$ sums up to exactly~$B$. Without loss of generality, we can assume that~$B/4 < a < B/2$ for every~$a\in A$.
	
	Given an instance~$\mathcal{J}$ of \probName{$3$-Partition}, we construct an equivalent instance of our problem as follows (see \Cref{fig:MMS:NPh:depthDiameter4pvc:construction} for an illustration of the construction). First, for every element~$a_i\in A$, we create a \emph{element-gadget}~$X_i$, which is a star with~$a_i$ leaves and a center~$c_i$. The centers of all element-gadgets are connected by an edge with the hub~$h$. The instance is unweighted, and we have~$\numAgents=k$ agents. The question is whether the MMS-share of this instance is at most~$q=B+3$.
	
	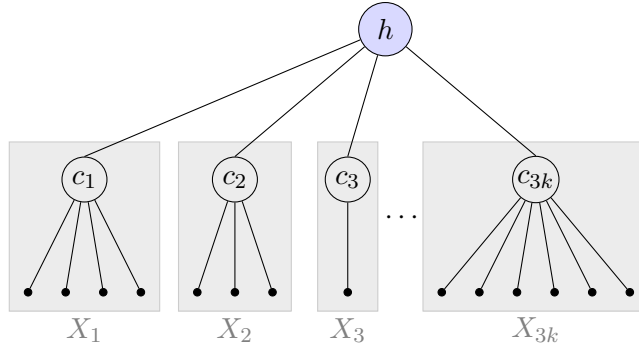
\begin{figure}
		\centering
		\begin{tikzpicture}
			\node[draw,circle,fill=blue!15] (h) at (0,1) {$h$};
			
			\node[draw,circle,inner sep=2pt] (c1) at (-4,-1) {$c_1$};
			\foreach[count=\i] \x in {-4.75,-4.25,...,-3.25} {
				\node[draw,fill=black,circle,inner sep=1pt] (c1\i) at (\x,-2.5) {};
				\draw (c1) -- (c1\i);
			}
			\begin{pgfonlayer}{background}
				\draw[gray!40,fill=gray!15] (-5,-0.5) rectangle (-3,-2.75);
				\node[gray] at (-4,-3) {$X_1$};
			\end{pgfonlayer}
			
			\node[draw,circle,inner sep=2pt] (c2) at (-2,-1) {$c_2$};
			\foreach[count=\i] \x in {-2.5,-2,...,-1.5} {
				\node[draw,fill=black,circle,inner sep=1pt] (c2\i) at (\x,-2.5) {};
				\draw (c2) -- (c2\i);
			}
			\begin{pgfonlayer}{background}
				\draw[gray!40,fill=gray!15] (-2.75,-0.5) rectangle (-1.25,-2.75);
				\node[gray] at (-2,-3) {$X_2$};
			\end{pgfonlayer}
			
			\node[draw,circle,inner sep=2pt] (c3) at (-0.5,-1) {$c_3$};
			\foreach[count=\i] \x in {-0.5,...,-0.5} {
				\node[draw,fill=black,circle,inner sep=1pt] (c3\i) at (\x,-2.5) {};
				\draw (c3) -- (c3\i);
			}
			\begin{pgfonlayer}{background}
				\draw[gray!40,fill=gray!15] (-0.9,-0.5) rectangle (-0.1,-2.75);
				\node[gray] at (-0.5,-3) {$X_3$};
			\end{pgfonlayer}

			\node (dummy) at (0.25,-1.5) {$\cdots$};
			
			\node[draw,circle,inner sep=0.75pt] (ck) at (2,-1) {$c_{3k}$};
			\foreach[count=\i] \x in {0.75,1.25,1.75,...,3.25} {
				\node[draw,fill=black,circle,inner sep=1pt] (ck\i) at (\x,-2.5) {};
				\draw (ck) -- (ck\i);
			}
			\begin{pgfonlayer}{background}
				\draw[gray!40,fill=gray!15] (0.5,-0.5) rectangle (3.5,-2.75);
				\node[gray] at (2,-3) {$X_{3k}$};
			\end{pgfonlayer}
			
			\draw (h) 
			edge (c1.north) edge (c2.north) edge (c3.north)
			edge (ck.north);
		\end{tikzpicture}
		\caption{An illustration of the construction used to prove \Cref{thm:MMS:NPh:depthDiameter4pvc}.}
		\label{fig:MMS:NPh:depthDiameter4pvc:construction}
	\end{figure}
	
	Let~$\mathcal{J}$ be a \Yes-instance and let~$\mathcal{S} = \{S_1,\ldots,S_k\}$ be  solution. For every agent~$i\in[\numAgents]$, we set~$\alloc_i = \bigcup_{j\mid a_j\in S_i} V(X_j)$, that is, every agent services all orders of an element-gadget corresponding to an element which is present in~$S_i$ in the solution~$\mathcal{S}$. Now, let~$i\in [k]$ be an arbitrary agent. Then its cost is~$\cost(\alloc_i) = \cost(X_{i_1}) + \cost(X_{i_2}) + \cost(X_{i_2})$, as different element-gadgets are clearly independent. This can be rewritten as~$\cost(\alloc_i) = 1 + |X_{i_1}\setminus\{c_{i_1}\}| + 1 + |X_{i_2}\setminus\{c_{i_2}\}|+ 1 + |X_{i_3}\setminus\{c_{i_3}\}|$, where~$+ 1$ is for servicing the center of an element-gadget and~$|X_{i_\ell}\setminus\{c_{i_\ell}\}|$ is the minimum required contribution to the cost for visiting the leaves of an element-gadget~$X_{i_\ell}$, assuming that the agent already services the center of this element-gadget. Since~$\sum_{a\in S} a = B$ for every~$S\in \mathcal{S}$, then by the construction, it holds that~$\cost(\alloc_i) = B + 3$ for every~$i\in[\numAgents]$. Therefore, the MMS-share of~$\mathcal{I}$ is at most~$B+3$.
	
	In the opposite direction, let~$\mathcal{I}$ be an instance of MMS-share at most~$q$ and let~$\alloc$ be an allocation such that~${\max_{i\in[\numAgents] \cost(\alloc_i)} \leq q}$. First, assume that there exists at least one bundle such that~${\cost(\alloc_i) < q}$. Then,~$\sum_{i=1}^\numAgents \cost(\alloc_i) \leq \numAgents\cdot q - 1 = \numAgents\cdot(B+3) - 1 = \numAgents\cdot B + 3\numAgents - 1$; however, there are~$\numAgents\cdot B + 3\numAgents$ orders, meaning that at least one order is not services, which contradicts that~$\alloc$ is a solution. Hence, the cost of each bundle is exactly~$q$, and also, the MMS-share of~$\mathcal{I}$ is exactly~$q$. Now, we show that all orders of the same element-gadget are serviced by the same agent. 
	
	\begin{claim}
		For each~$X_j$,~$j\in[3k]$, there exists exactly one~$i\in\agents$ such that~$V(X_j) \subseteq \alloc_i$.
	\end{claim}
	\begin{claimproof}
		We know that in every MMS allocation, the costs of all bundles are exactly~$q$. Since there are~$\numAgents$ bundles, the total cost is~$\numAgents\cdot q = \numAgents\cdot (B+3) = \numAgents\cdot B + 3\numAgents$, which is also the total number of orders. Therefore, each edge is visited exactly twice. However, if an element-gadget, say~$X_j$, would be visited by two agents, the edge~$\{h,c_j\}$ would be traversed at least four times. Hence, the sum of costs is at least~$\numAgents\cdot q + 2$, which contradicts that the cost of each bundle is exactly~$q$.
	\end{claimproof}
	
	Finally, assume that there is an agent servicing more than three element-gadgets. Recall that each element-gadget has strictly more than~$q/4$ leaves. Hence, the cost of the bundle of such an agent would be~$(q/4 + 1)\cdot 4 = q + 4$, which contradicts that~$\MMSshare(\mathcal{I}) = q$. Hence, every agent services at most three element-gadgets. If there would be an agent servicing less than three element-gadgets, then by simple counting argument, we obtain that there is some other agent servicing more than three element-gadgets, which is again a contradiction. Therefore, each agent services exactly three element-gadgets. Now, we create a solution~$\mathcal{S} = (S_1,\ldots,S_k)$ such that we set~$S_i = \{ a_j \mid V(X_j) \subseteq \alloc_i \}$. By previous argumentation, each set is obviously of size three. If the sum of elements of some set~$S_i\in \mathcal{S}$ is not~$B$, then we have that the cost of corresponding~$\alloc_i$ is not~$q$, which is a contradiction. Therefore, the sum of each~$S_i$ is exactly~$B$, and~$\mathcal{J}$ is also a \Yes-instance.
	
	It is easy to see that the depth of the constructed graph~$G$ is always two and that its diameter is four. To see that the~$4$-path vertex cover number of~$G$ is one, observe that if we remove~$h$ from~$G$, we obtain a disjoint union of stars. The longest path in a star is of length two, which shows that~$\{h\}$ is indeed a modulator to graphs without a path of length three.
\end{proof}

The structural parameter~$4$-path vertex cover mentioned in the previous result can be seen as the minimum number of vertices we need to remove from the topology to obtain a disjoint union of stars. That is, topologies with bounded~$4$-path vertex cover are generalizations of stars and apply to an even wider variety of real-life instances.

In contrast to the previous hardness result, we show that if the problem is parameterized by the~$3$-path vertex cover number of the topology, there exists an \FPT algorithm. A set of vertices~$C$ is called the~$3$-path vertex cover ($3$-PVC) if the graph~$G' = (V\setminus C, E)$ is a graph of maximum degree one. The size of the smallest possible~$3$-PVC is then called the~$3$-path vertex cover number or dissociation number of~$G$~\cite{PapadimitriouY1982}. 
This parameter, albeit less common, has been used to obtain tractable algorithms in several areas of artificial intelligence and multiagent systems~\cite{XiaoLDZ2017,GruttemeierKM2021,KnopSS2022,GruttemeierK2022}, and is also a generalization of the well-known \textit{vertex cover}; if we remove vertex cover vertices, we obtain a graph of maximum degree zero.
It is worth mentioning that a minimum size~$3$-PVC of a tree can be found in polynomial time~\cite{PapadimitriouY1982}. Therefore, any algorithm for the fair division of delivery orders can first check whether the topology possesses bounded~$3$-PVC and, if yes, employ our algorithm.

\begin{theorem}
	\label{thm:MMS:FPT:3pvc}
	If the instance is unweighted and parameterized by the~$3$-path vertex cover number~$\vartheta$ and the number of different weights~$\psi$, combined, an MMS and non-wasteful allocation can be found in \FPT time.
\end{theorem}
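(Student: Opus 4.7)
The plan is to lift the strategy of \Cref{thm:MMS:FPT:numInternal} from the ``number of internal vertices'' parameter to the 3-PVC parameter by grouping leaves of $G$ into a bounded number of types dictated by the 3-PVC structure. First I would compute a minimum 3-PVC $C$ of $G$ of size $\vartheta$ in polynomial time~\cite{PapadimitriouY1982}. Since every component of $G\setminus C$ is a path of length at most one and $G$ is a tree, every leaf $\ell$ has a well-defined deepest $C$-ancestor $c_\ell$, and the tail between $c_\ell$ and $\ell$ has one of three shapes: $\ell\in C$, $\ell$ is a $V\setminus C$-singleton child of $c_\ell$, or $\ell$ is the deeper endpoint of an edge component of $G\setminus C$ whose upper vertex is a child of $c_\ell$. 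Classifying leaves by the triple (deepest $C$-ancestor, tail shape, weights of tail edges) yields at most $K\in\Oh{\vartheta\cdot\psi^2}$ distinct types $T_1,\ldots,T_K$, and any two leaves of the same type $t$ share a common per-leaf tail cost $\omega_t$.

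Next I would prove a weighted analogue of \Cref{lem:niceAlloc}: every allocation can be transformed, without increasing any agent's cost, into a non-wasteful allocation such that for every pair of distinct agents $i,j$ and every tail-cost value $w$, at most one type $t$ with $\omega_t=w$ has both $\alloc_i\cap T_t\ne\emptyset$ and $\alloc_j\cap T_t\ne\emptyset$. The swap is as in \Cref{lem:niceAlloc}: if $i,j$ both service leaves of two types $t,t'$ with $\omega_t=\omega_{t'}$, exchange a $t'$-leaf of $i$ with a $t$-leaf of $j$; because both agents retain at least one leaf of each type the root paths remain paid for, and since $\omega_t=\omega_{t'}$ the tail costs cancel. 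If the exchange drops one type from an agent's bundle entirely, the dropped portion can only decrease $R(\cdot)$, so the cost still does not increase. Applying pigeonhole per tail-cost value bounds the number of ``important'' agents (those servicing multiple types of the same tail-cost class) by $\Oh{\psi^2\binom{\vartheta}{2}}$, and the remaining ``standard'' agents service at most one type per tail-cost class, yielding at most $(\vartheta+1)^{\Oh{\psi^2}}$ standard-agent profiles.

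The rest of the algorithm mirrors \Cref{thm:MMS:FPT:numInternal}: guess the number $\eta\le\Oh{\psi^2\binom{\vartheta}{2}}$ of important agents together with their type-sets $L_i\subseteq[K]$ (at a cost of $2^{f(\vartheta,\psi)}$ attempts), and binary-search the MMS-share $q$ over $[\max_v\dist(v,h),\ \numOrders\cdot\max_e\wFn(e)]$. For each guess, feasibility is tested via an ILP with variables $x_i^t\ge1$ (number of type-$t$ leaves serviced by important agent $i\in[\eta]$, for $t\in L_i$) and $y_j\ge0$ (number of standard agents with profile $Y_j\subseteq[K]$) subject to
\begin{align*}
	\forall i\in[\eta]:&& R(L_i)+\sum_{t\in L_i}x_i^t\cdot\omega_t &\le q,\\
	\forall t\in[K]:&& \sum_{i\colon t\in L_i}x_i^t+\sum_{j\colon t\in Y_j}y_j\cdot M_t^j &= |T_t|,\\
	&& \eta+\sum_{j}y_j &\le\numAgents,
\end{align*}
where $R(L)$ is the precomputed cost of the minimal subtree of $G$ containing $h$ together with $\{c_t:t\in L\}$ and the $V\setminus C$-intermediates forced by non-wastefulness, and $M_t^j$ is the maximum number of type-$t$ leaves a profile-$j$ standard agent can service inside budget $q$. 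The ILP has $f(\vartheta,\psi)$ variables, so Lenstra's algorithm~\cite{Lenstra1983} solves it in FPT time per guess; correctness of the encoding follows the same ``construct-and-verify'' argument as in \Cref{thm:MMS:FPT:numInternal}.

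The main obstacle is the non-additivity of the root cost $R$: unlike in \Cref{thm:MMS:FPT:numInternal}, where each type has a unique non-hub parent, here the paths from $h$ to different deepest $C$-ancestors genuinely share $C$-vertices and $V\setminus C$-intermediates, so $R$ is not a sum of per-type contributions. The remedy is to precompute $R$ for all $2^K$ candidate subsets (in $\Oh{2^K\cdot\numOrders}$ time) as the weight of the Steiner subtree of $\{h\}\cup\{c_t:t\in L\}$ in $G$, and plug it into the ILP as a constant once $L_i$ (respectively $Y_j$) is guessed. A secondary subtlety is the equality in the type-coverage constraint, which we handle as in \Cref{thm:MMS:FPT:numInternal} by permitting one auxiliary ``partial'' standard-agent variable per type to absorb the slack between $y_j\cdot M_t^j$ and $|T_t|-\sum_i x_i^t$.
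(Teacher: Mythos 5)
There is a genuine gap, and it sits exactly where the paper's proof does its only real work. Your scheme rests on the cost decomposition $\cost(\alloc_i)=R(L_i)+\sum_{t\in L_i}x_i^t\,\omega_t$ and on the exchange step in your weighted analogue of \Cref{lem:niceAlloc}, both of which implicitly assume that the tail of a leaf is private, i.e.\ disjoint from every path the agent has to traverse anyway. This fails for the one configuration your typing does not see: a vertex $u\notin C$ whose parent $c$ lies in $C$ and which also has a child in $C$, while another child of $u$ is a leaf $\ell\notin C$ (the component of $G\setminus C$ containing $\ell$ is just the edge $\{u,\ell\}$, so this is fully consistent with $C$ being an optimal, leaf-free $3$-PVC). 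For such $\ell$ the tail $c$--$u$--$\ell$ shares the edge $\{c,u\}$ with the path from $h$ to every $C$-descendant $c'$ of $u$; hence your $R(L)$, the Steiner tree of $h$ and the deepest $C$-ancestors, already contains $\wFn(\{c,u\})$ whenever $L$ contains a type anchored at or below $c'$, and the agent constraint double counts that edge. The ILP then overestimates bundle costs, can be infeasible at the true MMS-share, and your binary search returns a wrong value. The same configuration breaks the swap lemma: two leaves of the same type (same ancestor, shape, tail weights) can have different \emph{incremental} costs for the same agent, depending on whether that agent already services orders below the $C$-children of the leaf's upper vertex; trading a leaf whose tail is partly ``free'' for one whose tail is entirely new can strictly increase an agent's cost even though $\omega_t=\omega_{t'}$.

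This is precisely the case the paper eliminates before doing anything else: it shows there are fewer than $\vartheta$ vertices of $V\setminus C$ having both their parent and a child in $C$, adds them to $C$ (at most doubling it), and contracts the remaining degree-two non-cover vertices into single edges (which is where the $\Oh{\psi^2}$ distinct weights come from); after that every non-cover vertex is a leaf, the instance has $\Oh{\vartheta}$ internal vertices, and \Cref{thm:MMS:FPT:numInternal} is invoked as a black box---no new ILP, no Steiner-tree bookkeeping. If you add this preprocessing, your tails do become private and your decomposition sound, but then you are essentially re-proving \Cref{thm:MMS:FPT:numInternal} instead of reducing to it. A secondary soft spot: your capacities $M_t^j$ for standard agents that serve one type per cost class are not well defined independently of how many leaves of the \emph{other} types in the profile the agent takes, and the ``auxiliary partial standard-agent variable'' you appeal to is not something \Cref{thm:MMS:FPT:numInternal} actually supplies, so that part of the encoding would need its own correctness argument.
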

\begin{proof}
	It is known that if~$G$ is a tree, there always exists an optimal~$3$-path vertex cover~$C$ such that it contains no leaf of~$G$~\cite{PapadimitriouY1982}. Therefore, let~$C$ be such a cover and let~$|C|=\vartheta$. Also, by \Cref{lem:MMS:ifHubIsLeafWeCanReduce}, we may assume that~$h$ is not a leaf. First, we show that in~$V\setminus C$, there are less than~$\vartheta$ vertices with both the parent and at least one child in~$C$. For the sake of contradiction, assume that there exists at least~$\vartheta$ vertices~$X = x_1,\ldots,x_\vartheta$ such that each of them has both the parent and at least one child in~$C$. As each vertex~$X$ must have a unique child, and there must be at least one parent vertex, we have that~$C$ is of size at least~$\vartheta + 1$. This contradicts that~$C$ is of size~$\vartheta$. Consequently,~$X$ is of size at most~$\vartheta - 1$. We add all vertices of~$X$ into~$C$, which increases the size of~$C$ at most twice.
	
	According to the previous argumentation, all vertices outside~$C$ are either leaves or vertices of degree two with their parent in~$C$ (recall that leaves are never part of~$C$). Let~$v$ be a degree two vertex outside of the cover. We remove~$v$ from the graph, add an edge connecting the only child~$u$ of~$v$ with the parent~$p$ of~$v$, and set~$\wFn(\{u,p\}) = \wFn(\{u,v\}) + \wFn(\{v,p\})$. That is, we turn the instance into the weighted one; however, the weight function uses at most~$\psi+\psi^2 \in \Oh{\psi^2}$ different values. Moreover, after this operation, all vertices outside of~$C$ are leaves, so the number of internal vertices is at most~$2\cdot\vartheta$. Consequently, we can use the algorithm of \Cref{thm:MMS:FPT:numInternal} to solve the reduced instance in \FPT time, finishing the proof.
\end{proof}

The algorithm from \Cref{thm:MMS:FPT:3pvc} uses as the sub-procedure the \FPT algorithm for the parameterization by the number of internal vertices and the number of different edge-weights. In fact, we show that any instance with~$3$-pvc~$\vartheta$ and~$\psi$ different edge-weights can be transformed to an equivalent instance with~$\Oh{2\vartheta}$ internal vertices and~$\Oh{\psi^2}$ different edge-weights. Such a reduced instance can then be directly solved in \FPT time by the algorithm from \Cref{thm:MMS:FPT:numInternal}.

\subsection{Topologies with a Central Path}

When the topology is a simple path, we can find an MMS and non-wasteful allocation in polynomial time: just allocate each leaf to a different agent. Moreover, this approach works even if the instance is weighted.

\begin{proposition}
	If~$G$ is a path, an MMS and non-wasteful allocation can be found in linear time, even if the instance is weighted. 
\end{proposition}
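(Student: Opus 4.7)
The plan is to exploit the extreme simplicity of a path, on which at most two leaves can exist. I would distinguish cases based on the position of~$h$. If~$h$ is an endpoint of the path, the only leaf of~$G$ is the opposite endpoint~$\ell$; for every order~$v$, the subtree~$G^v$ rooted at~$v$ has~$\ell$ as its unique leaf. By \Cref{def:NW}, any agent servicing any order must also service~$\ell$, so only a single agent can carry nonempty bundles. Hence the algorithm assigns the entire path to an arbitrary agent and empty bundles to the rest; this allocation is non-wasteful, and its cost equals~$\dist(h,\ell)$, which is also a trivial lower bound on the MMS-share since some agent must visit~$\ell$.

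If~$h$ is an internal vertex, deleting~$h$ splits the path into two sub-paths~$L$ and~$R$ ending in leaves~$\ell_L$ and~$\ell_R$. Again by \Cref{def:NW}, every agent servicing a vertex of~$L$ must also service~$\ell_L$, so a single agent covers all of~$L$; symmetrically for~$R$. Let~$d_L=\dist(h,\ell_L)$ and~$d_R=\dist(h,\ell_R)$. For~$\numAgents=1$, the single agent does both sides with cost~$d_L+d_R$, which is trivially optimal; for~$\numAgents\geq 2$, the algorithm assigns~$L$ to agent~$1$,~$R$ to agent~$2$, and empty bundles to the remaining agents, giving maximum bundle cost~$\max(d_L,d_R)$. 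This matches the obvious lower bound since the agents responsible for~$\ell_L$ and~$\ell_R$ (whether distinct or not) must pay at least~$d_L$ and~$d_R$ respectively, so~$\MMSshare(\mathcal{I})\geq \max(d_L,d_R)$.

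Each nonempty bundle produced in either case is a connected sub-path containing its leaf and is therefore non-wasteful, while the matching upper and lower bounds show the allocation is MMS. The entire procedure amounts to a single traversal from~$h$ in at most two directions, so the running time is~$\Oh{\numOrders}$; crucially, no step uses anything about the edge weights, so the algorithm transfers verbatim to the weighted setting. I do not foresee a real obstacle here: the main substantive point is the structural observation that non-wastefulness collapses each side of~$h$ into the bundle of exactly one agent, after which the MMS-share and the allocation essentially write themselves.
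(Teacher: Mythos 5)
Your proposal is correct and follows essentially the same route as the paper: split the path at~$h$, give each of the (at most two) hub-to-leaf sub-paths to its own agent, note the bundles are non-wasteful by construction, and match the resulting maximum cost against the trivial lower bound of the distance to the farthest leaf. The only cosmetic difference is that you treat the cases where~$h$ is an endpoint (and~$\numAgents=1$) by hand, whereas the paper dispenses with the endpoint case via \Cref{lem:MMS:ifHubIsLeafWeCanReduce}.
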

\begin{proof}
	By \Cref{lem:MMS:ifHubIsLeafWeCanReduce}, we know that the hub is not a leaf. Therefore, there are two leaves,~$\ell_1$ and~$\ell_2$. Without loss of generality, assume that the shortest path from~$h$ to~$\ell_2$ is at most as long as the shortest path from~$h$ to~$\ell_1$. We define a solution allocation so that~$\alloc_1 = W_{\ell_1}\setminus\{h\}$,~$\alloc_2 = W_{\ell_2}\setminus\{h\}$, and~$\alloc_i = \emptyset$ for every~$i\in[3,\numAgents]$. Since every bundle contains all vertices on a path from~$h$ to the respective leaf, the allocation~$\alloc$ is clearly non-wasteful. It is also easy to see that the allocation is MMS. The lower-bound on the MMS-share is the distance to the most distant leaf from~$h$, and our allocation achieves this bound. 
\end{proof}

Therefore, the following set of results explores the complexity picture for instances that are not far from being paths. More specifically, we focus on topologies where all vertices are at a limited distance from a \emph{central path}. Such topologies may appear in practice very naturally, e.g., in instances where the central path is a highway, and the other vertices represent smaller towns along this highway.

Unfortunately, by the intractability results for weighted stars (cf. \Cref{thm:MMS:NPh:stars}), we cannot expect any tractable algorithms for topologies with distance to the central path greater or equal to one. 
Nonetheless, focusing on unweighted instances, we give a polynomial time algorithm for graphs where each vertex is at a distance at most one from the central path; such graphs are commonly known as \emph{caterpillar trees}.

\begin{theorem}
	\label{thm:MMS:P:unweightedCaterpillars}
	If~$G$ is a caterpillar tree and the instance is unweighted, an MMS and non-wasteful allocation can be found in polynomial time.
\end{theorem}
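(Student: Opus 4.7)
The plan is to exploit the spine structure of a caterpillar to get a closed-form cost decomposition, and then to binary-search for the MMS value combined with a per-arm greedy.

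First, by \Cref{lem:MMS:ifHubIsLeafWeCanReduce}, assume $h$ is not a leaf, so $h$ lies on the caterpillar's spine. Let the spine split at $h$ into a left arm $L$ and a right arm $R$ (either possibly empty), and let $H$ denote the set of hair leaves attached directly to $h$. By \Cref{thm:NW:easyToTurnInto}, it suffices to partition the leaves and extend to a non-wasteful allocation afterwards, since internal vertices of a non-wasteful bundle contribute nothing extra to its cost (they are already forced on the walk by the leaves below them). The key structural observation I would prove first is that for any non-wasteful bundle $\pi$,
\[
\cost(\pi) = d^*_L(\pi) + d^*_R(\pi) + |\pi \cap \mathrm{hair}|,
\]
where $d^*_L(\pi)$ (resp.\ $d^*_R(\pi)$) is the maximum spine depth in the left (resp.\ right) arm that $\pi$ forces the agent to reach, and is $0$ when $\pi$ has no leaves in that arm. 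This follows from identifying $2\cost(\pi)$ with the length of an Euler-style walk on the minimal subtree spanning $\pi \cup \{h\}$.

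Given this decomposition, the algorithm binary-searches on the candidate MMS value $q \in [1,\numOrders]$. For each $q$ and each arm separately, I would compute the minimum number of bundles $\mu_{\mathrm{arm}}(q)$ needed to cover the arm with every bundle of cost at most $q$. The greedy processes arm leaves in decreasing order of depth: open a new bundle at the deepest unassigned leaf $\ell$, and by the cost formula this bundle can hold up to $q - d(\ell) + 1$ leaves in total (including $\ell$ and any shallower leaves packed later). A standard exchange argument — swapping a deeper leaf in a bundle for a shallower unassigned one never increases the bundle's cost — establishes optimality of the greedy.

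Feasibility of a candidate $q$ then reduces to an assembly/scheduling step: distribute the $\mu_L(q)$ left-arm bundles, the $\mu_R(q)$ right-arm bundles, and the $|H|$ unit-cost hub hair leaves onto the $n$ agents so that every agent's total cost (obtained by summing the three additive parts) is at most $q$. Because there are only $\mu_L(q) + \mu_R(q) + |H| \le \numOrders$ items and the cost is additive across the three parts, this is a polynomial-time scheduling check solved by an LPT-type pairing (match the heaviest left bundle with the lightest right bundle, then absorb hub hair leaves into agents with residual slack). Binary search returns the smallest feasible $q^*$, and the corresponding allocation is read off the greedy and extended to internal vertices through \Cref{thm:NW:easyToTurnInto}.

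The main obstacle is justifying optimality when $\mu_L(q) + \mu_R(q) > n$, which forces some agents to span both arms. Here one must argue either that refining an arm's partition into smaller bundles (to enable a better pairing) is never needed beyond the greedy, or that the cost of the best double-armed pairing equals the smallest $q$ achievable. Edge cases — spine endpoints that are themselves leaves, hair leaves at $h$ itself, and very small $n$ — must also be treated carefully so the decomposition formula and the greedy remain valid.
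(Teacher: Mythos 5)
Your one-arm machinery is sound and essentially matches the paper: the cost decomposition (max spine depth reached plus one per hair leaf), the deepest-first greedy that opens a bundle at the deepest unassigned leaf with capacity $q-d(\ell)+1$, and the exchange argument are exactly the paper's one-sided procedure and its interval-structure claim (in the paper the hub-side leaves are simply the last leaf group processed by the same greedy, rather than a separate pile to ``absorb''). The genuine gap is the step you yourself flag: the assembly of the two arms. Running the greedy on each arm independently minimizes the \emph{number} of bundles per arm, but it saturates each bundle up to cost $q$, so when $\mu_L(q)+\mu_R(q)>\numAgents$ the ``heaviest-left with lightest-right'' pairing will essentially never fit, even though a feasible allocation with maximum cost $q$ may exist in which some agent takes a deliberately small left piece together with a small right piece. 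In other words, feasibility for a straddling agent depends on how the arm's capacity is split across agents, not just on the minimum bundle counts, and no LPT-type matching over the greedy's output certifies this; as stated, your check would reject feasible values of $q$, and the binary search (which anyway needs monotonicity of your feasibility predicate) would return a wrong $q^*$.

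The paper closes exactly this hole with a structural reduction instead of a pairing heuristic. Its interval-structure lemma (each bundle lives in a contiguous window of leaf groups along the central path, with interior groups fully serviced) implies that in some optimal allocation at most \emph{one} agent services leaves on both sides of $h$: two strictly straddling windows would force one of them to fully contain a group in which the other owns a leaf. The algorithm therefore enumerates that single straddling agent's bundle, which is described by only four numbers (the two window endpoints and the partial counts of leaves taken at each endpoint), removes it, and for every split $n_r+n_\ell=\numAgents-1$ solves the two arms completely independently with the one-sided greedy. To repair your proof you need this ``at most one straddler'' argument (or an equivalent replacement, e.g.\ a DP over how much of $q$ each straddling agent spends on the left arm), plus a short argument that the hub hair leaves can be handled inside the greedy rather than post hoc; without it, the scheduling step is not just unproved but incorrect as an algorithm.
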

\begin{proof}
	Before we proceed to the description of the algorithm, recall that we can assume that~$h$ is not a leaf as if it would be the case, we can reduce to an equivalent instance where~$h$ is an internal vertex by \Cref{lem:MMS:ifHubIsLeafWeCanReduce}. Additionally, since~$G$ is a caterpillar graph, the hub~$h$ has at most two neighbors of degree greater than one. If all neighbors of~$h$ are leaves, the graph is a star, and we can use \Cref{thm:MMS:P:unweightedStar} to solve such an instance in polynomial time. Therefore, there is at least one neighbor, call it~$r$, of~$h$ with a degree of at least two. There also may exist a second neighbor with a degree of at least two; we use~$\ell$ to refer to this neighbor. We define two sub-trees~$T_r$ and~$T_\ell$, where~$T_\ell$ is (possibly empty) sub-tree rooted in the vertex~$\ell$, and~$T_r$ is~$G\setminus T_\ell$.
	
	First, we assume that~$T_\ell$ is an empty tree; that is, the hub~$h$ is an endpoint of the central path. We denote the central path as~$(v_1,\ldots,v_x = h)$ and for every~$i\in[x]$, we use~$L_i$ to denote the (possibly empty) set of leaves whose parent is vertex~$v_i$. As the final step, we guess the MMS-share~$q$ of the input instance~$\mathcal{I}$. The core of the algorithm is then a sub-procedure which, given an instance~$\mathcal{I}$ and a value~$q\in\N$, finds an allocation~$\alloc$ such that~$\max_{i\in\agents} \cost(\alloc_i) \leq q$, if such~$\alloc$ exists, or decides that such an allocation cannot exist. Since the instance is unweighted, the MMS-share of each input instance lies in the interval~$[\max_{v\in\orders} \dist(v,h),\numOrders]$. This implies that there are only linear-many candidate values~$q$ for the MMS-share, and we can invoke the sub-procedure with each of them in increasing order and stop when the sub-procedure returns an allocation for the first time.
	
	The sub-procedure works greedily. First, it allocates as many leaves of~$v_1$ as possible to the agent~$\numAgents$. If there are more orders in~$L_1$ than agent~$\numAgents$ can service to keep the cost of~$\alloc$ at most~$q$, we add to~$\alloc_\numAgents$ as many leaves of~$v_1$ as possible, remove these leaves from~$G$, reduce the number of agents by one, and run the same procedure with the reduced instance. If we can add all leaves of~$L_1$ to~$\alloc_\numAgents$, we also include~$v_1$ in~$\alloc_\numAgents$, remove~$L_1 \cup \{v_1\}$ from~$G$, and rerun the same operation with reduced~$G$ and partially filled~$\alloc_\numAgents$. The procedure terminates if either all orders are allocated or the cost of all bundles is exactly~$q$. In the first case, the sub-procedure verified that an allocation with the maximum cost of~$q$ exists; in the latter case, there is no way how to partition orders between agents so that the maximum cost of a bundle is~$q$ and the sub-procedure returns \No. Formally, the sub-procedure works as is in \Cref{alg:caterpillars}.
	
	\begin{algorithm}[tb]
		\caption{An algorithm for deciding whether an allocation~$\alloc$ with~$\max_{i\in\agents} \cost(\alloc_i) \leq q$ exists.}
		\label{alg:caterpillars}
		\textbf{Input}: A problem instance~$\mathcal{I}=(G,h,\agents)$, where~$G$ is a caterpillar graph with central path~$(v_1,\ldots,v_x = h)$, an integer~$q\in\N$.\\
		\textbf{Output}: An allocation~$\alloc$ such that~$\max_{i\in[\agents]} \cost(\alloc_i) = q$, if one exists,~$\No$ otherwise.
		\begin{algorithmic}[1] %
			\State~$j \gets 1$ \Comment{A pointer to the currently processed vertex of the central path.}
			\For{$i\gets \numAgents,\ldots,1$}
			\State~$\alloc_i = \emptyset$
			\State~$C \gets \dist(v_j,h)$ \Comment{The current cost of~$\alloc_i$.}
			\While{$C < q$ \textbf{and}~$j \leq x$}
			\State~$s \gets q - C$ \Comment{Compute the remaining capacity of~$\alloc_i$.}
			\If{$|L_j| > s$} \Comment{Case 1: Agent~$i$ cannot service the whole~$L_j$.}
			\State~$\alloc_i \gets \alloc_i \cup \{ l_j^y \mid y \in [|L_j|- s + 1,|L_j|] \}$ \Comment{Add as many leaves of~$L_j$ as possible...}\label{alg:caterpillars:Case1}
			\State~$L_j \gets L_j \setminus \alloc_i$ \Comment{...and remove them from~$G$.}
			\State~$C \gets C + s = q$
			\Else \Comment{Case 2: Agent~$i$ can service the whole~$L_j$.}
			\State~$\alloc_i \gets \alloc_i \cup L_j \cup \{v_j\}$\label{alg:caterpillars:Case2}
			\State~$C \gets C + |L_j|$
			\State~$j \gets j + 1$
			\EndIf
			\EndWhile
			\EndFor
			\If{$\bigcup_i^\numAgents \alloc_i = V$} \Comment{If all orders allocated, we have a solution.} \label{alg:caterpillars:isAlloc}
			\State \Return~$\alloc$
			\Else
			\State \Return \No
			\EndIf
		\end{algorithmic}
	\end{algorithm}
	
	Overall, the described algorithm is clearly polynomial: \Cref{alg:caterpillars} consists of two loops, the outer loop is limited to~$\Oh{\numAgents}$ executions and the inner one to~$\Oh{m}$ executions, and we invoke this algorithm~$\Oh{m}$ times. Now, we show the correctness of the algorithm using several auxiliary claims.
	
	First, we show that if an algorithm outputs a positive result, it is indeed a valid allocation.
	
	\begin{claim}\label{lem:MMS:P:unweightedCaterpillars:isAlloc}
		If \Cref{alg:caterpillars} outputs some~$\alloc$, then it holds that~$\bigcup_i^\numAgents \alloc_i = V$ and~$\alloc_i \cap \alloc_j = \emptyset$ for each pair of distinct~$i,j\in[\numAgents]$.
	\end{claim}
	\begin{claimproof}
		First, assume that there is an unallocated order~$v\in\orders$. Then, the condition on \Cref{alg:caterpillars:isAlloc} is not met, so the algorithm outputs \No. Hence, such a situation can never happen. Now, assume that an order~$v\in\orders$ is part of multiple bundles. We distinguish two cases according to whether~$v$ is a leaf or an internal vertex. An internal vertex~$v=v_j$ is added to a bundle~$\alloc_i$ only on \cref{alg:caterpillars:Case2}. However, once this is done, we increase the value of pointer~$j$ by one, and since the value of the pointer never decreases again, this is the last step in which the algorithm assumes vertex~$v_j$. Consequently, no internal vertex may be part of multiple bundles. Let~$v$ be a leaf such that~$v\in L_j$ for some~$j\in[x]$. First, assume that the first time the leaf~$v$ was allocated was on \cref{alg:caterpillars:Case2}. Then, by the same argumentation as with internal vertices, the set of vertices~$L_j$ is never assumed again by the algorithm, so~$v$ is necessarily part of exactly one bundle. Finally, assume that the first time the leaf~$v$ is allocated to a bundle is on \cref{alg:caterpillars:Case1}. Then, such a vertex is immediately removed from~$L_j$, so it can never be allocated once again by the algorithm. Thus,~$\alloc$ is indeed a partition of~$V$.
	\end{claimproof}
	
	Next, we show that the maximum cost of a bundle in the allocation returned by \Cref{alg:caterpillars} never exceeds the given bound~$q$ on the bundle cost.
	
	\begin{claim}\label{lem:MMS:P:unweightedCaterpillars:hasCorrectCost}
		If \Cref{alg:caterpillars} outputs some~$\alloc$, then it holds that~$\max_{i\in\agents} \cost(\alloc_i) \leq q$. 
	\end{claim}
	\begin{claimproof}
		For the sake of contradiction, assume that \Cref{alg:caterpillars} returned an allocation such that~$\cost(\alloc_i) > q$ for some~$i\in[\agents]$. Let~$j\in[x]$ be the first round when the cost of~$\alloc_i$ exceeded~$q$. Observe that, due to the inner loop condition, this is necessarily the last execution of the inner \textbf{while} loop for agent~$i$. We distinguish two cases. First, assume that the condition on \cref{alg:caterpillars:Case1} is satisfied, i.e., the number of leaves of~$L_j$ is strictly greater than~$q - \cost(\alloc_i)$. Then, the algorithm adds to~$\alloc$ exactly~$q-\cost(\alloc_i)$ leaves of~$L_j$. However, each added leaf increases the cost of~$\alloc_i$ by exactly one, as agent~$i$ anyway visits~$v_j$ when servicing~$\alloc_i$. Therefore, the condition on \cref{alg:caterpillars:Case1} is not met, so the algorithm necessarily performed the else block starting on \cref{alg:caterpillars:Case2}. Here, we extend~$\alloc_i$ with~$L_j$ and~$v_j$. We know that~$|L_j| \leq q - \cost(\alloc_i)$ and thus,~$\cost(\alloc_i \cup L_j) \leq q$ by the previous argumentation that~$i$ visits~$v_j$ on the shortest walk needed to service~$\alloc_i$. Moreover, due to this, adding~$v_j$ does not increase the cost of~$\alloc_i$. Overall, we obtain that the algorithm never produces a bundle of cost greater than~$q$, finishing the proof.
	\end{claimproof}
	
	\Cref{lem:MMS:P:unweightedCaterpillars:isAlloc,lem:MMS:P:unweightedCaterpillars:hasCorrectCost} together show that whenever \Cref{alg:caterpillars} produces a positive outcome, it is indeed correct. Now, we show the opposite direction; whenever an allocation~$\alloc'$ with~$\max_{i\in\agents}\cost(\alloc'_i) \leq q$ exists for some~$q$, the algorithm returns a positive outcome. To show this, we use the following claim about the structure of such allocations.
	
	\begin{claim}\label{lem:MMS:P:unweightedCaterpillars:structure}
		Let~$\mathcal{I}$ be an instance and~$q\in\N$ be an integer. If there exists an allocation~$\alloc$ such that~$\max_{i\in\agents} \cost(\alloc_i) \leq q$, then there exists an allocation~$\alloc'$ with~$\max_{i\in\agents} \cost(\alloc'_i) \leq q$ such that for each agent~$i\in\agents$, there is an interval~$J_i = \{j_i,j_i + 1,\ldots,j_i+y\}$ with~$j_i \geq 1$ and~$y \leq x$, such that~$\alloc'_i \subseteq \bigcup_{j=j_i}^{j_i + y} L_j \cup \{v_j\}$ and~$L_j \cap \alloc'_i = L_j$ for every~$j\in[j_i + 1,J_i + y -1]$.
	\end{claim}
	\begin{claimproof}
		If~$\alloc$ is of the correct structure, we are done. Hence, let an agent exist for which the condition is not satisfied. From all agents for which an interval~$J_i$ does not exist, we take an agent~$i\in\agents$ such that it minimizes~$j$ for which~$L_j\cap \alloc_i \not= \emptyset$. Let~$j$ be the minimal index such that~$L_j\cap\alloc_i \not=\emptyset$,~$j'$ the maximal such index, and~$J'_i = \{j,j+1,\ldots,j+y' = j'\}$ be an interval. Since the condition from the claim statement is not satisfied,~$\alloc_i$ either contains an internal vertex~$v\notin \{ v_j, \ldots, v_{j'} \}$, or~$L_{j^*}\cap\alloc_i\not= L_{j^*}$ for some~$j^*\in J'_i\setminus\{j,j'\}$. If the first case occurs, then we reallocate the internal vertex~$v_{j^*}$,~$j^*\not\in J'_i$. to an arbitrary agent~$i'$ such that~$\alloc_{i'} \cap L_{j^*} \not= \emptyset$. As~$i'$ is servicing some leaf in~$L_{j^*}$, the transfer of~$v_{j^*}$ does not increase the cost of~$\alloc_{i'}$; the cost of~$\alloc_i$ remains the same or decreases. That is, such an operation is safe, and we can perform it repeatedly as long as such an internal vertex exists. Hence, we can assume that such an internal vertex does not exist in~$\alloc'_i$, so there must be~$j^*\in J'_i\setminus\{j,j'\}$ such that~$L_{j^*}\cap\alloc_i \not= L_{j^*}$. Let~$j^*$ be the minimum index that breaks this condition. Since~$L_{j^*}\cap\alloc_i \not= L_{j^*}$, there exists an agent~$i'$ servicing at least one order of~$L_{j^*}$. We keep swapping an order~$o'\in L_{j^*} \cap \alloc_{i'}$ with~$o\in L_{j'} \cap \alloc_i$ as long as both orders~$o'$ and~$o$ exist. It is easy to see that none of these swaps increases the cost for either~$i$ or~$i'$, since~$i$ anyway visits~$v_{j^*}$ and~$i'$ anyway visits~$v_{j'}$ and the instance is unweighted. If we end up in the situation when~$L_{j^*} \cap \alloc_{i'} = \emptyset$, we add~$v_{j^*}$ to~$\alloc_i$ and do the same with another agent~$i'$ such that it services at least one leaf of~$L_{j^*}$. If no such agent exists, we have~$\alloc_i \cap L_{j^*} = L_{j^*}$ and continue with another possible~$j^*$. Similarly, if no such~$j^*$ exists, the agent's~$i$ bundle can be described by an interval~$J_i$ and is no longer problematic. On the other hand, if we first exhaust~$L_{j'}$, we set~$J'_{i} = \{j,\ldots,j'-1\}$ and continue with the same modifications. To conclude, we showed that the bundle of each agent~$i$ not satisfying the condition from the claim statement can be turned into a bundle satisfying this condition. Moreover, all the modifications required can be performed in polynomial time, and once an agent is fixed by the algorithm, it is never assumed again; this follows from the property that the algorithm always assumes an agent whose bundle separates the graph~$G$ into two parts: a `left' side where all agents already satisfy the condition, and potentially broken `right' side.
	\end{claimproof}
	
	\Cref{lem:MMS:P:unweightedCaterpillars:structure} guarantees that there always exists an allocation with a structure that is almost the same as the one checked by \Cref{alg:caterpillars}. Such allocations and the allocation constructed in \Cref{alg:caterpillars} differ only in two things: a) in the ordering of bundles and b) in the allocation of internal vertices. We can fix a) by appropriate permutation of bundles, which is always possible as the cost functions of our agents are identical, and b) by allocating an internal vertex~$v_j$ to the first agent~$i$ such that~$\alloc_i\cap L_i \not= \emptyset$. This operation does not affect the cost of~$\alloc_i$, as~$i$ necessarily traverses~$v_i$ in its shortest walk. Hence, the following corollary indeed holds.
	
	\begin{corollary}
		For every~$q\in\N$ and every~$\mathcal{I}$, \Cref{alg:caterpillars} returns an allocation~$\alloc$ with~$\max_{i\in\agents} \cost(\alloc_i)$ if and only if such an allocation exists.
	\end{corollary}
	
	By this, \Cref{alg:caterpillars} is correct. That is if we know the MMS-share~$q$ of our instance, \Cref{alg:algorithm} finds a corresponding MMS allocation in polynomial time. As we try all possible values of~$q$ in increasing order and return the first allocation returned, the overall approach is trivially correct.
	
	It remains to show how to extend the previous approach from instances with the central path of the form~$(v_1,\ldots,v_x = h)$ to instances where the hub~$h$ is an arbitrary vertex~$v_{j^*}$ of the path. In this case, we first the bundle of agent~$\numAgents$; we want this bundle to be in line with \Cref{lem:MMS:P:unweightedCaterpillars:structure}, so it can be described by four numbers~$j,j',n_j,n_{j'}$ with the following meaning: the bundle~$\alloc_n$ can be described by an interval~$J_n = \{j,\ldots,j'\}$ with~$j^*\in J_n$,~$|\alloc_n \cap L_j| n_j$, and~$|\alloc_n \cap L_{j'}| n_{j'}$. For each such quadrupled, we remove the allocated leaves from~$G$, and for all~$n_r$ and~$n_\ell$ such that~$n_r + n_\ell = \numAgents - 1$ we find, using the previous approach, minimum~$q$ such that both~$T_r$ and~$T_\ell$ admits an allocation with the maximum bundle cost of~$q$ using~$n_r$ and~$n_\ell$ agents, respectively.
\end{proof}

The natural subsequent question is whether we can generalize the algorithm from the previous section to larger distances from the central path. It turns out that, without further restriction, this is not the case. In fact, the topology used in the proof of \Cref{thm:MMS:NPh:depthDiameter4pvc} has all vertices at a distance at most two from the central path, and the created instance is unweighted.

\begin{corollary}
	Unless~$\textsf{P}=\NP$, there is no polynomial time algorithm that finds an MMS and non-wasteful allocation, even if all vertices are at a distance at most two from the central path, the central path consists of a single vertex, and the instance is unweighted.
\end{corollary}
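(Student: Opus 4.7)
The plan is to observe that the hardness reduction already used in the proof of \Cref{thm:MMS:NPh:depthDiameter4pvc} produces an instance whose topology satisfies exactly the structural requirements claimed in the corollary, so no new reduction is necessary. I would simply revisit that construction and verify each of the three stated restrictions on the output instance.

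Recall that the reduction starts from a \probName{$3$-Partition} instance and builds a topology consisting of a hub~$h$ together with, for each element~$a_i\in A$, an element-gadget~$X_i$ which is a star of~$a_i$ leaves whose center~$c_i$ is joined directly to~$h$ (see \Cref{fig:MMS:NPh:depthDiameter4pvc:construction}). The resulting graph has exactly one internal vertex of degree greater than two that is adjacent to several other internal vertices, namely~$h$, while every~$c_i$ has~$h$ as its only internal neighbor. Hence the only path of internal vertices of length greater than zero must be contained in~$\{h\}$, and so the \emph{central path} of the constructed topology is the single vertex~$h$. Every other vertex is then either a center~$c_i$ at distance~$1$ from~$h$ or a leaf at distance~$2$ from~$h$, so all vertices lie within distance at most two from the central path. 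Finally, the instance produced by the reduction is unweighted by construction, and the~$\numAgents$ and threshold~$q=B+3$ are chosen in polynomial time in the input.

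Therefore the very same polynomial-time many-one reduction from \probName{$3$-Partition} witnesses \NPh\emph{ness} under the more restrictive structural guarantees of the corollary. By \Cref{lem:family} (applied to the family of unweighted instances whose central path is a single vertex and all vertices are at distance at most two from it) and \Cref{thm:MMS:MMSandNW:equivalent}, this rules out a polynomial-time algorithm for finding an MMS and non-wasteful allocation on this family unless~$\textsf{P}=\NP$. There is no substantive obstacle here: the entire content of the proof is recognizing that the topology already built in \Cref{thm:MMS:NPh:depthDiameter4pvc} witnesses the three structural restrictions simultaneously, so the corollary is immediate from that earlier reduction.
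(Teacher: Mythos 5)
Your proposal is correct and matches the paper's own argument: the corollary is obtained exactly by observing that the construction in \Cref{thm:MMS:NPh:depthDiameter4pvc} is unweighted, has central path consisting solely of the hub~$h$, and places every vertex within distance two of it, so the same reduction (together with \Cref{lem:family} and \Cref{thm:MMS:MMSandNW:equivalent}) yields the claimed intractability.
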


\section{Envy-Based Fairness}

In the previous section, we focused on the algorithmic aspects of computing MMS and non-wasteful allocations. Here, we change the perspective and study how much the complexity picture changes if we assume different fairness notions based on a pair-wise comparison of agents' bundles.

\subsection{Envy-Freeness}

We start with the well-known envy-freeness. Recall that for an allocation~$\alloc$ to be envy-free, it must hold that the cost of every bundle is the same. Not only that envy-free allocations are not guaranteed to exist, but in our first result, we show that envy-freeness is inherently incompatible with the notion of non-wastefulness.

\begin{proposition}\label{thm:EF:incompatibleWithNonwasteful}
	There is an instance~$\mathcal{I}$ of fair distribution of delivery orders such that the sets of envy-free allocations and non-wasteful allocations are both non-empty, but their intersection is empty.
\end{proposition}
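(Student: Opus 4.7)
The plan is to exhibit an explicit small instance $\mathcal{I}$ witnessing the claim. Consider the unweighted tree in which the hub $h$ has two children $v_1, v_2$; $v_1$ has a single child $v_3$ (a leaf); and $v_2$ has two leaf children $v_4, v_5$. Set $|\agents| = 2$. The set of non-wasteful allocations is non-empty by \Cref{thm:NW:existence}. For an envy-free witness, I would take $\alloc_1 = \{v_1, v_2, v_3\}$ and $\alloc_2 = \{v_4, v_5\}$: a direct shortest-walk computation gives $\cost(\alloc_1) = \cost(\alloc_2) = 3$, so this allocation is envy-free (but it is wasteful, because agent $1$ services $v_2$ without servicing any leaf of $G^{v_2}$).

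The main step is to show that no allocation of $\mathcal{I}$ is simultaneously envy-free and non-wasteful. For this, I would first establish the edge-decomposition of the total cost,
\[
\sum_{i\in\agents}\cost(\alloc_i) \;=\; \sum_{e = (u,\parent(u))} \wFn(e)\cdot \bigl|\{i\in\agents : \alloc_i \cap V(G^u)\neq\emptyset\}\bigr|,
\]
which holds because, in a tree, an agent's shortest servicing walk uses an edge exactly twice iff that agent's bundle intersects the subtree below it. Since the two agents have identical cost functions, envy-freeness forces both bundles to share a common cost $C$, so the total above equals $2C$.

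I would then use non-wastefulness to tightly bound the right-hand side. Because $v_3$ is the unique leaf of $G^{v_1}$, every non-wasteful allocation places $v_1$ and $v_3$ in the same bundle, so the edge $(h,v_1)$ contributes $1$; the three leaf-edges each contribute $1$; and the edge $(h,v_2)$ contributes $1$ or $2$ according to how $\{v_2,v_4,v_5\}$ is distributed (a non-wasteful split in the $v_2$-branch requires whichever bundle contains $v_2$ to also contain a leaf of $G^{v_2}$). Hence the total cost of any non-wasteful allocation lies in $\{5,6\}$; since $2C$ must be even, envy-freeness forces the total to equal $6$, which gives $C=3$ and forces $v_4$ and $v_5$ into distinct bundles.

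A one-line check now finishes the proof. Letting agent $a$ be the one receiving $\{v_1,v_3\}$, the forced split of $v_4,v_5$ combined with non-wastefulness (which forbids agent $a$ from receiving $v_2$ alone) implies that agent $a$ also receives at least one of $v_4, v_5$, giving $\cost(\alloc_a)\geq 2+1+1=4>3$, a contradiction. The main obstacle will be keeping this last enumeration clean; framing the argument through the edge-decomposition together with the parity restriction on $2C$ is what reduces a potentially lengthy case analysis to a single line.
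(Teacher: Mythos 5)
Your proof is correct, but it takes a noticeably different route from the paper. The paper's witness is minimal: an unweighted path on four vertices $u-h-w-v$ with the hub adjacent to the leaf $u$ and two agents. There, the only envy-free allocations split the orders as $\{v\}$ versus $\{u,w\}$ (both of cost $2$), which is wasteful since the agent holding $w$ holds no leaf of $G^w$; conversely, any non-wasteful allocation puts $w$ and $v$ in one bundle (cost at least $2$) leaving the other bundle with cost at most $1$, so it cannot be envy-free --- a two-line enumeration with no extra machinery. Your instance (hub with children $v_1,v_2$, where $v_1$ has one leaf child and $v_2$ has two) also works, and all your steps check out: the EF witness indeed has both costs equal to $3$; the edge-decomposition identity $\sum_i \cost(\alloc_i)=\sum_{e=(u,\parent(u))}\wFn(e)\cdot|\{i : \alloc_i\cap V(G^u)\neq\emptyset\}|$ is valid on trees because each agent's minimal closed walk traverses exactly the edges on the union of hub-to-order paths of its bundle, each twice; non-wastefulness pins the $(h,v_1)$ contribution to $1$, giving total cost in $\{5,6\}$; the parity of $2C$ forces total $6$, hence $C=3$ and $v_4,v_5$ in distinct bundles; and then the agent holding $\{v_1,v_3\}$ plus one of $v_4,v_5$ has cost at least $4>3$, a contradiction. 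What your approach buys is a reusable tool --- the decomposition cleanly bounds the total cost of any non-wasteful allocation and replaces case analysis with a parity observation --- but for this particular existence statement it is heavier than necessary; the paper's smaller path instance makes the impossibility immediate by direct inspection.
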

\begin{proof}
	Let the topology be a path with four vertices and~$u$ and~$v$ be leaves of~$G$. We place the hub on the single neighbor of~$u$. There are two possible envy-free allocations, and in any of them, one bundle contains only the vertex~$v$, while the other bundle contains all the remaining orders. This allocation is clearly not non-wasteful as non-wastefulness requires the neighbor of~$v$ to be serviced by the same agent that is servicing the order~$v$. Similarly, any non-wasteful allocation allocates both~$v$ and the neighbor of~$v$ to the same agent. Such an allocation is, however, not envy-free, as the cost of the other bundle is at most one, which is strictly smaller than the cost of the bundle containing~$v$.
\end{proof}

Now, we focus on the computational complexity side of deciding whether an EF and non-wasteful allocation exists. In every hardness reduction we provide for MMS and non-wastefulness in this paper, we construct the instance so that it is a \Yes-instance if and only if all bundles are of the same cost. Consequently, every hardness result we show for the computation of the MMS and non-wasteful allocations also applies to the \NP- or \Wh{}ness of deciding whether an EF and non-wasteful allocation exists. This is clearly not the case for our algorithmic results.

\subsection{Envy-Freeness up to One Order}

As argued in the previous subsection, the direction of envy-freeness does not promise many positive results. Therefore, we now focus on a relaxation of envy-freeness called envy-freeness up to one order. Here, we allow for a slight difference between the costs of agent bundles; specifically, we require that each envy is eliminated by removing one order from the bundle of the envious agent.

In the case of envy-freeness, we discussed how the hardness results for MMS carries over. For EF1, this is not the case, at least not as directly as for EF. However, even this relaxation is very incompatible with non-wastefulness, as we show in the following result. The arguments are almost the same as in the case of EF; we just need to extend the topology a little.

\begin{proposition}\label{thm:EF1:incompatibleWithNonwasteful}
	There is an instance~$\mathcal{I}$ of fair distribution of delivery orders such that the sets of EF1 allocations and non-wasteful allocations are both non-empty, but their intersection is empty.
\end{proposition}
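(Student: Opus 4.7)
My plan is to adapt the four-vertex example from the proof of \Cref{thm:EF:incompatibleWithNonwasteful} by lengthening the ``deep'' branch just enough that no single-order removal can close the cost gap forced by non-wastefulness. Concretely, I would take the unweighted path $u\text{-}h\text{-}w_1\text{-}w_2\text{-}v$ with $h$ as the hub and $\numAgents=2$ agents; here $u$ and $v$ are the only leaves and $w_1,w_2$ are internal orders.

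First, I would characterize the non-wasteful allocations. Because $v$ is the unique leaf in both $G^{w_1}$ and $G^{w_2}$, \Cref{def:NW} forces whichever agent services $v$ to also service $w_1$ and $w_2$. Thus, up to a relabeling of the two (identical-cost) agents, the only non-wasteful allocations are $(\{u,w_1,w_2,v\},\,\emptyset)$ with bundle costs $(4,0)$ and $(\{u\},\,\{w_1,w_2,v\})$ with bundle costs $(1,3)$. Non-wasteful allocations therefore exist (as also guaranteed by \Cref{thm:NW:existence}).

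Next, I would show that neither of these is EF1 by checking every one-order removal from the heavier bundle. For the first allocation, removing $u$ or $v$ leaves a bundle of cost $3$, while removing $w_1$ or $w_2$ still forces traversal to $v$ and yields cost $4$; the minimum post-removal cost $3$ strictly exceeds the empty bundle's cost $0$. For the second allocation, removing $v$ from $\{w_1,w_2,v\}$ yields $\cost(\{w_1,w_2\})=2$, while removing $w_1$ or $w_2$ keeps the cost at $3$; the minimum $2$ strictly exceeds agent~$1$'s cost of $1$. So no non-wasteful allocation is EF1.

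Finally, to keep the claim non-vacuous I would exhibit the (wasteful) EF1 allocation $(\{u,w_1\},\,\{w_2,v\})$ of costs $(2,3)$: removing $v$ from the second bundle reduces its cost to $\cost(\{w_2\})=2$, matching the first bundle and certifying EF1. The main subtlety is calibrating the branch length: with only one internal vertex on the deep branch (as in \Cref{thm:EF:incompatibleWithNonwasteful}), the analogue of $(\{u\},\{w_1,v\})$ turns out to be EF1, so exactly one additional internal vertex is needed to ensure that the heavy non-wasteful bundle's post-removal cost is forced to remain at least $2>1=\cost(\{u\})$.
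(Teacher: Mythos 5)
Your proof is correct and is essentially the paper's own argument: your path $u\text{-}h\text{-}w_1\text{-}w_2\text{-}v$ is exactly the paper's construction (the four-vertex path from \Cref{thm:EF:incompatibleWithNonwasteful} with one extra order attached to the far leaf, vertices relabeled), and the reasoning---non-wastefulness forces the whole deep branch onto one agent, whose excess cost cannot be cured by deleting a single order---is the same. The only difference is that you spell out the case analysis and explicitly exhibit a wasteful EF1 allocation, which the paper leaves implicit.
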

\begin{proof}
	We extend the topology from \Cref{thm:EF:incompatibleWithNonwasteful} by attaching one order~$w$ to the vertex~$v$. Now, every non-wasteful allocation must allocate~$w$,~$v$, and the neighbor of~$v$ to a single agent~$i$. The bundle of agent~$j\not= i$ then services only one order, so the agent~$i$ is envious towards~$j$. This envy cannot be removed even if we remove the order~$w$ from~$\alloc_i$. On the other hand, in any EF1 allocation, the orders~$v$,~$w$, and the neighbor of~$v$ must be in different bundles, which directly rules out non-wastefulness.
\end{proof}

It is known that both in the standard fair division of indivisible items and the fair division of delivery orders~\cite{HosseiniNW2024}, an EF1 allocation can be found by a (variant of) standard algorithms, such as round-robin~\cite{CaragiannisKMPS2019} or envy-cycle elimination~\cite{LiptonMMS04}. Our first result shows that these algorithms fail to find EF1 and non-wasteful allocations, even if such allocations exist.

\begin{proposition}\label{thm:EF1:standardAlgorithmsDoNotWorkForNonWastefulness}
	Both round-robin and envy-cyle elimination mechanisms fail to find an EF1 and non-wasteful allocation, even though such allocation exists.
\end{proposition}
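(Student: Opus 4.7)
I would prove the statement by exhibiting a single small instance on which an EF1 and non-wasteful allocation provably exists, and tracing both algorithms to show they return allocations violating non-wastefulness.

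\textbf{Instance.} Take a tree rooted at the hub $h$ whose children are $p$ and $q$; attach a pendant leaf $\ell_1$ to $p$, and hang a length-two path $\ell_2$--$\ell_3$ from $q$ (so $\ell_3$ is the unique leaf of $G^q$ and of $G^{\ell_2}$). Put two agents, and use unit edge weights; the resulting distances from $h$ are $1,1,2,2,3$ for $p,q,\ell_1,\ell_2,\ell_3$ respectively.

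\textbf{Witness allocation.} The allocation $\alloc_1=\{p,\ell_1\}$, $\alloc_2=\{q,\ell_2,\ell_3\}$ has bundle costs~$2$ and~$3$. It is non-wasteful by \Cref{def:NW}, since for every internal order an agent services, that agent also holds a leaf of the corresponding subtree: $\alloc_1$ contains $p$ together with the leaf $\ell_1$ of $G^p$, and $\alloc_2$ contains $q$ and $\ell_2$ together with the leaf $\ell_3$ of both $G^q$ and $G^{\ell_2}$. It is EF1, since removing $\ell_3$ from $\alloc_2$ leaves $\{q,\ell_2\}$ of cost $2=\cost(\alloc_1)$.

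\textbf{Round-robin trace.} Under the natural cost-minimizing round-robin (each agent in turn grabs the unallocated order of smallest marginal cost), the first four picks take $p,q,\ell_1,\ell_2$, each at marginal cost~$1$, producing the partial bundles $\{p,\ell_1\}$ and $\{q,\ell_2\}$. The remaining order $\ell_3$ is then forced onto the agent whose turn it is; by symmetry of the tree that is the holder of $\{p,\ell_1\}$, yielding $\{p,\ell_1,\ell_3\}$ and $\{q,\ell_2\}$. The second bundle then services the internal order~$\ell_2$ without any leaf of $G^{\ell_2}=\{\ell_3\}$, violating non-wastefulness.

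\textbf{Envy-cycle elimination trace.} With only two agents the envy graph is always acyclic (strict envy is asymmetric), so the algorithm degenerates to: hand each new order to an un-envied agent, i.e., one whose bundle nobody wants to swap into, which in the cost setting is exactly the agent of currently lower cost. The induced sequence of assignments coincides with the round-robin output up to relabeling, and in particular separates $\ell_2$ from $\ell_3$; since no envy cycle ever arises, no rotation can subsequently repair this, and the final allocation is again wasteful.

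\textbf{Main obstacle.} The delicate part is robustness against tie-breaking. For round-robin this is immediate from the left/right symmetry between $p$ and $q$ on the very first pick. For envy-cycle elimination the point is that immediately before $\ell_3$ is picked, one agent strictly dominates the other in cost, so the ``non-envied'' agent is uniquely determined regardless of the algorithm's tie-breaking rule; thus $\ell_3$ cannot be placed into the same bundle as $\ell_2$ under any legal execution.
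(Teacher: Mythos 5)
There is a genuine gap: both of your ``robustness against tie-breaking'' claims are false on your own instance, and they are exactly what the argument needs. The two branches $h$--$p$--$\ell_1$ and $h$--$q$--$\ell_2$--$\ell_3$ are \emph{not} symmetric, so the first round-robin pick is a genuine tie (both $p$ and $q$ have marginal cost $1$) whose resolution decides everything: if the first agent takes $q$ instead of $p$, the subsequent greedy picks give $\{q,\ell_2,\ell_3\}$ and $\{p,\ell_1\}$, which is precisely your witness allocation --- EF1 and non-wasteful --- so round-robin does not provably fail on this instance; it fails only under one adversarial tie-break. The envy-cycle claim is worse: immediately before $\ell_3$ is assigned the two bundles are $\{p,\ell_1\}$ and $\{q,\ell_2\}$, both of cost $2$ (your own round-robin trace exhibits this), so neither agent ``strictly dominates,'' the non-envious agent is not uniquely determined, and a perfectly legal execution hands $\ell_3$ to the holder of $\{q,\ell_2\}$, again producing an EF1 and non-wasteful output. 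Hence the assertion that $\ell_3$ ``cannot be placed into the same bundle as $\ell_2$ under any legal execution'' is simply wrong, and with it the envy-cycle half of the proof collapses.

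A secondary weakness is the failure mode you exhibit: in your (adversarial) run the output is actually EF1 and violates only non-wastefulness, which by \Cref{thm:NW:easyToTurnInto} is the cheaply repairable property; to make that run count you would at least have to argue that the canonical non-wasteful repair (which keeps the leaf assignment $\ell_1,\ell_3\mapsto$ agent $1$) yields costs $5$ versus $0$ and is not EF1 --- you do not. The paper sidesteps both problems by choosing an instance whose three leaves sit at pairwise distinct distances ($2$, $4$, $6$), so the greedy choices of both mechanisms are forced with no ties at all, and the resulting leaf assignment ($\ell_1,\ell_3$ to one agent, $\ell_2$ to the other) fails EF1 no matter how the internal orders are completed non-wastefully, because deleting a single order from the heavy bundle lowers its cost by at most one. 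To salvage your approach you would either have to adopt a similarly tie-free instance or prove failure under \emph{every} tie-breaking rule, which your instance does not admit.
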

\begin{proof}
	Let the instance contain two agents, and the graph be as depicted in \Cref{fig:EF1:standardAlgorithmsDoNotWorkForNonWastefulnessInstance}. In every non-wasteful allocation, we are only interested in the allocation of leaves, as the particular allocation of internal vertices does not affect the cost of these agents. Clearly, there is an EF1 and non-wasteful allocation; if one agent service leaves~$\ell_1$ and~$\ell_2$ together with all orders of~$W_{\{\ell_1\}}$ and~$W_{\{\ell_2\}}$ and the other agents services the rest of the orders, the cost of both bundles is exactly five. That is such an allocation is even envy-free. Let us now show that both round-robin and envy-cycle elimination mechanisms fail to find any EF1 and non-wasteful allocation. We show it separately for each mechanism.
	
	\begin{figure}
		\centering
		\begin{tikzpicture}
			\tikzstyle{A1} = [draw,circle,ultra thick,green!80!black,inner sep=6pt,fill=none]
			\tikzstyle{A2} = [draw,ultra thick,red!30,inner sep=8pt,fill=none]
			
			\node[draw,circle,fill=blue!15] (h) at (0,0) {$h$};
			\node[draw,circle] (v1) at (0,-1.5) {};
			\node[A1] at (0,-1.5) {};
			\node[draw,circle] (v2) at (-1.5,-1.5) {};
			\node[A1,label=90:{$\ell_1$}] at (-1.5,-1.5) {};
			\draw (h) -- (v1) -- (v2);
			
			\node[draw,circle] (v3) at (1.5,0) {};
			\node[A1] at (1.5,0) {};
			\node[draw,circle] (v4) at (3,0) {};
			\node[A1] at (3,0) {};
			\node[draw,circle] (v5) at (4.5,0) {};
			\node[A1] at (4.5,0) {};
			\node[draw,circle] (v9) at (6,0) {};
			\node[A1,label=90:{$\ell_2$}] at (6,0) {};
			\node[draw,circle] (v6) at (1.5,-1.5) {};
			\node[A2] at (1.5,-1.5) {};
			\node[draw,circle] (v7) at (3,-1.5) {};
			\node[A2] at (3,-1.5) {};
			\node[draw,circle] (v8) at (4.5,-1.5) {};
			\node[A2] at (4.5,-1.5) {};
			\node[draw,circle] (v10) at (6,-1.5) {};
			\node[A2,label=90:{$\ell_3$}] at (6,-1.5) {};
			\draw (h) -- (v3) -- (v4) -- (v5) -- (v9);
			\draw (v4) -- (v6) -- (v7) -- (v8) -- (v10);
		\end{tikzpicture}
		\caption{An instance used to prove \Cref{thm:EF1:standardAlgorithmsDoNotWorkForNonWastefulness}. Highlighted is an EF1 and non-wasteful allocation.}
		\label{fig:EF1:standardAlgorithmsDoNotWorkForNonWastefulnessInstance}
	\end{figure}
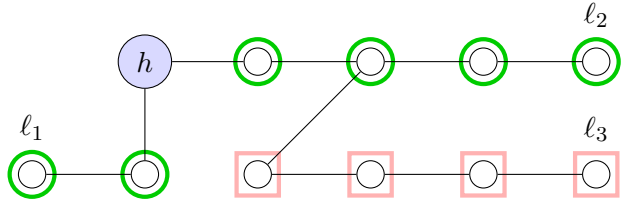
	
	\textbf{Round-robin mechanism.} The round-robin mechanism first fixes an ordering of agents, and then, the agents, one after another in the fixed order, select the order that increases the cost of their current bundle the least. The procedure stops once all orders are selected. As stated before, in a non-wasteful allocation, we are interested only in the allocation of leaves. Hence, our agents will select only between the leaves instead of selecting between all orders.
	
	Without loss of generality, assume that the mechanism assumes the agents in the lexicographic order. In the first round, agent~$1$ selects leaf~$\ell_1$, as it increases the cost of agent's~$1$ bundle only by two. Next, agent~$2$ chooses the leaf~$\ell_2$. At this point, the costs are as follows:~$\cost(\alloc_1) = 2$ and~$\cost(\alloc_2) = 4$. As there is only one remaining leaf~$\ell_3$ and its agent's~$1$ turn, the leaf~$\ell_3$ is added to~$\alloc_1$, and the mechanism stops. Regardless of how we complete the allocation~$\alloc$ while keeping it non-wasteful, the costs are~$\cost(\alloc_1) = \dist(h,\ell_1) + \dist(h,\ell_3) = 8$ and~$\cost(\alloc_2) = 4$. Clearly, the agent~$1$ has an envy towards agent~$2$; however, removing a single order from the agent's~$1$ bundle will decrease its cost by at most one, so the allocation is not EF1.
	
	\textbf{Envy-cycle elimination.} The envy-graph mechanism also works in rounds. Apart from the set of agents and the set of items, the mechanism also keeps track of a directed envy-graph~$D$, where the nodes (we use the term nodes instead of vertices to distinguish between vertices of~$G$ and vertices of the envy-graph~$G$) are agents, and there is an arc from agent~$i$ to agent~$j$ if agent~$i$ has an envy towards agent~$j$. Then, the algorithm fixes an ordering of the items, and in every round, it assigns the currently processed item to an agent that is not envious of any other agent (in general, it may happen that no such agent exists; however, in our example, this situation is not possible, so we do not discuss how to proceed in such a case).
	
	Let the ordering that the mechanism fixes be~$(\ell_1,\ell_2,\ell_3)$. Note that such an ordering is very natural, as it assigns the leaves based on their distance from the hub~$h$. In the first round, the envy graph is empty, so the first leaf is allocated, without loss of generality, to agent~$1$. This creates an envy from agent~$1$ towards agent~$2$, so the next item is allocated to agent~$2$. This reverses the direction of the single arc in~$D$, and in the final round, the leaf~$\ell_3$ is allocated to agent~$1$. Hence, we ended up in the same situation as with the round-robin mechanism, so the proof is complete.
\end{proof}

To conclude the section, we give at least one positive result. Specifically, we show that if the topology is a star, we can decide the existence of EF allocations in polynomial time. Moreover, in this case, EF1 allocations are guaranteed to exist and can be found efficiently.

\begin{proposition}\label{thm:envy:P:unweightedStar}
	If~$G$ is a star and the input instance is unweighted, an EF1 or EF and non-wasteful allocation can be found in polynomial time whenever one exists.
\end{proposition}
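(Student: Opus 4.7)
The plan is to exploit the fact that a star collapses the problem's geometry and reduces it to an elementary counting question, with a short case analysis based on the position of the hub.

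\emph{Case 1 (hub at the center).} Every order is a leaf at distance one from $h$, so for any bundle $\alloc_i$ the shortest servicing walk decomposes into $|\alloc_i|$ independent round trips, giving $\cost(\alloc_i) = |\alloc_i|$; moreover, non-wastefulness is vacuous because every order is a leaf. Thus, finding an EF allocation reduces to partitioning the $m$ orders into $n$ bundles of equal size, which is possible iff $n \mid m$ and is constructible by round-robin in linear time; finding an EF1 allocation reduces to producing a partition whose bundle sizes differ by at most one, which is always achievable by round-robin in linear time.

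\emph{Case 2 (hub at a leaf).} The star then has a unique internal vertex $c$, and I invoke \Cref{lem:MMS:ifHubIsLeafWeCanReduce} to pass to the reduced instance $\mathcal{J}$ whose hub is $c$ and whose orders are the leaves of the original star other than $h$. I solve $\mathcal{J}$ by Case~1 and lift the allocation to $\mathcal{I}$ by appending the order $c$ (a non-hub vertex of $\mathcal{I}$) to one of the non-empty bundles; this is the only way to preserve non-wastefulness. Because every non-empty bundle in $\mathcal{I}$ must traverse the edge $\{h, c\}$, lifting raises the cost of every non-empty bundle by exactly one while leaving any empty bundle at cost zero. This yields the characterizations: EF in $\mathcal{I}$ holds iff $n \mid (m - 1)$ (realized by a balanced leaf distribution with $c$ appended arbitrarily), and EF1 in $\mathcal{I}$ holds iff $m = 1$ or $n \leq m - 1$ (in which case each agent receives at least one leaf and $c$ is appended to any non-empty bundle arbitrarily).

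The main obstacle I anticipate is the EF1 analysis in Case~2. When $m \geq 2$, the bundle containing $c$ in any non-wasteful allocation has size at least two and cost at least two, so removing a single order from that bundle still leaves a cost of at least one; hence the coexistence of this bundle with any empty bundle rules out EF1, which is precisely what forces $n \leq m - 1$ above. Once this observation is made, all conditions can be checked and the corresponding allocations constructed in linear time, completing the proof.
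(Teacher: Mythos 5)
Your Case~1 is the paper's own argument: with the hub at the center, costs equal bundle sizes, round-robin gives EF1 unconditionally and EF exactly when $\numAgents \mid \numOrders$. Where you genuinely diverge is Case~2. The paper dismisses the hub-at-a-leaf configuration by appealing to \Cref{lem:MMS:ifHubIsLeafWeCanReduce} (``we can suppose $h$ is the center''), even though that lemma is stated only as an equality of MMS-shares; you instead analyze the lifted instance directly, observing that every non-empty bundle pays the extra edge $\{h,c\}$ exactly once, that non-wastefulness forces $c$ into a bundle containing an outer leaf, and hence that EF1 together with non-wastefulness is impossible when $\numAgents \geq \numOrders \geq 2$ (the agent holding $c$ and a leaf still has positive cost after removing one order, while some agent is empty). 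This is a more careful treatment than the paper gives---it is precisely the content behind the ``whenever one exists'' clause of the statement---and your EF1 characterization and the accompanying constructions are correct.

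There is, however, one slip in your EF characterization for Case~2: ``EF in $\mathcal{I}$ holds iff $\numAgents \mid (\numOrders-1)$'' fails in the corner case $\numOrders = 1$, $\numAgents \geq 2$. There the divisibility condition holds vacuously ($\numAgents \mid 0$), yet no EF allocation exists, since whichever agent receives the single order $c$ has cost $1$ while the others have cost $0$; your construction also breaks down there, as there is no non-empty bundle to which $c$ can be appended. The fix is trivial (for $\numOrders = 1$, EF holds iff $\numAgents = 1$; equivalently, require in addition $\numOrders - 1 \geq \numAgents$ or $\numAgents = 1$), and note that the paper's formulation in terms of $|\leaves(G)\setminus\{h\}| \bmod \numAgents = 0$ avoids this issue because in the two-vertex star the center $c$ is itself a leaf. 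With this correction your argument is complete and, on the hub-at-a-leaf side, more explicit than the paper's.
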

\begin{proof}
	Recall the allocation created in the proof of \Cref{thm:MMS:P:unweightedStar}. For EF1, it is easy to see that for this allocation~$\alloc$, it holds that~$\cost(\alloc_i) \in \{\lfloor \numOrders/\numAgents \rfloor, \lceil \numOrders/\numAgents \rceil \}$. Therefore, the difference between the cost of each pair of two bundles is at most one, and consequently, such a partition is clearly EF1.
	
	Next, let the desired fairness notion be envy-freeness (EF). We claim that there exists an envy-free allocation if and only if~$|\leaves(G)\setminus \{h\}| \bmod \numAgents = 0$. The left-to-right direction is clear; we create an allocation~$\alloc$ as in the case of MMS. By the construction, the cost of each bundle is the same, so the allocation is obviously EF. In the opposite direction, let~$\alloc$ be an EF allocation. By \Cref{lem:MMS:ifHubIsLeafWeCanReduce}, we can suppose that~$h$ is the center of~$G$. Therefore, all orders are leaves. We have already shown that the cost of each bundle~$\alloc_i$ is equal to~$|\alloc_i|$. Since~$\alloc$ is EF, the costs are identical,  and all orders are leaves, we directly obtain that~$|L(G)\setminus\{h\}|\bmod \numAgents = 0$, finishing the proof.
\end{proof}

\section{Concluding Remarks}
Our work extends the fair delivery problem to settings with weighted edges, proposes non-wastefulness as an efficiency concept, and provides a comprehensive landscape on designing tractable algorithms. 
The fixed-parameter and polynomial-time algorithms for computing MMS and non-wasteful allocations may give insights on further strengthening the efficiency notions to PO or other desirable concepts. 
Moreover, going beyond tree structures requires dealing with cycles, walks, which require new ways of modeling cost. It is not clear how the standard fairness notions, e.g. MMS, can be defined in this setting. While our negative computational results carry over to general graphs, tractable algorithms may arise when restricting the parameter/structure of the problem.

\bibliographystyle{plainnat} 
\bibliography{references}

\end{document}